\title{Perpetual Exploration of a Ring in Presence of Byzantine Black Hole} 
\author{Pritam Goswami}{Sister Nivedita University, West Bengal, India}{pgoswami.cs@gmail.com}{[OrcidId:0000-0002-0546-3894]}{}
\author{Adri {Bhattacharya}}{Indian Institute of Technology Guwahati, Assam, India}{a.bhattacharya@iitg.ac.in}{[OrcidId:0000-0003-1517-8779]}{}
\author{Raja Das}{Jadavpur University, West Bengal, India}{rajad.math.rs@jadavpuruniversity.in}{[OrcidId:0009-0002-0161-5341]}{}
\author{Partha Sarathi Mandal}{Indian Institute of Technology Guwahati, Assam, India}{psm@iitg.ac.in}{0000-0002-8632-5767}{}
\titlerunning{Goswami et al.} 
\keywords{Mobile Agents, Exploration, Ring, Black Hole, Malicious host, Byzantine Fault} 
\begin{document}

\maketitle

\begin{abstract}
Perpetual exploration stands as a fundamental problem in the domain of distributed mobile agent algorithms, where the objective is to ensure that each node within a graph is visited by at least one agent infinitely often. While this issue has received significant attention, particularly concerning ring topologies, the presence of malicious nodes, referred to as black holes, adds more complexity. A black hole can destroy any incoming agent without leaving any trace of its existence.
 
In \cite{BampasImprovedPeriodicDataRetrieval,KralovivcPeriodicDataRetrievalFirst}, the authors have considered this problem in the context of \textit{ periodic data retrieval}. They introduced a variant of black hole called gray hole (where the adversary chooses whether to destroy an agent or let it pass) among others, and showed that 4 asynchronous and co-located agents are necessary and sufficient to solve the periodic data retrieval problem (hence perpetual exploration) in the presence of such a gray hole if each of the nodes of the ring has a whiteboard.
 
 This paper investigates the exploration of a ring by introducing a realistic variant of gray hole, termed as a ``Byzantine black hole''. In addition to the capabilities of a gray hole, the adversary can also choose whether to erase any previously stored information on that node. 

 Note that in \cite{BampasImprovedPeriodicDataRetrieval,KralovivcPeriodicDataRetrievalFirst}, the authors only considered one particular initial scenario (i.e., agents are initially co-located) and one specific communication model (i.e., whiteboard). Now, there can be many other initial scenarios where all agents might not be co-located (i.e., they may be scattered). Also, there are many weaker communications models such as \textit{Face-to-Face, Pebble} where this perpetual exploration problem is yet to be investigated, in presence of a Byzantine black hole. 
  
 The main results of our paper emphasize minimizing the number of agents while guaranteeing that they perform the perpetual exploration on a ring even in the presence of a Byzantine black hole under different communication models and for different starting scenarios. On the positive side, as a byproduct of our work, we achieved a better upper and lower bound result (i.e., 3 agents) for perpetual exploration in the presence of a Byzantine black hole (which is a more generalized version of a gray hole), by trading-off the scheduler capability, when the agents are initially co-located, and each node contains a whiteboard.
 \end{abstract}
\vspace{-0.45cm}
\section{Introduction}
\vspace{-0.1cm}
Exploring a set of nodes in a network is one of the fundamental tasks in the domain of distributed computing by mobile agents, formulated in the year of 1951 by Shannon \cite{Shannon-First-Exploration}. Now, the security of these mobile agents while exploring these networks is one of the important issues that need to be addressed. Among all the possible security threats that are addressed yet in literature, two among them are the most prominent, i.e., the threats from a \textit{malicious agent} \cite{IntrusionDetection} and the threats from a \textit{malicious host} \cite{DobrevAnonymousRingBHS}. In this paper, we are interested in the latter case, where the threats are from a malicious host. This host is a stationary node in the network, which has the ability to destroy any incoming agents without leaving any trace of its existence. So, the first task of the mobile agents operating in the network must be to locate this malicious node. Note that the most trivial optimisation parameter to ensure while locating this malicious host (also termed as a black hole), is that a minimum number of agents gets destroyed by this node. This problem of locating the black hole by mobile agents is termed as \textit{black hole search} (also termed as BHS problem) problem. BHS problem is studied from the year 2006, when Dobrev et al. \cite{DobrevSeminalPaperBHS} first introduced it. After which, till date, there have been many variations to this problem, some of them are \cite{CzyzowiczComplexityBHS,CzyzowiczBHSSyncTree,DobrevDangerousGraphTokens,DobrevAnonymousRingBHS,DobrevBHSUn-orientedRingScattered}. This problem has various real life implications, such as the black hole can be a virus in the network or it can be some crash failure, such that this node resembles the characteristic of a black hole, after the failure.

Observe that, to detect the black hole, there needs to be some agent that has to visit that particular node. Further, since any agent visiting the node gets destroyed, there must be some communication tool, that can render this information to other alive agents, such that at least one agent remains alive, knowing the location of the black hole. Three such communication tools have been predominantly used in literature: a) \textit{whiteboard} model \cite{DobrevBHSAnonymousRingWhiteboard}, in which there is a storage capacity at each node, which an agent can use to leave a message by reading its contents and writing some new information, b) \textit{pebble} model \cite{FlocchiniPingPongBHSPebble}, an agent can carry a movable token from one node to another, c) \textit{face-to-face} model \cite{DiLunaBHSDynamicRing}, in this case, an agent can share and communicate with another agent when they are at the same node and at the same time. In addition to the communication tools, the initial locations of the agents (i.e., whether the agents are initially scattered \cite{DobrevBHSUn-orientedRingScattered} or they are co-located \cite{DiLunaBHSDynamicRing}) is also one of the important parameters, generally studied in the literature. 

Further, the most studied version of a black hole has a fairly basic nature, i.e., only destroying any incoming agent. Note that, in reality black holes may not be so simple; they may have many ways to disrupt the movement or harm an agent. Considering this phenomenon, we in this paper have tried to consider a black hole that has more capabilities other than just destroying any incoming agent. In our case a black hole may or may not kill any incoming agent; it may do so based on an adversary, which decides when to destroy an incoming agent and when not to. Whenever it decides not to destroy an agent, it simply behaves like any other node in the network, disguising it from the rest of the nodes, and creating no anomaly for the visiting agent. In addition to this, we have also considered that the black hole has further capabilities; it can also choose whether to destroy the message (i.e., stored data in case of a whiteboard, and place the token in case of pebble) at that node along with the incoming agent. This choice is also maintained by an adversary as well. We call this kind of black hole a \textit{Byzantine black hole}.

Our aim in this paper is to solve the problem of perpetual exploration in a network, i.e., visiting every node in the network infinitely often other than the Byzantine black hole, by the mobile agents. Previously, in \cite{BampasImprovedPeriodicDataRetrieval,KralovivcPeriodicDataRetrievalFirst} the authors introduced a set of models for a black hole, which has more capabilities other than just destroying an agent, they term these malicious nodes as \textit{gray} hole, \textit{gray$^{+}$} hole and \textit{red} hole, respectively based on their capabilities. They considered the following characteristics: they can fake agents, change the whiteboard contents, change the ports different from the requested ones, or can change the FIFO ordering as well.  In this context, they solved perpetual exploration in a ring (which they term as \textit{periodic data retrieval} problem) by a team of asynchronous mobile agents, under these aforementioned various black hole characteristics, in which the agents are initially co-located and each node in a network has a whiteboard. On the other hand, the results of above-mentioned papers only holds for the case, when initially the agents are co-located and each node of the ring has a whiteboard. Note that there can be many other initial positions for the agent to start with, and also, whiteboard is a very powerful communication tool in this domain of study of mobile agents. So, in this paper, we investigate these gaps in the context of perpetual exploration, considering the presence of one Byzantine black hole using synchronous agents. Also note that the position of a Byzantine black hole can be arbitrary (except the starting locations of the agent) but fixed. The Byzantine black hole we considered is a generalized version of the gray hole, as it also can choose whether to erase any previous data stored at that node, the moment it acts as a black hole. 
\vspace{-0.35cm}
\subsection{Related Works}
\vspace{-0.1cm}
The black hole search (i.e., BHS) problem is a prominent variation of exploration problem studied in the literature. A survey of which can be found in \cite{PengBHSSurvey}. This problem is investigated under various topologies (such as trees \cite{CzyzowiczBHSSyncTree}, rings \cite{DobrevAnonymousRingBHS}, tori \cite{ChalopinBHSScatteredTorus} and in arbitrary and unknown networks \cite{CzyzowiczComplexityBHS,DobrevDangerousGraphTokens}). All these discussed networks are static in nature. Recently, there has been a lot of interest in dynamic networks. The following papers \cite{AdriBHSDynCactus,AdriBHSDynTori,DiLunaBHSDynamicRing}, studied the BHS problem on dynamic ring, dynamic torus and dynamic cactus graph, where the underlying condition is that, irrespective of how many edges are dynamic in nature, the network must remain connected at any time interval (which is also termed as \textit{1-interval connected}). In rings, the BHS problem has been studied for different variants; the most predominant among them are choice of schedulers (i.e., synchronous \cite{ChalopinBHSSyncScattRing} and asynchronous \cite{BalamohanBHSAsyncRing}), communication tools (i.e., face-to-face \cite{CzyzowiczComplexityBHS}, pebble \cite{FlocchiniPingPongBHSPebble} and whiteboard \cite{BalamohanBHSAsyncRing}) and initial position of the agents (i.e., co-located \cite{BalamohanBHSAsyncRing} and scattered \cite{ChalopinBHSSyncScattRing}).

The most relevant papers related to our work are the papers by Královič et al. \cite{KralovivcPeriodicDataRetrievalFirst} and by Bampas et al. \cite{BampasImprovedPeriodicDataRetrieval}.  The paper by Královič et al. \cite{KralovivcPeriodicDataRetrievalFirst} is the first to introduce a variant of this black hole, where the black hole has the ability to either choose to destroy an agent or let it pass (which they term as \textit{gray hole}). Further they extended the notion of gray hole, where the gray hole has the following additional capabilities: it has the ability to alter the run time environment (i.e., changing the whiteboard information), or it has the ability to not to maintain communication protocol (i.e., do not maintain FIFO order). They solved this problem under an asynchronous scheduler on a ring, only when the agents are initially co-located and each node in the network has a whiteboard. The following results are obtained by them, they gave an upper bound of 9 agents for performing periodic data retrieval (i.e., which is equivalent to perpetual exploration) in the presence of a gray hole; further, in addition to gray hole, when the whiteboard is unreliable as well, they proposed an upper bound of 27 agents. Next, Bampas et al. \cite{BampasImprovedPeriodicDataRetrieval} significantly improved the earlier results. They showed a non-trivial lower bound of 4 agents and 5 agents for gray hole case and for gray hole with unreliable whiteboard case, respectively. Further, with 4 agents as well, they obtained an optimal result for the gray hole case, whereas with 7 agents proposed a protocol for the case with gray hole and unreliable whiteboard. As far as we are aware, we are the first to investigate the perpetual exploration problem of a ring under different communication tools (i.e., face-to-face, pebble and whiteboard) as well as for different initial positions (i.e., co-located and scattered), for a variant of gray hole, 
where it can erase any previously stored information but can not alter it. We term this type of gray hole a Byzantine black hole. In the following part, we discuss the results we have obtained.\\
\noindent\textbf{Our Contribution:} In this paper, we investigate the perpetual exploration problem, by a team of synchronous mobile agents, of a ring $R$ of size $n$, in the presence of a \textit{Byzantine black hole}. First, we consider the case when the agents are initially co-located. We obtain the following results.\\
\noindent\textbf{A:} For \textit{Pebble} model of communication, we obtain that 3 agents are necessary and sufficient to perpetually explore $R$.\\
\noindent\textbf{B:} For \textit{Face-to-Face} model of communication, we obtain that 5 agents are sufficient to perpetually explore $R$.\\
\noindent\textbf{C:} For \textit{Whiteboard} model as well, we achieve the same lower and upper bounds as mentioned in \textbf{A}. This result shows that, by considering the scheduler to be synchronous instead of asynchronous (as assumed in \cite{BampasImprovedPeriodicDataRetrieval}), the tight bound on the number of agents to perpetually explore $R$, reduces from 4 to 3.\\  
Next, we consider the case, when the agents are initially scattered, and in this context, we obtain the following results:\\
\noindent\textbf{D:} For \textit{Pebble} model of communication, we show that 4 agents are necessary and sufficient to explore $R$ perpetually .\\
\noindent\textbf{E:} For \textit{Whiteboard} model of communication, we obtain an improved bound of 3 agents (in comparison to \textbf{D}), which is necessary and sufficient to explore the ring $R$ perpetually.\\
In the following Table \ref{tab:cross_table}, we have summarized the results.



\begin{table}[h]
\centering
\begin{tabular}{|c|c|c|c|c|}
\hline
& &  \textbf{Whiteboard} & \textbf{Pebble} & \textbf{Face-to-Face} \\
\hline
\multirow{2}{*}{\textbf{Co-located}}& Upper Bound & 3  & 3 & 5  \\
\cline{2-5}
 & Lower Bound & 3 & 3 & 3\\
\hline
\multirow{2}{*}{\textbf{Scattered}}& Upper Bound & 3  & 4 & --- \\
\cline{2-5}
 & Lower Bound & 3 & 4 & Non-Constant \cite{DiLunaBHSDynScattRing}\\
\hline
\end{tabular}
 \caption{Summary of our results}
\label{tab:cross_table}
\end{table}
\noindent\textbf{Organisation:} Rest of the paper is organised as follows. Section \ref{model} presents model and preliminaries. In Section \ref{section: impossibility results}, we discuss some impossibility results. In Section \ref{section: co-located agents} and \ref{section: scattered agents}, we propose perpetual exploration algorithms for the agents, when the agents are initially co-located and scattered, respectively. Lastly, we conclude in Section \ref{section: conclusion}. Note that due to page restriction, detailed description of the algorithms and proofs of theorems and lemmas are referred to the \textbf{Appendix}. 
\section{Model and Preliminaries}\label{model}
\vspace{-0.1cm}
In this paper, we consider the underlying topology of the network as an oriented ring $R = \{v_0, v_1, \dots, v_{n-1}\}$. Each node $v_i$ (where $i \in \{0,1,\dots,n-1\}$) is unlabeled and has two ports connecting $v_{(i-1)\bmod{n}}$ and $v_{(i+1)\bmod{n}}$, labeled consistently as \textit{left} and \textit{right}. A set $A = \{a_0, a_1, \dots, a_{k-1}\}$ of $k$ agents operates in $R$. We consider two types of initial positions for the set $A$ of agents. In the first type, each agent in $A$ is \textit{co-located} at a node, which we term as their \textit{home}. In the second type, the agents can start from several distinct nodes, which we term as \textit{scattered} initial positions. Each agent has knowledge of the underlying topology $R$ and possesses some computational capabilities, thus requiring $\mathcal{O}(\log n)$ bits of internal memory. The agents have unique IDs of size $\mathcal{O}(\log k )$ bits taken from the set $[0,k^c]$ ($c$ is a constant), which are perceived by other agents when they are co-located. The agents are autonomous and execute the same set of rules (i.e., they execute the same algorithm).

We consider three types of communication that the agents have in order to communicate with other agents. \textit{Face-to-Face} (F2F): In this model, an agent can communicate with another agent when they are co-located.
    \textit{Pebble}: In this model, the agents are equipped with a movable token (also termed as ``pebble''), which signifies a single bit of information. The agents can carry a pebble from one node in a fairly mutually exclusive way and also can drop it on any other node. Agents use this pebble to mark some special nodes, for other agents to distinguish it.
    \textit{Whiteboard}: In this case, each node of $R$ contains $\mathcal{O}(\log n)$ bits of memory, which can be used to store and maintain information. Any agent can read the existing information or write any new information on the whiteboard of its current node. Note that fair mutual exclusion is maintained, i.e., concurrent access to the whiteboard data is not permitted.

The agents operate in synchronous rounds, and in each round, every agent becomes active and takes a local snapshot of its surroundings. For an agent at a node $v$ in some round $r$, the snapshot contains two ports incident to $v$, already stored data on the memory of $v$ (if any, only in the case of the whiteboard model of communication), the number of pebbles located at $v$ (if any, only in the case of the pebble model of communication), contents from its local memory, and IDs of other agents on $v$. Based on this snapshot, an agent executes some action. This action includes a \textit{communication} step and a \textit{move} step. In a communication step, an agent can communicate implicitly or explicitly with other agents according to the communication models discussed above. In the move step, the agent can move to a neighbouring node by following a port incident to $v$. Thus, if an agent at node $v$ in round $r$ decides to move during the move step, in round $r+1$, it resides on a neighbour node of $v$. All these actions are atomic, so an agent cannot distinguish another agent concurrently passing through the same edge; instead, it can only interact with another agent (based on its communication model) when it reaches another node.

A \textit{black hole} is a stationary malicious node in an underlying graph, which has the ability to destroy any visiting agent without leaving any trace of its existence. Furthermore, the black hole nature of the node is controlled by an adversary. In addition to this, whenever the black hole nature is activated, the adversary can also choose to destroy any information stored at that node. We term this kind of node a \textit{Byzantine Black Hole}.

In this paper, we assume that the underlying graph contains a single Byzantine black hole, while the other nodes are normal nodes, termed as \textit{safe nodes}. It is assumed that the starting positions of each agent must be a safe node. This Byzantine black hole node is unknown to the agent. Here, we assume that if the adversary decides to activate the black hole nature of the Byzantine black hole node, it does so at the beginning of its corresponding round, and the node retains that nature until the end of this current round. Furthermore, we have considered that our Byzantine black hole has the ability always to destroy any incoming agent during its black hole nature and also choose to destroy any information present on that node. This paper aims to perpetually explore the ring $R$ with the minimum number of agents. Next, we formally define our problem:

\begin{definition}[\textsc{PerpExploration-BBH}]
Given a ring network $R$ with $n$ nodes, where one node ($v_b$) is a Byzantine black hole, and with a set of agents $A$ positioned on $R$, the \textsc{PerpExploration-BBH}, asks the agents in $A$ to move in such a way that each node of $R$, except $v_b$, is visited by at least one agent infinitely often.
\end{definition}
\vspace{-0.45cm}
\section{Impossibility Results}\label{section: impossibility results}
\vspace{-0.1cm}
Here, at beginning, we state the first impossibility result which gives us a lower bound on minimum number of agents required to solve \textsc{PerpExploration-BBH}.
\begin{theorem}
\label{impossible:whitebrd2}
    A set of two synchronous agents in a ring $R$ of size $n$ cannot solve \textsc{PerpExploration-BBH}, even in the presence of a whiteboard if number of possible consecutive black hole positions is at least 3.
\end{theorem}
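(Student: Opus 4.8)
The plan is to argue by an adversarial \emph{coupling} (indistinguishability) argument. Fix any deterministic algorithm for two agents (randomization can be handled by fixing the coin tosses). The adversary's basic lever is that it need not commit to the location of $v_b$ in advance: as long as it keeps the designated node in \emph{normal} mode, the agents' snapshots -- and hence, since the agents act only on their snapshots, their actions -- are \emph{identical} across every hypothesis $v_b=c$ that is still consistent with the observed history. The adversary activates the black-hole nature only at a single, carefully chosen round, and may pick, among the candidates still consistent with the run so far, whichever choice forces failure. A key preliminary observation (used throughout) is that in the byzantine model surviving a traversal \emph{never certifies} a node as safe, and whiteboard writes give no extra power, because the node may erase its own whiteboard while active; thus the only hard information an agent can ever obtain about $v_b$ is another agent's \emph{disappearance}.

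The core of the proof is a single-agent lemma: a lone agent that must avoid a black hole known only to lie in two \emph{consecutive} candidates $\{c_i,c_{i+1}\}$, with observation histories identical across the two corresponding worlds, cannot solve \textsc{PerpExploration-BBH}. To see this, note that perpetual exploration forces the agent to enter $c_i$ infinitely often (this node is safe and must be visited when $v_b=c_{i+1}$) and to enter $c_{i+1}$ infinitely often (required when $v_b=c_i$). Along the common all-normal trajectory, let $r^\ast$ be the first round at which the agent occupies $c_i$ or $c_{i+1}$; the adversary commits to the matching world ($v_b=c_i$ if the agent is at $c_i$, and $v_b=c_{i+1}$ otherwise) and activates the black hole exactly at $r^\ast$. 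The last agent is destroyed, no agent survives, and some safe node is never visited again, contradicting correctness.

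It then remains to reduce the two-agent case with three consecutive candidates $\{c_1,c_2,c_3\}$ to this lemma by manufacturing a \emph{first kill} that preserves two-candidate ambiguity. Writing $L$ and $Rr$ for the safe neighbours immediately to the left of $c_1$ and to the right of $c_3$, the essential use of ``three consecutive positions'' is this: the interior candidate $c_2$ has \emph{no safe adjacent vantage point}, since both of its neighbours $c_1,c_3$ are themselves candidates, and the only nodes the partner can occupy without risk are $L$ and $Rr$, each at distance at least $2$ from $c_2$. Coverage forces the agents, on the all-normal trajectory, to push a probe into the region while the partner waits. Applying the same ``first fatal interior step along the common trajectory'' selection as in the lemma -- now to the two-agent trajectory and the three worlds $v_b\in\{c_1,c_2,c_3\}$ -- the adversary destroys the probe on an interior step and chooses the world so that the survivor's \emph{entire} history remains consistent with the black hole being either of two consecutive candidates (it cannot tell whether the probe perished at $c_1$ or $c_2$, respectively $c_2$ or $c_3$). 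This places us exactly in the hypothesis of the single-agent lemma.

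The step I expect to be the main obstacle is precisely this last reduction: showing that \emph{every} coverage-achieving strategy is forced into such a probe-and-ambiguous-kill configuration, rather than cleverly hedging so that a disappearance always pins $v_b$ to a unique node while the second agent remains. I would discharge it by combining the ``first fatal step'' selection with the distance-$2$ vantage argument above to guarantee that at the chosen round the surviving agent's safe-knowledge cannot separate two adjacent candidates; the erasability of whiteboards is what prevents the probe from having certified an intermediate node and thereby narrowing the candidate set. Composing this with the single-agent lemma exhibits, for any algorithm, a world in which perpetual exploration fails, proving that two synchronous agents cannot solve \textsc{PerpExploration-BBH} whenever at least three consecutive positions are possible, even with whiteboards.
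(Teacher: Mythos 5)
Your overall architecture is the same as the paper's: manufacture a first kill that leaves the surviving agent unable to distinguish two adjacent candidates, then invoke a one-agent, two-consecutive-candidates impossibility. Your single-agent lemma and its proof (take the first entry into $\{c_i,c_{i+1}\}$ on the common all-normal trajectory and let the adversary commit to the matching world) are sound and correspond to what the paper uses implicitly at the end of its argument.

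The gap is in the two-agent reduction, which is exactly the part you flag as the obstacle, and your sketch does not close it. Two concrete problems. First, ``coverage forces the agents to push a probe into the region while the partner waits'' is not a property of every algorithm: both agents may enter the candidate region, possibly simultaneously from opposite ends, so you cannot assume a probe-and-wait structure. The paper instead fixes an event that must occur on the common trajectory --- the first round $t$ at which some agent enters the middle candidate $c_2$ --- and uses only that this agent was at an adjacent node at round $t-1$, hence at $c_1$ or $c_3$ (say $c_1$); killing it at $c_1$ in round $t-1$ or at $c_2$ in round $t$ then yields two worlds the survivor must separate. Second, your distance-$2$ vantage observation does not dispose of the cases where the second agent itself stands on $c_1$ or $c_2$ at rounds $t-1$ or $t$: an algorithm is free to place it there, and then the survivor could in principle witness the disappearance or be positioned to read a differing whiteboard. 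The paper closes these cases explicitly (e.g., if the partner sits on $c_2$ at round $t-1$, the adversary activates $c_2$ at rounds $t-1$ and $t$ and destroys both agents; if the partner is on $c_1$ at round $t$, a separate argument about its position at $t-1$ is needed). Without that case analysis the claim that the survivor's entire history is consistent with two adjacent candidates is unproved, and the reduction to your single-agent lemma does not go through. Your point about erasable whiteboards is correct and necessary, but it only neutralizes the information channel, not the positional one.
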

The main idea of the proof of the above Theorem~\ref{impossible:whitebrd2} is that we create a scenario using adversarial techniques, where after one agent is destroyed by the Byzantine black hole $v_b$ (say), the other and only alive agent ends up with at least two possible choices for the position of the black hole, which in turn creates confusion for the agent. Hence it is unable to correctly detect the black hole node among these two sets of nodes. In this scenario it is impossible for the only alive agent to successfully explore the whole ring except $v_b$ perpetually, either without correctly detecting the Byzantine black hole position, or without getting destroyed as well. As a direct implication of Theorem~\ref{impossible:whitebrd2}, we can have the following corollaries. (Please see \textbf{Appendix} in Section \ref{Appendix:thm 3} for the detailed proof)

\begin{corollary}
\label{cor: necessity 3 agnets whitbrd}
    A set of three synchronous agents are required to solve the \textsc{PerpExploration-BBH} on a ring $R$ with $n$ nodes, where each node of $R$ has a whiteboard.
\end{corollary}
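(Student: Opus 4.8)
The plan is to derive this statement directly from the impossibility result just established, since the claim that three agents are \emph{required} is precisely a lower bound assertion: no team of fewer than three synchronous agents can solve \textsc{PerpExploration-BBH} on a ring equipped with whiteboards. A single agent trivially cannot guarantee perpetual exploration in the presence of a byzantine black hole, because that lone agent may itself be consumed, leaving no one to continue the exploration. Hence the only nontrivial case that must be ruled out is a team of exactly two agents.

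First I would invoke the preceding corollary, which states that two synchronous agents cannot solve \textsc{PerpExploration-BBH} on a ring with $n$ nodes even when every node carries a whiteboard. That corollary is itself obtained from Theorem~\ref{impossible:whitebrd2} together with the observation that $n$ may be taken large enough that at least three consecutive candidate positions for the black hole exist, so the hypothesis ``number of possible consecutive black hole positions is at least $3$'' is automatically satisfied. Consequently, two agents are provably insufficient in this model.

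Combining these observations, any algorithm solving the problem must employ strictly more than two agents, i.e., at least three, which is exactly the assertion of the corollary. I do not anticipate any genuine obstacle, as the statement is an immediate consequence of the stronger impossibility theorem. The only subtlety worth recording is that the whiteboard model is the \emph{strongest} among the communication models considered here, so the lower bound of three established in this setting transfers verbatim to the weaker pebble and face-to-face models; this is what makes the corollary the natural common lower bound underlying results \textbf{B} and \textbf{C} of the contribution.
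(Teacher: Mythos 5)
Your proposal is correct and follows exactly the paper's (implicit) route: the corollary is stated as an immediate consequence of the preceding corollary that two synchronous agents cannot solve \textsc{PerpExploration-BBH} even with whiteboards, which in turn follows from Theorem~\ref{impossible:whitebrd2} by taking $n$ large enough that at least three consecutive candidate black hole positions exist. Your additional observations (the trivial one-agent case and the transfer of the lower bound to weaker communication models) are accurate and consistent with how the paper uses this bound.
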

Note that this lower bound of 3 agents is also true for agents with the pebble model of communication, as the pebble model of communication can be easily simulated in the whiteboard communication model.
\begin{corollary}
\label{impossible: pebble2}
A set of two agents, each equipped with $O(\log n)$ pebbles, can not solve the \textsc{PerpExploration-BBH} problem on a ring $R$ with $n$ nodes.
\end{corollary}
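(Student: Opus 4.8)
The plan is to prove this not by redoing the indistinguishability argument from scratch, but by reducing it to the whiteboard impossibility already in hand. The first thing I would record is that the corollary following \Cref{impossible:whitebrd2} (``two synchronous agents cannot solve \textsc{PerpExploration-BBH} even with whiteboards'') is established by an argument whose validity does not depend on the \emph{capacity} of the whiteboards: the two executions on the copies $R_1$ and $R_2$ stay indistinguishable to the surviving agent no matter how many bits each node can store, because the contents at every node other than $v_1,v_2$ evolve identically under the same execution. Hence the whiteboard impossibility holds for whiteboards of arbitrary (even unbounded) size, and in particular for whiteboards large enough to hold a pebble counter.

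Next I would exhibit a simulation of any $2$-agent pebble protocol $\mathcal{A}_p$ by a $2$-agent whiteboard protocol $\mathcal{A}_w$ on the same ring, formalizing the intuition that a whiteboard is at least as powerful as the pebble tool. In $\mathcal{A}_w$, each agent keeps in its own $\mathcal{O}(\log n)$-bit internal memory a count of the pebbles it currently carries, while the whiteboard of each node stores the number of pebbles that $\mathcal{A}_p$ would have deposited there (a count bounded by the $\mathcal{O}(\log n)$ pebbles available, hence of negligible width, and in any case accommodated by the capacity-independence noted above). A ``drop a pebble'' action of $\mathcal{A}_p$ is simulated by decrementing the carried count and incrementing the node's counter; a ``pick up'' is symmetric; and a snapshot reading the number of pebbles on the current node is simulated by reading that counter. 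Since every local view on which $\mathcal{A}_p$ conditions its moves is reproduced exactly by the corresponding whiteboard view, the round-by-round trajectories of the two agents under $\mathcal{A}_w$ coincide with those under $\mathcal{A}_p$.

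The point I would then stress is that the simulation faithfully transfers the power of the \emph{byzantine} black hole. When the adversary activates the black hole and elects to erase the pebbles lying on that node together with destroying the incoming agent, the simulating adversary in the whiteboard instance erases the corresponding counter of that node; leaving the pebbles intact corresponds to leaving the whiteboard content intact. Thus every adversarial schedule against $\mathcal{A}_p$ is matched by one against $\mathcal{A}_w$ yielding the same sequence of node visits and the same set of surviving agents. Consequently, were $\mathcal{A}_p$ to let two pebble-agents solve \textsc{PerpExploration-BBH}, then $\mathcal{A}_w$ would let two whiteboard-agents solve it, contradicting the corollary of \Cref{impossible:whitebrd2}.

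The step I expect to be the main obstacle is making this correspondence airtight rather than the rest of the argument, which is routine. One must keep separate the pebbles an agent is \emph{carrying} (which perish with the agent inside the black hole, mirrored by the loss of its internal memory) from the pebbles \emph{deposited} on a node (which the adversary may erase independently, mirrored by erasing that node's whiteboard counter), so that the whiteboard adversary is granted exactly, and not more than, the erasing power of the pebble adversary. Once this bookkeeping is verified, the conclusion is immediate.
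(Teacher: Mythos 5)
Your proposal is correct and matches the paper's intent: the paper offers no separate proof of this corollary, treating it as an immediate consequence of the two‑agent whiteboard impossibility via exactly the ``a whiteboard can simulate pebbles'' reduction you spell out (the same simulation the paper invokes explicitly, in the reverse direction, in Remark~\ref{remark: PerpExplore-Coloc-Pbl on whiteboard}). One tiny phrasing point: for the contradiction you need every adversary schedule against $\mathcal{A}_w$ to be matched by one against $\mathcal{A}_p$ (not only the converse, as you wrote), but since the adversary's per‑round choices (activate or not, erase or not) are identical in both models the correspondence is a bijection and your argument goes through.
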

 Our next result further improves the lower bound for number of agents, when agents are scattered and have pebble model of communication.
\begin{theorem}
\label{impo:pblscat}
    A set of 3 scattered agents, each equipped with a pebble, can not solve the \textsc{PerpExploration-BBH} problem on a ring $R$ with $n$ nodes.
\end{theorem}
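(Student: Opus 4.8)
The plan is to argue by contradiction through an indistinguishability argument that reduces the three-agent scattered instance to the two-agent impossibility already established in Corollary~\ref{impossible: pebble2}. Suppose some algorithm $\mathcal{A}$ lets three scattered agents $a_1,a_2,a_3$, each carrying a single pebble, solve \textsc{PerpExploration-BBH}. Since $n$ can be taken arbitrarily large, I would first fix a block of consecutive candidate black hole positions and place the three starting nodes far apart, so that there is a long arc of unexplored candidate nodes that only one agent can reach ``first''. The intuition I want to exploit is exactly the gap between the co-located case (where three agents coordinate from round $0$, giving the positive result \textbf{B}) and the scattered case: with scattered starts the adversary can isolate and sacrifice one agent before it ever shares information, effectively leaving two agents to face a problem that Corollary~\ref{impossible: pebble2} forbids.

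Next I would build two copies $R_1,R_2$ of the ring that differ only in the black hole location---say the black hole sits at $v_1$ in $R_1$ and at the adjacent node $v_2$ in $R_2$---and run $\mathcal{A}$ with identical inputs in both worlds. Because $\mathcal{A}$ is deterministic and the two worlds are indistinguishable to every agent until some agent first steps onto $\{v_1,v_2\}$, the same agent, say $a_3$, enters the region at the same round in both executions. Mirroring the technique of Theorem~\ref{impossible:whitebrd2}, the adversary destroys $a_3$ at $v_1$ (round $t-1$) in $R_1$ and at $v_2$ (round $t$) in $R_2$, and uses the byzantine black hole's erasure power to wipe $a_3$'s pebble together with any pebble $a_3$ may have dropped at the black hole node. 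The goal of this step is to make the death ``trace-free'' so that neither $a_1$ nor $a_2$ can later recover $a_3$'s last observations.

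The heart of the proof is showing that $a_1$ and $a_2$ cannot distinguish $R_1$ from $R_2$. I would arrange the scattered starting positions and the block length so that, during the critical rounds $t-1$ and $t$, both surviving agents are at least two hops away from $v_1$ and $v_2$; then they neither witness the destruction nor sample a pebble configuration that differs between the two worlds, and the single one-round timing discrepancy between the two deaths stays invisible to them. After the death, the two survivors run the continuation of $\mathcal{A}$ on a ring in which at least three consecutive candidate positions remain unresolved, which is precisely a two-agent instance of \textsc{PerpExploration-BBH} with pebbles. By Corollary~\ref{impossible: pebble2} (equivalently, by the at-least-three-consecutive-positions hypothesis of Theorem~\ref{impossible:whitebrd2}) no such continuation can perpetually explore, contradicting the assumption that $\mathcal{A}$ succeeds.

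I expect the main obstacle to lie in the indistinguishability bookkeeping for the pebble model rather than in the high-level reduction. Concretely, I must guarantee three things simultaneously: that the one-round offset between $a_3$'s death in $R_1$ and in $R_2$ never leaks to a surviving agent (which forces careful control over where $a_1,a_2$ can be and over any pebble $a_3$ writes in its final round at the \emph{safe} node $v_1$ in $R_2$); that erasing pebbles at the black hole is enough to equalize the pebble configurations seen thereafter in both worlds; and that the candidate block is chosen long enough that, even accounting for whatever $a_1,a_2$ explore before and during rounds $t-1,t$, at least three consecutive candidate positions survive so that the two-agent impossibility genuinely applies. Handling the case analysis of where the two survivors are located---analogous to the $u_0$-versus-$v_2$ split in the proof of Theorem~\ref{impossible:whitebrd2}, but now duplicated across two agents---will be the most technical part.
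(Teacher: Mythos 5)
Your high-level strategy --- isolate one agent via far-apart starting positions, sacrifice it trace-free, and reduce the survivors to the two-agent impossibility --- is exactly the paper's strategy. However, two steps in your execution have genuine gaps that the paper's proof closes differently. First, your two-world construction ($v_1$ versus the adjacent $v_2$) only establishes that \emph{two} consecutive positions remain indistinguishable to the survivors, whereas the two-agent impossibility of Theorem~\ref{impossible:whitebrd2} requires at least \emph{three} consecutive candidate positions; with only two candidates left, two agents are not covered by any impossibility result in the paper. You assert at the end that ``at least three consecutive candidate positions survive,'' but your construction does not deliver this. The paper instead takes the first agent (say $a_1$) to reach the \emph{third} node from its start, sets up three scenarios $C_1,C_2,C_3$ with the black hole at each of the three nodes $v_1,v_2,v_3$, and kills $a_1$ on its first arrival at whichever node is the black hole; since the global configuration at round $t$ is identical in all three scenarios, three consecutive candidates survive and Theorem~\ref{impossible:whitebrd2} applies.

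Second, your mechanism for making the death trace-free is insufficient. The byzantine black hole's erasure power acts only at the black hole node itself, so it cannot equalize a pebble that $a_3$ drops at $v_1$ in $R_2$ (where $v_1$ is \emph{safe}) during round $t-1$ before dying at $v_2$: that pebble persists in $R_2$ but not in $R_1$, and since perpetual exploration in $R_2$ forces the survivors to visit $v_1$ infinitely often, they would eventually observe the discrepancy. You flag this as an obstacle but do not resolve it. The paper resolves it by first proving that the sacrificed agent cannot be carrying its pebble at all when it enters the candidate region --- otherwise the pebble dies with it and the instance degenerates to two agents with one pebble each, contradicting Corollary~\ref{impossible: pebble2} --- so its death genuinely leaves the pebble configuration unchanged across all scenarios (and, as a bonus, leaves the two survivors with three pebbles total, still covered by the whiteboard lower bound). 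Adding the ``no pebble carried'' claim and switching from two worlds to three would bring your argument in line with the paper's.
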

We have proved Theorem~\ref{impo:pblscat} using an adversarial technique. We created an instance of three copies of the same ring $R$ (i.e., $R_1$, $R_2$ and $R_3$) in such a way that the position of Byzantine black hole ($v_b)$ are different in each of them but they are consecutive in $R$. Also, in each case, the agents are initially placed at the nodes of $R$, which are at a distance of equal length between each other. Now, when an agent is destroyed by $v_b$, the other two agents can not distinguish between these three copies (due to same information on nodes for each of them). So, the problem reduces to two agents with 3 pebbles solving \textsc{PerpExploration-BBH} where a number of possible consecutive black hole positions is at least three. Then by Corollary~\ref{impossible: pebble2}, we can have the desired result (For detailed proof, see \textbf{Appendix} in Section \ref{Appendix: thm 6}). Next, corollary is a direct consequence of Theorem \ref{impo:pblscat}. 
\begin{corollary}
\label{cor:necessity of 4 agents scat pbl}
    A set of four scattered agents, each with a pebble, are required to solve \textsc{PerpExploration-BBH} on a ring $R$ with $n$ nodes.
\end{corollary}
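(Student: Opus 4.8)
The plan is to read the corollary as a lower-bound statement: saying that four agents are \emph{required} is the same as asserting that no team of at most three scattered, pebble-carrying agents can solve \textsc{PerpExploration-BBH} on a ring with $n$ nodes. Hence the entire argument reduces to ruling out team sizes $1$, $2$, and $3$, and the work splits into exactly these three cases.

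The case of three agents is precisely the content of the theorem just established, which shows that three scattered agents, each carrying a single pebble, cannot solve the problem when $n$ is large enough that at least three consecutive nodes are candidate black-hole positions; so this case needs nothing new. For two agents I would invoke Corollary~\ref{impossible: pebble2}, which rules out two pebble-equipped agents regardless of their initial placement (it is itself obtained from the whiteboard impossibility of Theorem~\ref{impossible:whitebrd2}, and a pebble is a strictly weaker communication resource than a whiteboard); in particular it covers two scattered agents with one pebble each. The single-agent case is immediate: a lone agent does not know the black hole's location, so the adversary activates the black hole the first time the agent steps onto a fixed candidate node; the agent is consumed and no survivor remains to continue, yet perpetual exploration demands that every safe node be visited infinitely often, a contradiction. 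Alternatively, both smaller cases follow from the three-agent impossibility by monotonicity, since adding idle agents can only help: if $k$ agents could solve the problem then so could $k+1$ (the extra agent simply remains at its safe starting node), so unsolvability propagates downward from three to two and one.

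Assembling the three cases, every team of at most three scattered agents, each with a single pebble, fails, and therefore any algorithm solving \textsc{PerpExploration-BBH} in this model must employ at least four such agents, which is the assertion of the corollary. The only point that needs genuine care --- and the closest thing here to an obstacle --- is checking that the earlier impossibilities transfer \emph{downward} to the present, weaker setting: the scattered, one-pebble-per-agent model is dominated by the models in which Theorem~\ref{impossible:whitebrd2} and Corollary~\ref{impossible: pebble2} were proved (whiteboard storage dominates a pebble, and a co-located start can only help a solver over a scattered one), so each cited result applies without loss. Once this domination is noted, the corollary is immediate.
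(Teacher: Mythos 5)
Your proposal is correct and matches the paper's (implicit) reasoning: the paper states this corollary without proof, intending exactly the assembly you describe — the preceding theorem rules out three scattered pebble-equipped agents, Corollary~\ref{impossible: pebble2} rules out two, and one agent is trivially insufficient, so at least four are required. The only cosmetic caveat is that the three-agent impossibility (and hence the corollary) implicitly assumes $n$ is large enough for the adversary's placement argument, a qualification the paper itself also omits.
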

\vspace{-0.45cm}

\vspace{-0.2cm}
\section{Perpetual Exploration with Co-located Agents}\label{section: co-located agents}
\vspace{-0.1cm}
In this section we assume that initially agents are co-located at a node which is termed as \textit{home}. On the basis of this assumption here we investigate the sufficiency for the number of agents to solve \textsc{PerpExploration-BBH} problem under different model of communication.
\vspace{-0.25cm}
\subsection{Pebble Model of Communication}
\vspace{-0.1cm}
In this section, we consider the communication model, where the starting node has $k-1$ identical and movable tokens (termed as pebbles), where $k$ is the total number of agents deployed. A pebble can be carried by an agent from one node to another. This pebble acts as a mode of communication for the agents, as the agents can perceive the presence of a pebble at the current node. Moreover, an agent can also perceive the presence of other agents which are co-located at the current round (i.e., gather the IDs of the other co-located agents). Our main aim is to prove the following theorem in this section.
\begin{theorem}
    \label{Thm: 3 coloc pbl necessary and sufficient}
    A team of 3 co-located, synchronous agents are necessary and sufficient to solve \textsc{PerpExploration-BBH} on a ring $R$, with $n$ nodes under the pebble model of communication,with the presence of two pebbles initially co-located with the agents. 
\end{theorem}

The necessary part follows from Corollary~\ref{impossible: pebble2}. For the sufficiency part we present an algorithm, \textsc{PerpExplore-Coloc-Pbl}. We have provided a very brief description of the algorithm. For detailed description and pseudo code see \textbf{Appendix} (Section~\ref{appendix: colocpbl}).

\noindent\textbf{Brief Description of the Algorithm:} Given a ring $R$ with a Byzantine black hole node $v_b$, a set of agents, $A=\{a_1,a_2,a_3\}$ (where we assume, ID of $a_1~<$ ID of $a_2~<$ ID of $a_3$), are initially co-located at $home$. The agents have no knowledge about the position of the node $v_b$, the only set of knowledge the agents have are: the total number of nodes in the underlying topology $R$ (i.e., $n$), and also the knowledge that the initial node, i.e., $home$ is safe. So, the remaining arc of $n-1$ nodes in $R$ is a suspicious region (which we term as $S$, where the cardinality of $S$ i.e., $|S|$ denotes the length or size of the suspicious region) for each of the three agents. 

The  main idea of this algorithm is as follows: initially two agents $a_1$ and $a_2$ explores the ring perpetually, while $a_3$ waits at the $home$. $a_1$ and $a_2$ explores the ring $R$, executing the rule described as follows. If all the three agents are at $home$ along with two pebbles, then in the $r-$th round ($r \ge 0$), $a_1$ moves clockwise with one pebble and it waits at round $r+1$ for $a_2$ to arrive. In round $(r+1)$, $a_2$, leaving the remaining pebble at $home$ moves to the next node along clockwise direction to meet $a_1$. In the subsequent round, i.e., $(r+2)-$th round, $a_1$ after meeting $a_2$ moves again to the next node in the clockwise direction and waits for 3 rounds (i.e., till $(r+5)-$th round) for $a_2$ 
if and only if $a_1$ is not at $home$. In the mean time at $(r+3)-$th round, $a_2$ leaves its current node, moves one step in counter-clockwise direction, collects the left behind pebble. Then in $(r+4)-$th round $a_2$ moves again to the earlier node in clockwise direction. Subsequently, in $(r+5)-$th round, $a_2$ again leaves the pebble at its current node, and moves in a clockwise direction, to meet $a_1$ (which is waiting at the current node for $a_2$ to arrive), for better reference see Fig. \ref{fig:enter-label}. Note that, when $a_2$ meets $a_1$ outside $home$, the pebble it was carrying is left by $a_2$ at the nearest counter-clockwise node of the meeting node (i.e., where $a_2$ meets with $a_1$).
 Now from round $(r+6)$ onwards to $(r+9)-$th round, the same execution repeats, as it has occurred between round $(r+2)$ to round $(r+5)$. The only difference is that, if both agents (i.e., $a_1$ and $a_2$) at round $r+2$ were at a node $v_1$, at round $r+6$, they are on the nearest clockwise node of $v_1$ (i.e., at $v_2$, say).  This process continues until $a_1$ reaches $home$. In this case $a_1$ does not move clockwise, until $a_2$ meets it with the pebble it was carrying, in that case, it waits further until $a_2$ brings that pebble back at $home$. For this to happen after reaching $home$, 5 rounds are sufficient for waiting. When $a_2$ reaches $home$ with the pebble, then all 3 agents are again at the $home$ with 2 pebbles. This way the 3 agents explore the ring. The exploration can hamper if either $a_1$ or, $a_2$, or a pebble is destroyed by the Byzantine black hole $v_b$ while the above mentioned procedure is executed by them.

 \begin{figure}
     \centering
     \includegraphics[scale=0.4]{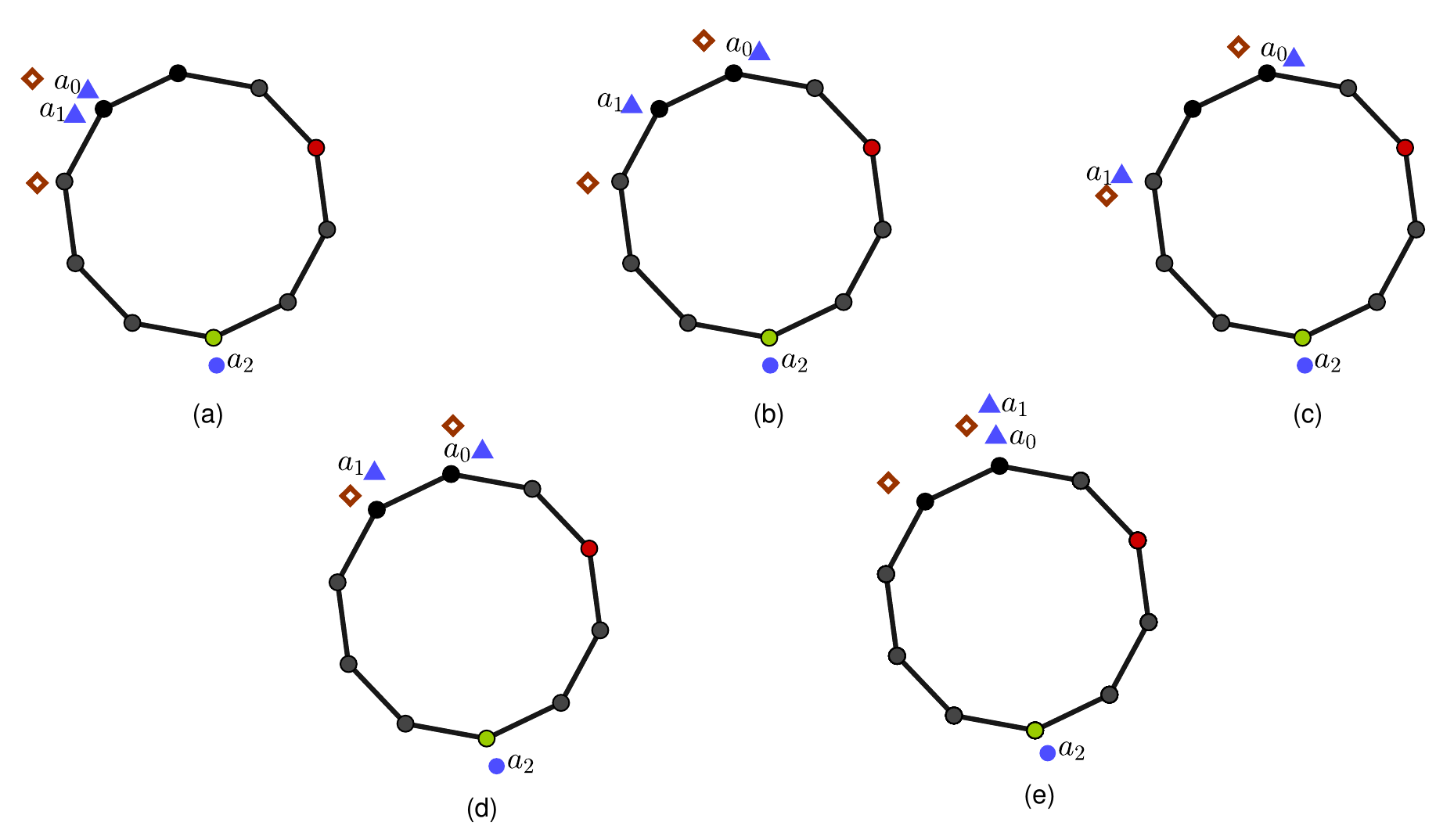}
     \caption{An execution of \textsc{PerpExplore-Coloc-Pbl}, starting from the configuration where $a_0$ and $a_1$ are together on a vertex, to the configuration where $a_0$ and $a_1$ are on the same vertex again, which is the clockwise neighbour of the earlier vertex. The red node is the byzantine black hole, the green node is the $home$ and red boxes are pebbles.}
     \label{fig:enter-label}
 \end{figure}
 
 Firstly, let only $a_1$ is destroyed at some round $t$. Then there must exists a round $t'<t$, when $a_2$ was with $a_1$ at some node $v_0$ (say) for the last time. This means at round $t'$, $a_1$ moved clockwise to the next node $v_1$ along with the pebble it was carrying, and it must have been destroyed before or at the round when $a_2$ reaches $v_1$ again (in this case $v_1$ is $v_b$). Now when $a_2$ reaches $v_1$ there can be two cases. Either $a_2$ remains alive or (as the adversary may choose not to destroy $a_2$), it can get destroyed as well. For the initial case, $a_2$ identifies the Byzantine black hole by not seeing $a_1$ there and then it can leave that node and can explore the ring perpetually avoiding the node. The latter case is equivalent to both $a_1$ and $a_2$ are destroyed by the Byzantine black hole $v_b=v_1$. Note that since outside $home$ whenever $a_2$ reaches the same node of $a_1$, it leaves behind a pebble at the nearest counter-clockwise node (here $v_0$). For this case, another agent (i.e., $a_3$) which was waiting at $home$, finds none of the two exploring agents returns at $home$, even after waiting for a sufficient number of rounds. This incident triggers $a_3$ to move clockwise until it finds a pebble (one that is left behind by $a_2$ at $v_0$). Whenever the pebble is found, $a_3$ knows that the next clockwise node is the Byzantine black hole and it starts exploring $R$ avoiding that node.
 
 Now, there can be two other cases that can hamper the above mentioned exploring procedure. 
 For the first case, let only $a_2$ is destroyed at some round $t>0$. Let $t'<t$ be the last round before $t$ when both $a_1$ and $a_2$ were together at a node $v_0$. At round $t'$, $a_1$, moves clockwise to the next node $v_1$. Now if $a_2$ is not destroyed in the next 3 rounds, then it meets $a_1$ at $v_1$ again, contrary to our assumption. So, if only $a_2$ is destroyed then it must be at any of the rounds $t'+1, t'+2$ or $t'+3$. In this 3 round $a_2$ can be either on $v_0$ or $v_{-1}$ where, $v_{-1}$ is the counter-clockwise nearest node of $v_0$. In this case, when $a_1$ finds $a_2$ has not arrived at $v_1$ even after waiting for 3 rounds, in which case, it understands that either $v_0$ or $v_{-1}$ is the Byzantine black hole. This knowledge triggers $a_1$ to move clockwise until it meets $a_3$ at $home$, leaving the pebble along with it behind at $v_1$. When $a_3$ sees that even after certain round of waiting only $a_1$ is able to arrive at $home$, and that too without the pebble it was supposed to be carrying, $a_3$ understands that $a_1$ must have detected an anomaly. In this case both $a_1$ and $a_3$ moves counter-clockwise together to find the pebble left by $a_1$. When they find it they declare $v_1$ as $home$. After this they start exploring the ring in different directions in the following way. $a_1$ moves clockwise until $v_{-1}$ and $a_3$ starts moving counter-clockwise until $v_0$. After that, they both move in opposite direction until they again reach $home$ (which is the new $home$) and waits for sufficient number of rounds to meet each other. This way the exploration keeps going on until one among $a_1$ or $a_3$ is destroyed. Let $a_3$ fails to reach $home$. In that case, $a_1$ will detect that $v_0$ must be the Byzantine black hole and it can start exploring the ring $R$ avoiding $v_0$. On the other hand if $a_1$ fails to reach then it must be destroyed by the Byzantine black hole which is nothing but the node $v_{-1}$. Knowing this $a_3$ then explores the ring avoiding that node. Now there can be another case when no agents but a pebble is destroyed at $v_b$. This pebble must be the pebble that is left behind at $v_{-1}$ by the agent $a_2$ when it reaches $v_0$ to meet $a_1$. In this case, only the pebble can be destroyed before $a_2$ reaches to collect the pebble back. So, when $a_2$ reaches $v_{-1}$, it finds out that there is no pebble. This leads to $a_2$, knowing that $v_{-1}$ is the Byzantine black hole and in which case it starts exploring the ring $R$ avoiding that node. If $a_2$ is also destroyed while it reaches $v_{-1}$ to collect the pebble, this case is similar to the case where we described the algorithm when only $a_2$ is destroyed. 

 \noindent\textbf{Sketch of Correctness:}
 To prove the correctness of algorithm \textsc{PerpExplore-Coloc-Pbl} we prove the following Theorem. Here we just present a sketch of the proof, for details see \textbf{Appendix} (Section~\ref{Appedix: coloc pbl correct}).
 \begin{theorem}
 \label{thm:colocPblCorrect}
 Algorithm \textsc{PerpExplore-Coloc-Pbl} solves \textsc{PerpExploration-BBH} on a ring $R$ with 3 co-located and synchronous agents under the pebble model of communication with the presence of two pebbles initially co-located with the agents.
\end{theorem}
We first prove that if no agents are destroyed then either the agents continue with the exploration of ring $R$ without knowing the exact location of the Byzantine black hole, or there exists at least one agent that knows the exact location of the Byzantine black hole which can then explore the ring indefinitely avoiding the Byzantine black hole (See Lemma~\ref{lemma: either ring is explored or BBH is detected} in \textbf{Appendix}). Next we proved that, if one agent is destroyed while exploring the ring, then either there exists one agent that identifies the Byzantine black hole uniquely and continues to indefinitely explore all nodes of the ring, except the Byzantine black hole or the exploration continues while the length of the suspicious region  (i.e., $|S|$) decreases to 2 from $n-1$ (Lemma~\ref{lemma: exactly one destoyed} in \textbf{Appendix}). For the later case we proved that the exploration continues until another agent is destroyed by the Byzantine black hole (Remark~\ref{Remark: exploration with two agents} in \textbf{Appendix}). In this case when two agents are destroyed by the Byzantine black hole, the only alive agent can uniquely identify the Byzantine black hole without moving on to it (Lemma~\ref{lemma: exactly two agents destroyed} in \textbf{Appendix}). Thus it can then perform the exploration of $R$ avoiding the Byzantine black hole. This proves Theorem~\ref{thm:colocPblCorrect}.

\subsection{Face-to-Face Model of Communication}

In this section, we consider the Face-to-Face (also termed as F2F) model and prove the following theorem.

\begin{theorem}\label{thm:F2FCorrectness}
    A team of 5 co-located and synchronous agents are sufficient to solve \textsc{PerpExploration-BBH} on a ring $R$ with $n$ nodes under the F2F model of communication.
\end{theorem}

In order to prove Theorem \ref{thm:F2FCorrectness}, we discuss an algorithm with 5 co-located agents, where each agent can communicate among themselves at the same node at the same round (i.e., each agent have F2F model of communication). Initially 3 lowest ID agents, say, $a_1$, $a_2$ and $a_3$ are chosen, where fourth lowest ID agent say, $a_4$, is associated with $a_1$ and the largest ID agent, say $a_5$, is associated with $a_2$. The idea of the algorithm resembles to that of the algorithm \textsc{PerpExplore-Coloc-Pbl}. Note that as in this case there is no existence of pebbles, hence, we configure the behaviour of the agents $a_4$ and $a_5$ in such a way that they act as pebbles, associated with $a_1$ and $a_2$ in \textsc{PerpExplore-Coloc-Pbl}, respectively. In \textsc{PerpExplore-Coloc-Pbl} when we say that an agent $a_i$ carries a pebble $p$, in this case we mean that the agent $a_i$ communicates a message \texttt{carry} with its associated agent $a_{i'}$ such that both these agents simultaneously move together until further new instruction is communicated. On the contrary as per \textsc{PerpExplore-Coloc-Pbl} when we say that an agent $a_i$ drops a pebble $p$ at a node $v$, then in this case we mean that $a_i$ communicates a message \texttt{drop} with $a_{i'}$, which in turn instructs $a_{i'}$ not to move further and remain stationary at the node $v$ until further instruction is communicated. Hence, with this terminology, a team of 5 agents can execute the algorithm \textsc{PerpExplore-Coloc-Pbl}, and the correctness also follows similarly. This shows that a team of 5 co-located and synchronous agents are sufficient to solve \textsc{PerpExploration-BBH} under F2F model of communication. This proves Theorem \ref{thm:F2FCorrectness}.

\subsection{Whiteboard Model of Communication}
\label{subsection: colocwhitbrd simul}
\vspace{-0.1cm}
The algorithm \textsc{PerpExplore-Coloc-Pbl} can be simulated in whiteboard model of communication. A pebble on a node can be simulated by a bit of information, which is marked on the whiteboard of that node. Note that, collecting a pebble from the node is simulated by erasing the same bit of information, on that node and dropping the pebble can be simulated by marking the node with a bit of information as well on the whiteboard of that node. From this we get the following result for whiteboard model of communication. 
\begin{theorem}
\label{thm:colocwhitbrd}
     A team of 3 synchronous co-located agents are necessary and sufficient to solve \textsc{PerpExploration-BBH} on a ring $R$ if each node are equipped with a whiteboard having constant memory.
\end{theorem}
\vspace{-0.2cm}
Previously in \cite{BampasImprovedPeriodicDataRetrieval}, for asynchronous scheduler the tight bound for number of agents to solve this problem was 4. Now from Theorem~\ref{thm:colocwhitbrd}, we get a trade-off between reducing the optimal number of agents required vs the scheduler, in presence of a more generalized version of gray hole as well (i.e., Byzantine black hole). 
\vspace{-0.45cm}
 \section{Perpetual Exploration with Scattered Agents}\label{section: scattered agents}
 \vspace{-0.1cm}
 Here in this section, we discuss the problem of \textsc{PerpExploration-BBH}, when the agents are initially scattered on more than one nodes, where each of these starting nodes are assumed to be safe. We investigate this problem, under different communication models. First, we discuss the pebble model of communication and subsequently we discuss the whiteboard model of communication.
 \vspace{-0.35cm}
 \subsection{Pebble Model of Communication}
 \vspace{-0.1cm}
 Our goal in this section is to prove the following theorem.
 \begin{theorem}
     A team of 4 synchronous scattered agents with one pebble each is necessary and sufficient to solve the \textsc{PerpExploration-BBH} problem on a ring $R$ with $n$  nodes, when the agents are initially scattered on $R$.
 \end{theorem}

 The necessary part follows from Corollary~\ref{cor:necessity of 4 agents scat pbl}. For the sufficient part, we present an algorithm  \textsc{PerpExplore-Scat-Pbl}, that can solve the \textsc{PerpExploration-BBH} problem in the above context. In this algorithm we assumed that 4 agents are initially scattered at 4 different nodes of $R$. To include the remaining cases where the agents are initially scattered at less than or equal to 3 nodes, a slight modification of \textsc{PerpExplore-Scat-Pbl} is enough which we describe in Remark~\ref{remark:modified perpexplore-scat-pbl} in \textbf{Appendix}. Here we describe only the main idea of the algorithm. The detailed description and pseudocodes of the algorithm can also be found at the \textbf{Appendix} (Section~\ref{appendix:pseudo scat pbl}).

 \noindent \textbf{Brief Idea of the Algorithm:} Let us consider, the starting position of $a_0$ to be the first, after which the starting position of $a_1,a_2$ and $a_3$, respectively follows in clockwise order. Let $h_i$ be the starting node of an agent $a_i$ (i.e., $home$ of $a_i$), where $i \in \{0, 1, 2, 3\}$. By $Seg(a_i)$ we define the clockwise arc starting from the node $h_i$ and ending at $h_{(i+1) \pmod 4}$ (called \textit{segment} of $a_i$). If no agents are destroyed by the Byzantine black hole $v_b$ then the exploration of $R$ goes as follows. 
 
 Agent $a_i$ moves clockwise along $Seg(a_i)$ leaving its pebble at $h_i$ until it reaches $h_{(i+1) \pmod 4}$ (i.e., the end of $Seg(a_i)$). An agent can distinguish the node  $h_{(i+1) \pmod 4}$ 
 by seeing the pebble left there by the agent $a_{(i+1)\pmod 4}$. When $a_i$ reaches $h_{(i+1)\pmod 4}$ traversing $Seg(a_i)$ for the first time, it knows the length of the segment $Seg(a_i)$ and stores the length in its own memory. For further traversals it does not depend on seeing pebbles at the end nodes of the segment. It can simply use the length of the segment. After $a_i$ reaches $h_{(i+1)\pmod 4}$, it waits there for a certain number of rounds so that the other agents can in the meantime reach the endpoints of there corresponding segments, while moving along a clockwise direction.  After this waiting, $a_i$ collects the pebble (if exists) from $h_{(i+1)\pmod 4}$ (the pebble left by $a_{(i+1)\pmod 4}$) and starts moving counter-clockwise along with the collected pebble until it reaches its own $home$, i.e., $h_i$ along with the pebble it was carrying. The waiting time at $h_{(i+1)\pmod 4}$ also ensures that all agents starts moving counter-clockwise at the same round. Also note that, if $a_i$ does not find any pebble to collect at $h_{(i+1)\pmod 4}$ it starts moving counter-clockwise without any pebble (this case can only happen if $a_{(i+1) \pmod 4}$ is destroyed at the Byzantine black hole node $v_b$, while it was returning back after collecting the pebble from $h_{(i+2)\pmod 4}$ in some previous round). After $a_i$ reaches $h_i$ moving counter-clockwise, it again waits for another set of rounds before it repeats the whole process again. Again, the waiting time at $h_i$ is configured in such a way that, all agents start repeating the process at the same round (for details on the precise value of these waiting times, refer to the detailed description of the algorithm in \textbf{Appendix} in Section~\ref{appendix:pseudo scat pbl}). Note that, since $\cup_{i=0}^3 Seg(a_i) =R$, if no agents are destroyed, the perpetual exploration of $R$ happens. 
 \begin{figure}
     \centering
     \includegraphics[scale=0.4]{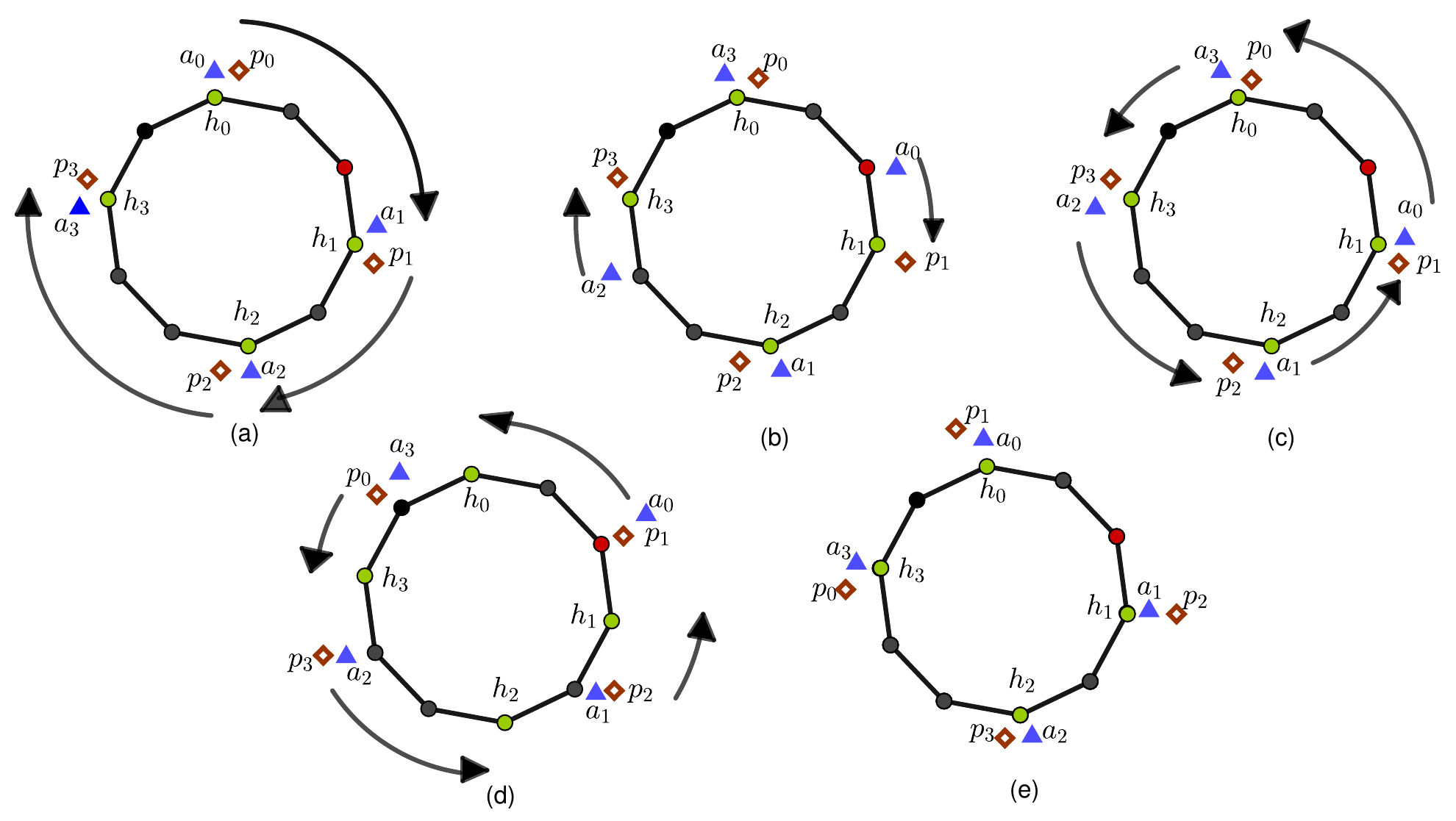}
     \caption{\textbf{(a)}  $h_i$'s are $home$ marked as green. Agent $a_i$ is on $h_i$ initially with a pebble. We name the pebble initially at $h_i$ as $p_i$, but in reality they are anonymous. \textbf{(a-b)} Each agent moves clockwise until the next $home$ (i.e., $h_{(i+1) \pmod{4}}$) without carrying any pebble. The agents already reached ( here $a_1$ and $a_3$) waits for others. \textbf{(c-e)}  All agents are on their clockwise nearest $home$. They start moving counter clockwise together with the pebble present at their current location towards their initial $home$. The agents which already reaches their $home$, wait for others to reach their $home$. } 
     \label{fig:scat pebble}
 \end{figure}
 The exploration can be hampered only if, an agent is destroyed at the Byzantine black hole node $v_b$. Note that, since $Seg(a_i) \cap Seg(a_j)$ is either empty or consists of a safe node and in addition to that, as a segment is explored by only one agent. Hence, at most one agent can be destroyed while exploring their respective segments, as described  above. Without loss of generality, let $v_b\in Seg(a_j)$, for some $j \in \{0,1,2,3\}$. Now there are two cases.\\
 \textit{\underline{Case-I:}} Let $a_j$ is destroyed while moving clockwise. For this case, $a_j$ fails to collect the pebble at $h_{(j+1)\pmod 4}$ left by $a_{(j+1)\pmod 4}$ and return $h_j$. So, when $a_{(j+1)\pmod 4}$, returns $h_{(j+1)\pmod 4}$ after moving counter-clockwise along with the pebble it has collected from $h_{(j+2)\pmod 4}$, it finds two pebble at $h_{(j+1)\pmod 4}$. This is considered by $a_{(j+1) \pmod 4}$ as an anomaly and it learns that $v_b$ must be in the arc, which is in the counter-clockwise direction starting from $h_{(j+1)\pmod 4}$. In this scenario, $a_{(j+1)\pmod 4}$ waits a certain number of rounds (if needed) to ensure that every other alive agents have reached their corresponding $home$.  Then $a_{(j+1)\pmod 4}$ starts moving clockwise with both the pebbles. The aim of this move is to meet and gather with the other alive agents. Note that the other alive agents wait at their corresponding $home$, after moving along the counter-clockwise direction. It is because, at their respective $home$ they do not detect any anomaly. The waiting time is provided in such a way that it is enough for $a_{(j+1)\pmod 4}$ to detect anomaly and meet both the remaining alive agents after moving clockwise, while the other agents are waiting at their $home$. The agent $a_{(j+1) \pmod 4}$ first meets $a_{(j+2)\pmod 4}$ at $h_{(j+2)\pmod 4}$ while it moves clockwise after detecting anomaly. Then both of them moves together, while $a_{(j+2)\pmod4}$ carries the pebble, which it was earlier carrying back to $home$. They move until they meet with $a_{(j+3)\pmod 4}$ at $h_{(j+3)\pmod 4}$. Note that at this moment 3 agents are at a node (which is $h_{(j+3)\pmod 4}$), with at least 3 pebbles (one carried by $a_{(j+3)\pmod 4}$ and two carried by $a_{(j+2)\pmod 4}$). In this case they execute the algorithm \textsc{PerpExplore-Coloc-Pbl} and achieve \textsc{PerpExploration-BBH} of the ring $R$.\\
 \textit{\underline{Case-II:}} Let $a_j$ be destroyed at $v_b$ while it was moving counter-clockwise along with the pebble it collected from $h_{(j+1)\pmod 4}$. Then in this case, when all alive agents return to their corresponding $home$, none of them finds any anomaly as all agent sees exactly one pebble at their $home$. So, they wait and start the exploration again at the same round. Note that in this scenario, all agents except $a_j$, move clockwise to the end points of their corresponding segments, waits there and collects pebble (if any) and moves back to their corresponding $home$ again. Now, since $a_j$ was destroyed earlier, the pebble left by $a_{(j+1) \pmod 4}$ was picked by no agents and thus, while $a_{(j+1)\pmod 4}$ returns back to $h_{(j+1)\pmod 4}$ it finds two pebbles and does the same execution as explained in \textit{Case-I}. 

 The basic idea of this algorithm is to gather three agents with at least 3 pebbles (refer Remark \ref{remark:3 pebble not used}) in the expense of one agent and then execute \textsc{PerpExplore-Coloc-Pbl} to solve the \textsc{PerpExplroation-BBH} problem. \\
 \textbf{Sketch of Correctness:}
 To proof the correctness of the algorithm \textsc{PerpExplore-Scat-Pbl}, we have to prove the following theorem. Here we only provide the proof idea. For details see \textbf{Appendix} (Section~\ref{correct scat pbl}).
 \begin{theorem}
     \label{thm:correct scat pbl}
     Algorithm \textsc{PerpExplore-Scat-Pbl} solves \textsc{PerpExploration-BBH} problem of a ring $R$ with 4 synchronous and  scattered agents under the pebble model of communication where each agents are equipped with a pebble.
 \end{theorem}

First in Corollary~\ref{cor: scat pbl no destruction exploration}, we proved that algorithm \textsc{PerpExplore-Scat-Pbl} guarantees exploration if no agents are destroyed. This corollary is a direct consequence of the Lemma~\ref{lemma:scat pbl no destroy exploration}. Further we showed that if an agent is destroyed then exactly one agent detects anomaly and that agent gathers with the remaining two alive agents within finite rounds  (Lemma~\ref{lemma: gather in an iteration} in \textbf{Appendix}). The rest of the proof follows from Theorem~\ref{thm:colocPblCorrect}.
\vspace{-0.35cm}
\subsection{Whiteboard Model of Communication}
\vspace{-0.1cm}
 The main aim of this section is to prove the following theorem.
 \begin{theorem}
     A team of 3 synchronous agents are necessary and sufficient to solve the problem \textsc{PerpExploration-BBH} on a ring $R$ with $n$ nodes, when each node of $R$ has a whiteboard of $O(\log n)$ bits of memory, irrespective of their starting location. 
 \end{theorem}
 The necessary part follows from Corollary~\ref{cor: necessity 3 agnets whitbrd}. The sufficiency part for co-located initial situation follows from Theorem~\ref{thm:colocwhitbrd}. For the other cases where the agents are initially scattered at more than one starting nodes, we  propose an algorithm \textsc{PerpExplore-Scat-Whitbrd}. This algorithm is designed for the case when all the agents are starting at different nodes. Note that, this algorithm can be easily modified a bit to include the case where initially three agents are scattered at two distinct nodes (refer the modification in \textbf{Appendix}, Section~\ref{Appendix: multiplicity whitbrd case}). For detailed description and the pseudo code, see \textbf{Appendix} (Section~\ref{Appendix: desc scat whitbrd}).
 \\
 \noindent\textbf{Brief Description of the Algorithm:} Let $a_0,a_1$ and $a_2$ be three agents starting from the nodes $h_0,h_1$ and $h_2$, respectively, where these nodes are in a clockwise order. Let $Seg(a_i)$ (also called `Segment of $a_i$') be the clockwise arc starting from $h_i$ and ending at $h_{(i+1)\pmod 3}$. The algorithm first ensures that the ring is explored perpetually if no agents are destroyed. In order to do this, the algorithm goes as follows.

 An agent, say $a_i$, first erases all previously stored data (if any) at $h_i$. Then it writes the message \texttt{(home, ID($a_i$))} at $h_i$ and starts moving clockwise. This type of message is called a \texttt{home} type message that indicates it is $home$ of $a_i$. $a_i$ moves clockwise until it reaches $h_{(i+1)\pmod 3}$. It distinguishes $h_{(i+1)\pmod 3}$ by seeing the \texttt{home} type message left by $a_{(i+1)\pmod 3}$. When $a_i$ moves clockwise, it also marks each node of $Seg(a_i)$, except $h_i$ and $h_{(i+1)\pmod 3}$, by writing \texttt{right}, after erasing any previous such markings (if at all exists) at each such nodes. After $a_i$ reaches $h_{(i+1)\pmod 3}$ it waits for a certain number of rounds (if needed), so that the other agents (say $a_j$) gets enough time to reach endpoints of their corresponding segment (i.e.,$h_{(j+1)\pmod3}$). After this waiting, each alive agent $a_i$ write the message \texttt{(visited, ID($a_i$))} at $h_{(i+1)\pmod 3}$ at the same round. This type of messages is termed as a \texttt{visited} type message, which indicates that an agent with its corresponding ID, also mentioned in the \texttt{visited} type message, has visited the respective node. Then each $a_i$ waits for $n$ rounds at $h_{(i+1)\pmod 3}$ and then starts moving counter-clockwise at the same round. $a_i$ moves counter-clockwise until it reaches $h_i$, i.e., its own $home$. While moving counter-clockwise, $a_i$ erases previously written \texttt{right} marking (which it marked while moving along clockwise direction) from each nodes of $Seg(a_i)$ and writes \texttt{left} there upon arriving (except at $h_i$ and at $h_{(i+1)\pmod 3}$). When $a_i$ reaches its own $home$ (i.e., $h_i$), it waits again for a certain number of rounds, upon seeing the \texttt{visited} type message, left there by $a_{(i-1)\pmod 3}$. This waiting period is enough for each of the other alive agents to reach their corresponding $home$. After this waiting period is over, all of the agents, starts repeating the same procedure again together, from the same round. This procedure ensures that, if no agent is destroyed at the Byzantine black hole node $v_b$, then the perpetual exploration of $R$ continues. This can only be hampered, only if an agent gets destroyed at the node $v_b$. Without loss of generality let $v_b\in Seg(a_j)$ for some, $j \in \{0,1,2\}$. So only $a_j$ can be destroyed at $v_b$, while performing this exploration. It is because $a_j$ is the only agent to visit each node $u$ of $Seg(a_j)$, where $u\in Seg(a_j)\backslash \{h_j, h_{(j+1)\pmod 3}\}$. There can be two cases.\\
 \noindent\textit{\underline{Case-I:}} Let $a_j$ be destroyed while it is moving in clockwise direction along $Seg(a_j)$. This implies it fails to reach $h_{(j+1)\pmod 3}$ and also fails to write a \texttt{visited} type message there. So, when $a_{(j+1)\pmod 3}$ returns to its $home$ (i.e., $h_{(j+1)\pmod 3}$), it finds no \texttt{visited} type message (as it should've, if $a_i$ was not destroyed). $a_{(j+1)\pmod 3}$ interprets from this that, the segment which is adjacent to its own segment in counter-clockwise direction has the Byzantine black hole and an agent must have been destroyed there while moving clockwise. This information triggers it to move clockwise with the aim of gathering with the other agent (i.e., $a_{(j+2)\pmod 3}$).  Note that, when $a_{(j+1)\pmod 3}$ starts moving, the other agent is waiting and the waiting time is sufficient for the moving agent to meet it. After they meet, $a_{(j+1)\pmod3}$ shares the information about its direction of movement in the whiteboard of $h_{(j+2)\pmod3}$. With this, they again start moving clockwise together until they reach $h_j$. The agents can distinguish $h_j$ by the \texttt{home} type message written there by $a_j$ before it was destroyed. Note that in the clockwise direction each node after $h_j$ up to the previous node of $v_b$ were marked \texttt{right} by $a_j$ before it was destroyed. So from $h_j$, $a_{(j+1)\pmod 3}$ and $a_{(j+2)\pmod 3}$ starts moving cautiously clockwise. That is, while both agents ($a_L$ and $a_H$, where $a_L$ and $a_H$ are the agent with lowest ID and highest ID among $a_{(j+1)\pmod 3}$ and $ a_{(j+2)\pmod 3} $ respectively) are at a node $v_0$, $a_L$ moves clockwise to the next node, say $v_1$ while the other agent waits at $v_0$. If at $v_1$, $a_L$ sees the \texttt{right} marking, it interprets $v_1$ is safe. So, it comes back to $v_0$. At $v_0$, seeing that $a_L$ has returned, $a_H$ also interprets $v_1$ to be safe. So, in the next round both $a_L$ and $a_H$ moves to $v_1$ together and repeats the process from $v_1$ again. When $a_L$ reaches $v_b$, it sees no \texttt{right} marking there. $a_L$ if alive, interprets this by determining the current node to be Byzantine black hole. So, it starts perpetual exploration of $R$, except $v_b$. Otherwise, if $a_L$ gets destroyed at $v_b$, it does not return to the previous node where $a_H$ is waiting. Even after waiting, when $a_H$ sees $a_L$ has not returned, it interprets this incident as the next clockwise node is the Byzantine black hole and starts exploring the ring $R$ avoiding that node. Thus for this case the perpetual exploration continues.\\
 \noindent \textit{\underline{Case-II:}} Let $a_j$ be destroyed at $v_b$ while it is moving counter-clockwise. In this case both the alive agents $a_{(j+1)\pmod 3}$ and $a_{(j+2)\pmod 3}$ find \texttt{visited} type message after returning to their corresponding $home$. This is because, $a_j$ is destroyed at $v_b$ after writing \texttt{visited} type message at $h_{(j+1)\pmod 3}$. So, all alive agents after waiting, starts repeating the exploration procedure again. The agents first erase the whiteboard content at their corresponding $home$ and then writes the corresponding \texttt{home} type message before it starts moving clockwise until the end of their corresponding segment. Note that since $a_j$ was destroyed earlier (even before the repetition starts), $a_{(j-1)\pmod 3}$ finds the \texttt{visited} type message which was written there earlier by it, is still present (which was supposed to be erased if $a_j$ was alive). From this information, $a_{(j-1)\pmod 3}$ interprets that, $v_b$ must be in the segment adjacent to its own segment in the clockwise direction. It also interprets that, an agent must have been destroyed at $v_b$ while it was moving counter-clockwise. This is because, if the agent was destroyed while it was moving clockwise then it would have been detected by the other agents when they are at their $home$ (as described in \textit{Case-I}), which is before the start of the next clockwise move (note that the agents have already started next clockwise move). So, this information triggers $a_{(j-1)\pmod 3}$ to move counter-clockwise with the aim to gather with $a_{(j-2)\pmod 3}$. Since  $a_{(j-1)\pmod 3}$ starts its move while $a_{(j-2)\pmod 3}$ waits at $home$ and also, since the waiting time is sufficient, $a_{(j-1)\pmod 3}$ meets  $a_{(j-2)\pmod 3}$ during the waiting period at $h_{(j-1)\pmod 3}$. So, after they meet, $a_{(j-1)\pmod 3}$, communicates the direction of its move to $a_{(j-2)\pmod 3}$ on the whiteboard of $h_{(j-1)\pmod 3}$ and they move together until they reach $h_{(j-2) \pmod 3} (= h_{(j+1)\pmod 3})$ together. They distinguish the node by the \texttt{home} type message left there by $a_{(j-2)\pmod 3}$ before the start of moving clockwise. Note that, in the counter-clockwise direction, the nodes in $Seg(a_j)$, starting from the next node of $h_{(j+1)\pmod 3}$ up to the previous node of $v_b$ are the only nodes those were marked \texttt{left} by $a_j$ before it was destroyed. So, from $h_{(j+1)\pmod 3}$, the alive agents $a_{(j+1) \pmod 3}$ and $a_{(j+2)\pmod 3}$ start moving cautiously in the counter-clockwise direction. Among $a_{(j+1) \pmod 3}$ and $a_{(j+2)\pmod 3}$, an agent with lowest ID is denoted by $a_L$ and the agent with highest ID is denoted by $a_H$. The cautious walk is same as described in Case-I, except for the fact that here after moving one node in the counter-clockwise direction, the agent $a_L$ searches for the \texttt{left} marking. If it finds such marking it returns to $a_H$ and then both moves counter-clockwise and repeats the process. On the other hand if $a_L$ does not find any \texttt{left} marking (it must be $v_b$) and it stays alive, $a_L$ moves out of that node and starts exploring $R$ avoiding that node. On the contrary, if $a_L$ gets destroyed then $a_H$ sees that $a_L$ has not returned while it should have. From this incident it interprets next counter-clockwise node is $v_b$ and it starts exploring $R$ avoiding that node.

 \noindent\textbf{Sketch of Correctness:} Here we give a brief glimpse on how we proved the following theorem. For details see \textbf{Appendix} (Section~\ref{Appendix:correct whitbrd scat})
 \begin{theorem}
 \label{thm: correct whitbrd scat}
     Algorithm \textsc{PerpExplore-Scat-Whitbrd} solves \textsc{PerpExploration-BBH} problem of a ring $R$ with $n$ nodes and with 3 synchronous agents initially scattered under the whiteboard model of communication
 \end{theorem}

 In order to prove this theorem, we first define a \textit{cautious start node}.  Let $a_i$ be the agent destroyed at the Byzantine black hole first. We define \textit{cautious start node} to be $h_i$, if $a_i$ was moving clockwise when destroyed, otherwise it is $h_{(i+1)\pmod 3}$. We first prove that after one agent is destroyed, the remaining two agents gather at the cautious start node. This result can be found in details in
 Lemma~\ref{lemma: reachCautiousStartNode} in \textbf{Appendix}. Note that $a_i$ marks each node on the arc between the cautious start node and the Byzantine black hole with either \texttt{left} or \texttt{right} markings depending on the direction it was moving before it was destroyed. We proved that the alive agent can know the direction  of $a_i$ before it was destroyed. This can be found in details in Lemma~\ref{lemma: agent sees visited type message at other agents home} and Lemma~\ref{lemma: sees no visited at bactrack} in \textbf{Appendix}. After the alive agents reach the cautious start node, they start moving cautiously looking for the marking based on the direction of $a_i$ before it was destroyed. We proved that at least an agent among the two moving cautiously will stay alive knowing the exact location of the Byzantine black hole (Lemma~\ref{lemma: CautiousLeaderCautiousFollower} in \textbf{Appendix}). Hence this agent can explore $R$ perpetually avoiding the Byzantine black hole. This proves Theorem~\ref{thm: correct whitbrd scat}.
 \vspace{-0.5cm}
 \section{Conclusion}\label{section: conclusion}
\vspace{-0.15cm}


The paper addresses perpetual exploration of a ring network in the presence of a malicious node, which we call as a \textit{Byzantine black hole}. This problem, termed as \textsc{PerpExploration-BBH}, is explored under three communication models (\textit{Face-To-Face, Pebble,} and \textit{Whiteboard}) considering various initial scenarios (\textit{co-located} or \textit{scattered} agents) with the aim of minimizing number of agents. We proposed optimal results (in terms of number of agents), for \textit{Pebble} and \textit{Whiteboard} communication models, under both initial scenarios. Further, an upper bound of 5 agents and a lower bound of 3 agents is provided for \textit{Face-to-Face} model, in case of co-located agents.


Future research could focus on proposing an optimal bound for \textit{Face-To-Face} co-located scenario, whereas proposing constructive lower and upper bounds for \textit{Face-To-Face} scattered scenario. Additionally, investigating this problem in different scheduler models can be another way of future direction.


\bibliography{lipics-v2021-sample-article}

\newpage

\begin{center}
    \LARGE \textbf{Appendix}
\end{center}
\section{Proof of Theorem~\ref{impossible:whitebrd2}}
\label{Appendix:thm 3}

  \textbf{Statement:}
   \textit{ A set of two synchronous agents in a ring $R$ of size $n$ cannot solve\\ \noindent \textsc{PerpExploration-BBH}, even in presence of whiteboard if number of possible\\ \noindent consecutive Byzantine black hole positions is at least 3.}\\
\textbf{Proof:}
Let $v_1, v_2 $ and $v_3$ be three possible consecutive Byzantine black hole positions in a ring $R$. Let two agents $a_1$ and $a_2$ be sufficient to solve the \textsc{PerpExploration-BBH} problem on $R$. Thus there exists an algorithm $\mathcal{A}$ such that $a_1$ and $a_2$ can solve \textsc{PerpExploration-BBH} by executing $\mathcal{A}$. Without loss of generality let $a_1$ explores $v_2$ and it moves to $v_2$ for the first time at round $t$ from vertex $v_1$. Then $a_1$ must have been at the vertex $v_1$ at round $t-1$. Let us take two copies $R_1$ and $R_2$  of the same ring $R$.  In $R_1$, $v_1$ is the Byzantine black hole and in $R_2$, $v_2$ is the Byzantine black hole. Let the adversary in $R_1$ destroy $a_1$ at round $t-1$ and in $R_2$ destroy $a_1$ at round $t$. We claim that at round $t$ and $t-1$, $a_2$ can not be on either of $v_1$ or $v_2$. Note that at $t$-th round $a_2$ can not be on $v_2$ otherwise, both $a_1$ and $a_2$ get destroyed at the same round and no other agent is left to explore further. Similarly, at round $t-1$, $a_2$ can not be on $v_1$. Now consider the case when at $t-$ th round $a_2$ is at $v_1$. Since at round $t-1$, $a_2$ can not be on $v_2$  there can be two cases either at round $t-1$, $a_2$ was in $u_0$ ($u_0$ is another adjacent node of $v_1$ except $v_2$) or was in $v_1$ itself. We first argue that at round $t-1$, $a_2$ can not be at $u_0$. Otherwise, since the whiteboard content at nodes except at $v_1$ and $v_2$ are the same in $R_1$ and $R_2$ at round $t$ (due to the same execution by $a_1$ and $a_2$ up to round $t-1$ and at round $t$  for $a_2$ in both $R_1$ and $R_2$). Now since at round $t$, all previous data on the whiteboard except $v_1$ and $v_2$ does not help $a_2$ to distinguish between $R_1$ and $R_2$, the problem now can be thought of solving \textsc{PerpExploration-BBH} with one agent where the number of possible consecutive Byzantine black hole position is atleast two. Now this is impossible to solve. So, at $t-1$-th round $a_2$ can not be on $u_0$.  We now only have to prove that $a_2$ can not be at $v_2$ during round $t-1$. Let in $R_2$, adversary activates the Byzantine black hole at $v_2$ at round $t-1$ which destroys $a_2$  at round $t-1$. Now since the effect of this destruction of agent $a_2$ does not affect the inputs of $a_1$ at round $t-1$ on $v_1$, it moves to $v_2$ at round $t$ where the Byzantine black hole is activated again by adversary destroying $a_1$ too. Thus $a_2$ must remain outside of $v_1$ and $v_2$ during round $t$ and $t-1$ while executing $\mathcal{A}$. So, at round $t$ since all whiteboard content are same for $R_1$ and $R_2$ (except for nodes $v_1$ and $v_2$ whose content only differs after round $t-2$ for $R_1$ and $R_2$ which can not be known by $a_2$ as it was not there during that rounds), they can not help $a_2$ to distinguish between $R_1$ and $R_2$. So the problem now can be thought of as solving \textsc{PerpExploration-BBH} with one agent where the number of possible consecutive Byzantine black hole positions is at least 2. And this is impossible. So there doesn't exist any algorithm that solves \textsc{PerpExploration-BBH} with two agents on a Ring $R$ where the number of possible consecutive positions of the Byzantine black hole is at least 3. 
\section{Proof of Theorem~\ref{impo:pblscat}}
\label{Appendix: thm 6}
\textbf{Statement:} \textit{A set of 3 scattered agents, each equipped with a pebble can not solve the \textsc{PerpExploration-BBH} problem on a ring $R$ with $n$ nodes.}\\
\textbf{Proof:} Let $a_1, a_2$, and $a_3$ be three agents equipped with a pebble each. The agents are placed on three nodes $h_1,h_2$ and $h_3$ initially in such a way that the distance between $h_i$ and $h_j$ is the same for all $ i,j \in \{1,2,3\}$, where we consider $h_j$ to be the nearest node of $h_i$ in the clockwise direction. Without loss of generality let this distance be sufficiently large. Further, let there exist an algorithm  $\mathcal{A}$ that solves the \textsc{PerpExploration-BBH} problem in this setting. Let without loss of generality $a_1$ be the first agent to explore the third node from its corresponding starting position (i.e., $h_1$) in any of the clockwise or counter-clockwise directions, when each agent starts executing the algorithm $\mathcal{A}$. Suppose by following $\mathcal{A}$, $a_1$ can visit the third node in the clockwise direction (without loss of generality) first at a round say $t>0$. Let $v_1, v_2$ and $v_3$ be those sets of three nodes from $h_1$ in the clockwise direction. Let $C_1$,$C_2$ and $C_3$ be three scenarios where in $C_i$, $v_i$ is the Byzantine black hole. We claim that $a_1$ can not carry its pebble during any execution of $\mathcal{A}$. Otherwise, in scenario $C_1$, it would be destroyed along with its pebble, and since the distances between two consecutive $h_i$ are sufficiently large, hence other agents would have no idea that an agent is already destroyed. This is equivalent to solving the \textsc{PerpExploration-BBH} with two agents having a pebble each and as $n$ is sufficiently large this is impossible due to Corollary~\ref{impossible: pebble2}. Now suppose the adversary chooses to activate the Byzantine black hole whenever $a_1$ reaches there for the first time. In this case for all $C_1$, $C_2$ and $C_3$, at round $t$, agents $a_2$ and $a_3$ have no idea about where $a_1$ is destroyed even if they know that $a_1$ is destroyed, as the distances between two consecutive $h_i$ are sufficiently large so their exploration region does not intersect till round $t$ and timeout (or, waiting for other agents) strategy do not work as for all $C_i$ an agent can get same timeout output. Also for all $C_i$s the position of the pebbles and alive agents will be the same at round $t$. Now for the alive agents, the number of nodes for the possible position of the Byzantine black hole must be greater than or equal to 3 at round $t$. So the situation at round $t$ is similar to the problem of solving \textsc{PerpExploration-BBH} on a ring $R$ with two agents having a total of 3 pebbles where the number of possible consecutive positions of the Byzantine black hole is greater or equal to 3. Since we have assumed $\mathcal{A}$ solves the problem thus, $\mathcal{A}$ can also solve the problem of \textsc{PerpExploration-BBH} with two agents where the number of possible positions of the Byzantine black hole is greater or equal to 3. But due to Theorem~\ref{impossible:whitebrd2} it is impossible. Hence there can never exist any algorithm that solves \textsc{PerpExploration-BBH} with three scattered agents each of which is equipped with a pebble.

\section{Algorithm \textsc{PerpExplore-Coloc-Pbl}}
\label{appendix: colocpbl}
\subsection{Detailed Description and Pseudocode}
\label{Appendix:descColoc Pbl}
In the following part, we give a detailed description of our algorithm \textsc{PerpExplore-Coloc-Pbl}. Initially, all the agents are in state \textbf{Initial}. In this state, an agent first declares its $Current-Node$ as $home$, after which initializes the variable $T_{time}=0$ (where, $T_{time}$ is the number of rounds passed since the agent has moved from state \textbf{Initial}) and gathers the ID of the remaining agents currently at $home$. Next, the agent with the lowest ID, i.e., $a_1$ moves to state \textbf{Leader}, the agent with the second lowest ID, i.e., $a_2$ moves to state \textbf{Follower-Find}, whereas the remaining agent, i.e., $a_3$ moves to state \textbf{Backup}. We now define an \textit{iteration} for this algorithm. An iteration is defined to be a collection of $4n+1$ consecutive rounds starting from the latest round where all 3 agents along with 2 pebbles are at $home$ in state \textbf{Initial}. Note that, a new iteration fails to execute if either at least one pebble or an agent gets destroyed by the Byzantine black hole in the current iteration. More precisely, the meaning of failing an iteration implies that an agent after ending its current iteration does not again start a new iteration by moving into state \textbf{Initial}.

Suppose, all the 3 agents along with 2 pebbles successfully execute the $i$-th iteration and reach $home$, i.e., neither any pebble nor any agent gets destroyed by the Byzantine black hole. In this situation, according to our algorithm, $a_1$ being the lowest ID, follows the following sequence $SEQ 1$, whereas $a_2$ being the second lowest ID follows $SEQ 2$, and the $a_3$ follows the sequence $SEQ 3$. The above sequences are as follows:

\begin{enumerate}
    \item $SEQ 1$: \textbf{Initial} $\rightarrow$ \textbf{Leader} $\rightarrow$ \textbf{Initial}.
    \item $SEQ 2$: \textbf{Initial} $\rightarrow$ $({Follower-SEQ})^n$ $\rightarrow$ \textbf{Initial}.
    \begin{enumerate}
        \item ${Follower-SEQ}$: \textbf{Follower-Find} $\rightarrow$ \textbf{Follower-Collect} $\rightarrow$ \textbf{Follower-Find}.
    \end{enumerate}
    \item $SEQ 3$: \textbf{Initial} $\rightarrow$ \textbf{Backup} $\rightarrow$ \textbf{Initial}.
\end{enumerate}

\noindent\textit{\underline{Execution of sequence $SEQ 1$}}: $a_1$ performs this sequence, in which it first changes to state \textbf{Leader}, after finding it as the lowest ID agent at $home$ in state \textbf{Initial}. In state \textbf{Leader}, the agent performs the following checks: if the current node is $home$, then it checks that if the current node has the agent with the second lowest ID, whereas the number of pebbles at $home$ is 2 and $T_{time}\geq 2$. If all these conditions are satisfied, then the agent waits till $T_{time}=4n$ after which it moves to state \textbf{Initial} during $T_{time}=4n+1$. Otherwise, if it finds that it is with the second lowest ID agent and the number of pebbles at $home$ is also 2, but $T_{time}<2$, in this case, the agent initializes $W_{time}=0$ (where $W_{time}$ is the waiting time of the agent) and then moves one hop along with the pebble to the next clockwise node. Otherwise, if the agent finds that neither it is with the second lowest ID agent nor the number of pebbles at $home$ is 2, then it waits for 5 rounds (i.e., waits when $W_{time}< 6$), after which if still the above condition persists, then concludes $S_{lft}$ as the node which is at a clockwise distance of $n-2$ from $home$ and $S_{rgt}$ the node at a clockwise distance of $n-1$. Further, it updates $W_{time}=0$ and moves to the next state \textbf{Detection}. 

On the contrary, if the current node of $a_1$ is not $home$, then also it performs the following checks: first, it checks whether it is with the second lowest ID agent (i.e., $a_2$ in this case), and if so, then initializes $W_{time}=0$ and then moves one hop clockwise to the next node along with the pebble it is accompanying. Otherwise, if it is not with the second lowest ID agent, then wait for 3 rounds (i.e., waits when $W_{time} < 4$), after which if still it is without the second lowest ID agent, then the agent moves to state \textbf{Report-Leader} while leaving the pebble (it is accompanying) at the current node and moving one hop to the clockwise node.

Note that until and unless there are no anomalies detected by the agent $a_1$, it ends an iteration, by changing to state \textbf{Initial} from \textbf{Leader}. Otherwise, if any anomaly is detected, then $a_1$ ends the current iteration by either changing to state \textbf{Detection} or \textbf{Report-Leader} from the state \textbf{Leader}, in which case, it never changes to state either \textbf{Leader} or \textbf{Initial}.

\noindent\textit{\underline{Execution of sequence $SEQ 2$}}: $a_2$ being the second lowest ID agent in state \textit{Initial}, executes this sequence in an iteration. On successful completion of this iteration, an agent starting from \textbf{Initial}, performs the sub-sequence ${Follower-SEQ}$, $n$ times and then again changes to state \textbf{Initial}. The sub-sequence ${Follower-SEQ}$ symbolizes the sequence \textbf{Follower-Find} $\rightarrow$ \textbf{Follower-Collect} $\rightarrow$ \textbf{Follower-Find}, which an agent performs $n$ times until it again changes to state \textbf{Initial} (i.e., ${Follower-SEQ}$ is denoted by ${Follower-SEQ}^n$). While executing ${Follower-SEQ}$, the agent in state \textbf{Follower-Find} checks, whether the current node does not contain the lowest ID agent, and the $Move$ parameter is also set to 0 (the parameter $Move$ is either 0 or 1, when 1 it symbolizes that the current node must contain the lowest ID agent, when 0, it means the current node is the adjacent counter-clockwise node with respect to the node which contains the lowest ID agent), if so then it moves one hop clockwise leaving the pebble at the current node, while updating $Move$ to 1. Otherwise, if it finds $Move$ to be 1 whereas the current node does not contain the lowest ID agent, then it stops the current iteration, detects that the current node is a Byzantine black hole, and starts performing perpetual exploration, avoiding the Byzantine black hole node. On the other hand, if the current node contains the lowest ID agent, then $a_2$ directly changes to state \textbf{Follower-Collect}. In state \textbf{Follower-Collect}, $a_2$ first moves one hop counter-clockwise, then in the current node either finds a pebble or not. If it finds a pebble, then moves one hop clockwise along with the pebble. Now, if $T_{time}=4n+1$ and the current node is $home$, then stops performing $Follower-SEQ$ and moves to state \textbf{Initial}. Otherwise, updates $Move$ to 0 and changes to state \textbf{Follower-Find}. If the pebble is not found, then it concludes the current node to be the Byzantine black hole and starts performing perpetual exploration avoiding this node.

Note that, the agent only changes to state \textbf{Initial} only when it reaches $home$ again, which is after it performs ${Follower-SEQ}$ for $n$ times starting from $SEQ 2$. If at any point, $a_2$ finds any irregularities, based on whether a pebble is present or not, then only it directly concludes the Byzantine black hole position. In all other cases, it follows the ${Follower-SEQ}$ sequence and changes to state \textbf{Follower-Collect}.

\noindent\textit{\underline{Exploration of sequence $SEQ 3$}}: This sequence is only performed by the highest ID agent, which is $a_3$ in this case. In this sequence, after being in state \textbf{Initial}, it changes to state \textbf{Backup}, in which the agent first waits at $home$ till $T_{time}=4n$, after which it checks the following details, and accordingly either moves to state \textbf{Initial} if no irregularities are detected, otherwise moves to state \textbf{Find-Pebble} or \textbf{Find-BH}, based on the irregularity it detects. 
\begin{itemize}
    \item The current node, i.e., $home$ has only the lowest and highest ID agents, i.e., $a_1$ and $a_3$, whereas the number of pebble at $home$ is 0. If this condition is satisfied, then the agent changes its state to \textbf{Find-Pebble}. Note that, this condition is satisfied only when $a_2$ fails to reach $home$ while performing this iteration, i.e., enters the Byzantine black hole, while $a_1$ has detected some anomalies, which instigates it to leave the pebble at the current position at which it detects the anomaly, and henceforth moves back to $home$ while in the state \textbf{Report-Leader}.   
    \item $home$ has only the highest and lowest ID agents, i.e., $a_1$ and $a_3$, but the number of pebbles at $home$ is 1, which implies that $a_2$ has failed to return to $home$ even when $T_{time}=4n$. In this case, $a_3$ concludes $S_{lft}$ to be the node which is at a clockwise distance of $n-2$ from $home$, whereas $S_{rgt}$ to be the node which is at a clockwise distance of $n-1$ from $home$. Finally, it updates $W_{time}=0$ and then moves to state \textbf{Detection}.
    \item $home$ has all three agents and the number of pebbles present at $home$ is 2. In this case, $a_3$ changes to state \textbf{Initial}, completing the iteration without detecting any anomalies.
    \item After $4n$ rounds, there is no other agent except $a_3$ present at $home$. In this case, the agent moves one hop clockwise, and then changes to state \textbf{Find-BH}. This case arises only when both the agents $a_1$ and $a_2$ fail to reach $home$ while performing this iteration.
\end{itemize}

Next we define the states \textbf{Report-Leader}, \textbf{Find-Pebble}, \textbf{Find-BH},  \textbf{Detection}. An agent moves to either of these states only when it detects some anomalies.

The state \textbf{Report-Leader}, is executed by only the lowest ID agent, i.e., $a_1$, while it is in state \textbf{Leader}. The anomaly detected is as follows, $a_1$,  after reaching a new node that is not $home$, along with the pebble it is carrying, waits for at most 3 rounds, for the second lowest ID agent, i.e., $a_2$ to arrive (which is in state \textbf{Follower-Find}). If $a_2$ does not arrive even after the waiting, $a_1$ concludes that the agent (i.e., $a_2$) has entered the Byzantine black hole, this triggers $a_1$ to move to state \textbf{Report-Leader}. In this state, it moves clockwise until it finds the highest ID agent, i.e., $a_3$ at $home$. Then waits until $T_{time}=4n$, after which it changes its state to \textbf{Find-Pebble}.

The state \textbf{Find-Pebble}, is performed together by only $a_1$ and $a_3$, i.e., the lowest and highest ID agent. This state can be reached by $a_1$ only via \textbf{Report-Leader} state, whereas by $a_3$ from the state \textbf{Backup}. The main idea for the agents in this state is to move counter-clockwise till they find the pebble left by $a_1$ when it detects some anomaly (i.e., $a_2$ has entered the Byzantine black hole) and move into state \textbf{Report-Leader}. This pebble acts as a marker which indicates that the Byzantine black hole is either the counter-clockwise neighbor of this node or, it is at a counter-clockwise distance of 2 hops from it. This node at which it finds the pebble is declared to be the new $home$, from which they conclude $S_{lft}$ and $S_{rgt}$ to be the nodes which are at a clockwise distance of $n-2$ and $n-1$, respectively from this new $home$. Finally, they update $W_{time}=0$ and then change to state \textbf{Detection}.
Note that, this state is performed by an agent during the last round of the current iteration, (i.e., when $T_{times}=4n+1$). 

The state \textbf{Find-BH}, is only executed by the highest ID agent, i.e., $a_3$. This state is executed by $a_3$, only when it finds no other agent at $home$, at the last round of the current iteration. This situation can only occur, when both the agents $a_1$ and $a_2$ have entered the Byzantine black hole. So, this triggers $a_3$ to change to state \textbf{Find-BH}, in which it moves along a clockwise direction until a pebble is found. Whenever a pebble is found, it concludes the next node to be the Byzantine black hole, and starts performing perpetual exploration avoiding this node.

Finally, the state \textbf{Detection}, is performed by an agent (i.e., $a_1$ or $a_3$) only if they know the position of $S_{lft}$ and $S_{rgt}$, which are not only two consecutive nodes in a suspicious region $S$. Moreover, these two nodes are exactly at a distance of clockwise $n-1$ and $n-2$ distance from $home$ (the $home$ can be the initial starting node, or it can be the updated $home$ as well). In this state, the agent with the lowest ID, i.e., $a_1$, moves clockwise till it reaches $S_{rgt}$, and then again returns back to $home$. After returning back, if it finds $a_3$, then again it performs the same procedure.

\begin{algorithm}\footnotesize
\caption{\textsc{PerpExplore-Coloc-Pbl}}
\label{algorithm: PerpExplore-Coloc-Pbl}
Input: $n,~k=3$;\\
States:\{\textbf{Initial, Leader, Follower-Find, Follower-Collect, Backup, Report-Leader,Find-Pebble, Detection, Find-BH}\}\\
\underline{In State \textbf{Initial}}:\\
$T_{time}=0$ \tcp{$T_{time}$ is the number of rounds elapsed since the agent has moved from state Initial}
Declare $Current-Node$ as $home$.\\
Gather the IDs of the remaining agents at $home$.\\
\uIf{lowest ID}
{
Move to state \textbf{Leader}.\\
}
\uElseIf{second lowest ID}
{
Set $Move=0$.\\
Move to state \textbf{Follower-Find}.\\
}
\Else
{
Move to state \textbf{Backup}.\\
}
\underline{In State \textbf{Leader}:}\\
\eIf{$Current-Node=home$}
{
\uIf{$ with~Second ~Lowest ~ID \wedge \#Pebble= 2 \wedge T_{time}\geq 2$ }
{
    Wait at the current node till $T_{time}=4n$.\\
    Move to state \textbf{Initial}.\\
    
}
\uElseIf{$ With~Second ~Lowest ~ID \wedge \#Pebble=2 \wedge T_{time}< 2$}
{
$W_{time}=0$.\\
Move one hop along with the pebble to the next node along clockwise direction.\\
}
\ElseIf{$ Not ~with~Second ~Lowest ~ID \lor  \#Pebble\neq 2$}
{
    \eIf{$W_{time} <6$}
    {
        $W_{time}= W_{time}+1$
    }
    {
        $S_{lft}=$ node at a clockwise distance of $n-2$ from $home$\\
        $S_{rgt}=$ node at a clockwise distance of $n-1$ from $home$\\
       Update $W_{time}=0$ then move to state \textbf{Detection}.\\
    }

}

}
{
\eIf{with second lowest ID}
{
$W_{time}=0$.\\
Move one hop along with the pebble to the next node along clockwise direction.\\
}
{
\eIf{$W_{time}<4$}
{
Stay at the current node, $W_{time}=W_{time}+1$.\\
}
{
Move to state \textbf{Report-Leader} while moving one hop in clockwise direction along without pebble.\\
}
}
}
\underline{In State \textbf{Follower-Find}:}\\
\uIf{not with lowest ID $\wedge$ $Move=0$}
{
Move one hop clockwise, without pebble and update $Move=1$.
}
\uElseIf{not with lowest ID $\wedge$ $Move=1$}
{
Detect the $Current-Node$ as Byzantine black hole and continue exploring the ring perpetually avoiding this node.\\
}
\Else
{
Move to state \textbf{Follower-Collect}.\\
}
\underline{In State \textbf{Follower-Collect}:}\\
Move one hop in counter-clockwise direction.\\
\eIf{pebble is found}
{
Collect the pebble, and move one hop clockwise along with the pebble.\\
\If{$T_{time}=4n+1 \wedge Current-Node=home$}
{
Change to state \textbf{Initial}.\\
}
Update $Move=0$ and change to state \textbf{Follower-Find}.\\
}
{
Detect the $Current-Node$ as Byzantine black hole and continue perpetually exploring the ring avoiding this node.
}
\end{algorithm}
\begin{algorithm}\footnotesize
\setcounter{AlgoLine}{53}
    \underline{In State \textbf{Report-Leader}:} \\ 
\If{$T_{time}<4n+1$}
{
Move clockwise until finds the agent with highest ID.\\
Wait until $T_{time}=4n$.\\
}
Change to state \textbf{Find-Pebble}.\\

\underline{In State \textbf{Backup}:}\\
Wait until $T_{time}=4n$.\\
\uIf{$Current-Node$ has only the lowest and highest ID agents and $\#Pebble =0$}
{
Move to state \textbf{Find-Pebble}.\\
}
\uElseIf{$Current-Node$ has only the lowest and highest ID agents and $\#Pebble =1$}
{
 $S_{lft}=$ node at a clockwise distance of $n-2$ from $home$\\
        $S_{rgt}=$ node at a clockwise distance of $n-1$ from $home$\\
       Update $W_{time}=0$ $home\_away=1$ and move to state \textbf{Detection}.\\
}
\uElseIf{$Current-Node$ has both lowest and second lowest ID agent and $\#Pebble=2$}
{
Change to state \textbf{Initial}.\\
}
\ElseIf{$Current-Node$ has no other agent}
{
Move in a clockwise direction and change to state \textbf{Find-BH}.\\
}

\underline{\textbf{In State \textbf{Find-BH}:}}\\

\eIf{a pebble is found}
{
Conclude the next node along clockwise direction is the Byzantine black hole, and continue perpetual exploration avoiding the Byzantine black hole node.\\
}
{
Move along a clockwise direction.\\
}

\underline{In State \textbf{Find-Pebble}:}\\
Move in a counter-clockwise direction.\\
\eIf{a pebble is found}
{
Declare $Current-Node$ as $home$.\\
$S_{lft}=$ node at a clockwise distance of $n-2$ from $home$\\
$S_{rgt}=$ node at a clockwise distance of $n-1$ from $home$\\
       Update $W_{time}=0$ and move to state \textbf{Detection}.\\
}
{
Move along a counter-clockwise direction.
}

\underline{In State \textbf{Detection}:}\\

\eIf{lowest ID}
{
Moves clockwise till the node at a distance $n-2$ from $home$ then returns back to $home$ along the same path and checks for the other agent.\\
If other agent is there it repeats the same procedure as stated in line 86 of Algorithm~\ref{algorithm: PerpExplore-Coloc-Pbl}. Otherwise detects $S_{rgt}$ as the Byzantine black hole and start exploring avoiding that node. 
}
{
Moves one hop counter-clockwise from $home$ then returns back to $home$ along the same path then waits for $2n-6$ rounds and checks for the other agent.\\
If other agent is there it repeats the same procedure as stated in line 90 of Algorithm~\ref{algorithm: PerpExplore-Coloc-Pbl}. Otherwise detects $S_{lft}$ as the Byzantine black hole and start exploring avoiding that node. 
}
 \end{algorithm}
 
On the contrary, if it does not find $a_3$, then it understands that $S_{rgt}$ is the Byzantine black hole node, and hence continues perpetual exploration avoiding this node. 

On the other hand, $a_3$ in this state moves counter-clockwise direction from $home$, i.e., reaches $S_{lft}$, after which it returns back to $home$ again following the same path and then
waits for $2n-6$ rounds at $home$ and then checks for the other agent, i.e., $a_1$.
 
 If it finds $a_1$, then again continues to perform the same movement, otherwise, it concludes that $S_{lft}$ is the Byzantine black hole node, and hence starts the perpetual exploration of the ring just by avoiding this Byzantine black hole node. 

\subsection{Correctness and Complexity}
In this section, we discuss the correctness and complexity of our algorithm \textsc{PerpExplore-Coloc-Pbl}.

\begin{lemma}\label{lemma: either ring is explored or BBH is detected}
    If no agent gets destroyed by the Byzantine black hole, while performing an iteration of Algorithm \ref{algorithm: PerpExplore-Coloc-Pbl}, then our algorithm ensures that in that iteration, either the ring $R$ is explored when no agent knows the exact Byzantine black hole location or, there is one agent that knows exactly the location of the Byzantine black hole.
\end{lemma}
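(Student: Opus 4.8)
The plan is to condition on whether the byzantine black hole erases any pebble during the iteration, using the hypothesis that no agent is consumed together with the fact that $home$ is safe. First I would fix the three roles that the alive agents take after state \textbf{Initial}: the lowest-Id agent $a_1$ runs $SEQ 1$ and in state \textbf{Leader} only ever steps clockwise, carrying its own pebble and pausing for $a_2$ at each node; the second-lowest-Id agent $a_2$ runs $SEQ 2$, repeating \emph{Follower-SEQ}, i.e.\ dropping its pebble on the current node, advancing one hop clockwise to meet $a_1$ in state \textbf{Follower-Find}, and later returning one hop counter-clockwise in state \textbf{Follower-Collect} to re-collect that pebble; and $a_3$ stays put at $home$ in state \textbf{Backup}. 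The key structural remark I would establish is that the \emph{only} pebble ever left unattended on a node of the suspicious region is $a_2$'s trailing pebble: $a_1$ keeps its pebble with it whenever it moves, $a_3$'s pebble never leaves the safe node $home$, and the sole branch where $a_1$ abandons its pebble is the transition to \textbf{Report-Leader}, which I handle separately below.

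Case 1: no pebble is erased during the iteration. Then no detection rule of Algorithm~\ref{algorithm: PerpExplore-Coloc-Pbl} can fire --- in particular $a_2$ always re-finds its pebble in \textbf{Follower-Collect} and always meets $a_1$ in \textbf{Follower-Find} --- so $a_1$ and $a_2$ execute $SEQ 1$ and $SEQ 2$ to completion. Using that an iteration lasts exactly $4n+1$ rounds and that one clockwise advance of the leader together with the follower's verification dance costs a fixed constant number of rounds, I would show that $a_1$ makes exactly one full clockwise loop of $R$, visiting $home, v_1, \dots, v_{n-1}$ before returning to $home$. Hence every node of $R$ is visited within the iteration (the black-hole node is crossed only while it is dormant, which is consistent with no agent being consumed), so the ring is explored, giving the first alternative.

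Case 2: some pebble is erased. By the structural remark this must be $a_2$'s trailing pebble, left on some node $u$ of the suspicious region while $a_2$ stepped one hop clockwise to the node $w$ where it met $a_1$. Since only the black hole can erase a pebble, $u = v_b$. When $a_2$ returns to $u$ in \textbf{Follower-Collect} it survives (no agent is consumed), finds no pebble, and by the corresponding rule declares its current node $u$ to be the black hole; thus one agent knows the exact location of $v_b$, the second alternative. It remains to dispose of the sub-case in which $a_1$ never meets $a_2$ and so moves to \textbf{Report-Leader}: under the no-consumption hypothesis $a_2$ is alive, so the only way it can fail to appear is that it has already halted after a correct detection, whence an agent again knows $v_b$ and the claim holds.

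The step I expect to be the main obstacle is the coverage argument of Case 1: carrying out the precise round-accounting of the leader--follower dance to certify that inside a single $4n+1$-round iteration the leader completes exactly one loop (and thus misses no node) rather than stalling, reversing, or over/under-shooting $home$. A second delicate point, needed to make the detection of Case 2 sound, is to verify that a missing pebble admits no cause other than erasure by $v_b$ --- that is, that the pebble bookkeeping in \textbf{Leader}, \textbf{Follower-Find} and \textbf{Follower-Collect} never lets $a_1$ or $a_3$ pick up $a_2$'s trailing pebble.
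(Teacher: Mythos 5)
Your proposal is correct and follows essentially the same route as the paper's proof: a case split on whether any pebble is erased, with the leader's uninterrupted clockwise tour giving exploration in the first case, and the observation that only $a_2$'s trailing (unattended) pebble can be erased — so $a_2$ detects $v_b$ upon returning in \textbf{Follower-Collect} — giving detection in the second. Your additional handling of the \textbf{Report-Leader} sub-case is a minor refinement consistent with the paper's argument.
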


\begin{proof}
    Suppose till the completion of $i$-th iteration $(i>0)$ no agent is destroyed by the Byzantine black hole, then in any iteration $0<j\le i$, the agent $a_1$ follows the sequence $SEQ1$, whereas $a_2$ follows the sequence $SEQ 2$, as instructed starting from the initial $home$ node. Considering this scenario we have two cases:
    \begin{itemize}
        \item \textit{No agents and no pebbles are destroyed:} In this case, while executing the sequence $SEQ 1$, the agent after changing its state to \textbf{Leader} from \textbf{Initial}, visits a new node after every 4 rounds in the clockwise direction. Since no agent has been destroyed in this iteration as well, so $a_1$ un-obstructively visits each node in a clockwise direction, and reaches $home$ to finally end this sequence $SEQ 1$, while again changing its state to \textbf{Initial}. This shows that at least one agent visits each node of $R$ in an iteration completing exploration of $R$.
        \item \textit{No agent gets destroyed but some pebbles are destroyed by the Byzantine black hole:} Note that, this situation can only occur when the Byzantine black hole consumes the pebble carried and released by $a_2$, while $a_2$ is not at the node containing the pebble, executing the sequence $SEQ 2$. It is because, in an iteration, only $a_1$ and $a_2$ moves away from $home$ while executing their respective sequences $SEQ 1$ and $SEQ 2$. And the pebble carried by $a_1$ ( i.e., the agent with the smallest ID) cannot be destroyed without destroying $a_1$, as $a_1$ always carries the pebble at each node while executing its respective sequence. So, in this situation, while $a_2$ ( i.e., the agent with the second lowest ID) in the state \textbf{Follower-Collect} moves back to the adjacent node in the counter-clockwise direction, in order to collect the already released pebble, absence of which triggers $a_2$ to conclude the current node to be the Byzantine black hole, and which leads the agent to immediately leave the Byzantine black hole node. In this case, the consumption of a pebble leads to the Byzantine black hole detection.
    \end{itemize}

\end{proof}

\begin{observation}
    If no agent gets destroyed by the Byzantine black hole while performing an iteration of Algorithm \ref{algorithm: PerpExplore-Coloc-Pbl} but some pebbles are destroyed by the Byzantine black hole, then by Lemma \ref{lemma: either ring is explored or BBH is detected}, one agent (more precisely, the agent with second lowest ID, i.e., $a_2$) detects the exact location of the Byzantine black hole. Further, this agent continues to perpetually explore the ring $R$ by avoiding the Byzantine black hole node. 
\end{observation}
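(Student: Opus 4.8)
The plan is to obtain this observation as an immediate consequence of Lemma~\ref{lemma: either ring is explored or BBH is detected}, whose case analysis already isolates exactly the scenario in the hypothesis. First I would invoke that lemma in the subcase ``no agent is destroyed but some pebble is consumed.'' As the lemma establishes, in an iteration only $a_1$ and $a_2$ leave $home$, and $a_1$ (the lowest-ID agent, executing $SEQ1$) carries its pebble at every node of its clockwise traversal; hence $a_1$'s pebble cannot vanish without $a_1$ itself being consumed, which the hypothesis forbids. Therefore the lost pebble must be the one released by $a_2$ while executing $SEQ2$, and consequently $a_2$ is the unique agent that can witness a missing pebble, doing so in state \textbf{Follower-Collect} when it steps one hop counter-clockwise to retrieve the token it had just deposited.

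Next I would argue that the node flagged by $a_2$ is \emph{exactly} the byzantine black hole. A deposited pebble is stationary, and during this iteration the only agent that ever interacts with $a_2$'s pebble is $a_2$; thus the sole mechanism by which it can disappear between its deposit and its attempted retrieval is the destructive (black-hole) nature of its host node acting on the stored token. Since $R$ contains a unique byzantine black hole $v_b$, the node at which $a_2$ finds its pebble missing must be $v_b$ itself. Moreover, by the leapfrog structure of $SEQ2$, at the moment of detection $a_2$ is standing precisely on that node (it has just moved counter-clockwise back onto the node where the pebble was left), so $a_2$ correctly identifies $v_b$ as its current node, establishing the first assertion.

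Finally I would address the perpetual-exploration claim. At the detection round the host node is necessarily in its normal mode --- otherwise $a_2$ would have been consumed, contradicting the hypothesis --- so $a_2$ safely takes its snapshot and, in the same round's move step, steps off $v_b$. Thereafter $a_2$ regards $R\setminus\{v_b\}$ as a path on the $n-1$ safe nodes and shuttles back and forth between the two neighbours of $v_b$, never re-entering $v_b$. Because $a_2$ never again steps onto $v_b$, its survival is independent of all future adversarial choices, and every node of $R$ other than $v_b$ is visited by $a_2$ infinitely often; since a single surviving agent suffices, this is exactly \textsc{PerpExploration-BBH}.

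The step I expect to be the main obstacle is the exactness of localization rather than the exploration routine: one must confirm that in the pebble model a missing token admits no interpretation other than ``this host is the black hole'' (there is no benign movement, reordering, or overwriting of a pebble available to the adversary, only destruction of node-stored information), and that $a_2$ occupies $v_b$ itself and not merely an adjacent node at the instant of detection, so that its subsequent avoidance rule is applied to the correct vertex. Once these two points are pinned down, the remaining arguments are routine.
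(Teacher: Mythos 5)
Your proposal is correct and follows essentially the same route as the paper: the paper justifies this observation by the second case of Lemma~\ref{lemma: either ring is explored or BBH is detected}, namely that only $a_2$'s deposited pebble can vanish without an agent being destroyed, so $a_2$ detects the anomaly in state \textbf{Follower-Collect} upon stepping back onto the node where it left the token, identifies that node as $v_b$, and leaves it to explore the remaining $n-1$ nodes perpetually. Your additional care about the exactness of localization and the safety of the detection round is consistent with the paper's model and does not change the argument.
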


\begin{lemma}\label{lemma: exactly one destoyed}
    If exactly one agent enters the Byzantine black hole while performing an iteration of Algorithm \ref{algorithm: PerpExplore-Coloc-Pbl}, then  within finite additional rounds any of the two following conditions hold:
    \begin{enumerate}
        \item The exact location of the Byzantine black hole is detected by at least one agent.
        \item All alive agents become co-located and they agree about two consecutive nodes $S_{lft}$ and $S_{rgt}$, one among which is the Byzantine black hole.
    \end{enumerate}
\end{lemma}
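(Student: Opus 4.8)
The plan is to do a case analysis based on which of the two moving agents ($a_1$ or $a_2$) enters the black hole, since $a_3$ stays at $home$ (a safe node) throughout the iteration executing $SEQ 3$ and thus cannot be consumed. So exactly one of $a_1, a_2$ is destroyed. I would organize the argument around the detection mechanisms already built into the algorithm's states, and track what the surviving agents observe at the end of the iteration (i.e., at $home$ when $T_{time}=4n$).

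**Case 1: $a_2$ (the follower) is destroyed.** Here $a_1$ continues its $SEQ 1$ traversal in state \textbf{Leader}. At some node (not $home$), $a_1$ waits for $a_2$ to catch up; since $a_2$ is gone, after its 3-round wait $a_1$ detects the anomaly, leaves its pebble at the current node as a marker, and moves into state \textbf{Report-Leader}, returning clockwise to $home$. Meanwhile $a_3$ waits at $home$ until $T_{time}=4n$. I would argue that at that point $home$ contains exactly $a_1$ and $a_3$ with a specific pebble count: if $a_1$ deposited its marker pebble, $home$ holds $1$ pebble and $a_3$ moves to \textbf{Find-Pebble}; the two agents then walk counter-clockwise together, find the marker, redefine $home$, and agree on $S_{lft}, S_{rgt}$ as the nodes at clockwise distance $n-2, n-1$ — establishing conclusion (2). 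Alternatively, if $a_2$ was consumed together with its released pebble at a node $a_1$ never had to report from, $home$ ends with $2$ pebbles and $a_3$ (via \textbf{Backup}) directly sets $S_{lft}, S_{rgt}$ and enters \textbf{Detection}, again giving conclusion (2). A subtlety I must handle carefully is the timing: I need $2n-6$ and $4n$-type round budgets to guarantee the alive agents are genuinely co-located at the marker/home node within finite additional rounds, and that the pebble counts ($1$ versus $2$) unambiguously distinguish these subcases.

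**Case 2: $a_1$ (the leader) is destroyed.** Now $a_2$ is following in state \textbf{Follower-Find}/\textbf{Follower-Collect}. When $a_1$ vanishes, $a_2$'s local check fails: either it moves to a node expecting to meet $a_1$ with $Move=1$ and finds no leader (detecting the current node as the black hole, giving conclusion (1) immediately), or — if $a_1$ was consumed at a node $a_2$ had not yet reached — $a_2$ eventually fails to locate an expected pebble in \textbf{Follower-Collect} and likewise declares the black hole, again conclusion (1). I would also cover the boundary possibility that $a_2$ simply finishes its $SEQ 2$ loop and returns to $home$ while $a_1$ does not: then $a_3$, finishing \textbf{Backup} at $T_{time}=4n$, sees $home$ with the wrong agent/pebble profile and the two survivors reconcile via \textbf{Detection} on two consecutive candidate nodes (conclusion (2)).

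**Main obstacle.** The hard part will not be the logic of which state detects what, but verifying the \emph{exhaustiveness} and \emph{timing consistency} of the sub-cases: I must confirm that in every way the adversary can schedule the single consumption (consuming the agent alone, or the agent that had just placed a pebble, at any node along the traversal), the pebble counts and co-location patterns observed at $home$ at round $T_{time}=4n$ fall into exactly one of the branches coded in states \textbf{Backup}, \textbf{Report-Leader}, \textbf{Find-Pebble}, \textbf{Find-BH}, and \textbf{Detection}, with no scenario slipping through where the survivors neither detect the hole nor agree on $S_{lft}, S_{rgt}$. Establishing that the marker-pebble invariant (that $a_1$'s deposited pebble is a reliable indicator localizing the hole to within two consecutive nodes) is preserved regardless of whether the adversary also erases information will be the crux, and I would lean on the fact that $a_1$ always carries its own pebble so it can only be lost together with $a_1$ itself.
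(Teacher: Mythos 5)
Your proposal follows essentially the same route as the paper: a case split on whether the leader $a_1$ or the follower $a_2$ is consumed, with $a_2$ detecting the exact black hole location via the missing-leader check in \textbf{Follower-Find} (conclusion 1), and with the surviving pair localizing the hole to two consecutive nodes $S_{lft},S_{rgt}$ via the marker pebble and the pebble-count/agent-count checks in \textbf{Report-Leader}, \textbf{Backup} and \textbf{Find-Pebble} (conclusion 2). The only quibble is that two of your auxiliary sub-cases under ``$a_1$ destroyed'' (a missing pebble in \textbf{Follower-Collect}, and $a_2$ completing $SEQ 2$ and returning to $home$ unaware) are not actually reachable in that case --- the \textbf{Follower-Find} check fires within a constant number of rounds --- so they are harmless over-coverage rather than a gap.
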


\begin{proof}
    Note that the agent that has entered the Byzantine black hole is either $a_1$ or $a_2$, based on which we have the following conditions:
    \begin{itemize}
        \item \textit{$a_1$ falls in to the Byzantine black hole}: This situation can only occur when $a_1$ following sequence $SEQ 1$ is in state \textbf{Leader}, visits the Byzantine black hole node along with its pebble. Now, based on the adversarial choice of the Byzantine black hole, the pebble may or may not be destroyed by the Byzantine black hole. Irrespective of which, $a_2$ within 3 additional rounds reaches the Byzantine black hole node while in the state \textbf{Follower-Find}, and the absence of $a_1$ triggers the agent $a_2$ to determine the current node as the Byzantine black hole, and immediately leave the current node.
        \item \textit{$a_2$ falls in to the Byzantine black hole}: This can only happen when $a_2$ is not with $a_1$ at the Byzantine black hole node. Otherwise, both agents will be destroyed, contradicting our claim. Note that, $a_2$ can either be with the pebble or alone, while it is destroyed by the Byzantine black hole. Also note that, since $a_2$ is executing the sequence $SEQ 2$, hence the distance between $a_1$ and $a_2$ can be at most 2. So during an iteration, if $a_2$ is destroyed by a Byzantine black hole, then the Byzantine black hole must be any one of the two consecutive counter-clockwise nodes, from the current position of $a_1$. If the current node is not $home$, in this case, $a_1$ gets to know about this fact, only when it finds the absence of $a_2$ even after waiting for 3 rounds, at the current node. Now, according to our algorithm, it leaves the pebble accompanied by it and moves from the current node in the clockwise direction to $home$ with state \textbf{Report-Leader}. After, reaching $home$, it waits until the end of this iteration, i.e., until $4n$ rounds from the start of this iteration and moves to state \textbf{Find-Pebble}. During the $4n+1$-th round from the start of the current iteration, $a_3$ finds that it is only with $a_1$, whereas no pebble exits, this leads the agent to change its state to \textbf{Find-Pebble}. In this state, both the agents move counter-clockwise from $home$ until they encounter a pebble. Note that, this pebble is the one left by $a_1$ after determining the fact that $a_2$ has entered the Byzantine black hole. So, now both these agents $a_1$ and $a_3$, declare the node with the pebble as $home$, whereas also denote the adjacent counter-clockwise node from the current node as $S_{lft}$, whereas the other node adjacent to $S_{lft}$ in the counter-clockwise direction as $S_{rgt}$. On the other hand, if the current node of $a_1$ is $home$, while $a_2$ is destroyed by the Byzantine black hole, then $a_2$ can never reach $home$ along with the pebble it was carrying by following the sequence $SEQ 2$. So, at the end of this iteration, both $a_1$ and $a_3$ find that there are two agents (i.e., $a_1$ and $a_3$) and one pebble, respectively at $home$. This leads both of them to conclude that, the adjacent counter-clockwise node is $S_{rgt}$, whereas the adjacent counter-clockwise node of $S_{rgt}$ is $S_{lft}$. 
    \end{itemize}
    So, in each case, we showed that either of our two conditions holds.
\end{proof}

\begin{remark}
\label{Remark: exploration with two agents}
    Following from Lemma \ref{lemma: exactly one destoyed}, if exactly one agent gets destroyed by the Byzantine black hole and the exact location of the Byzantine black hole is detected by at least one agent, then our algorithm ensures that the ring is perpetually explored by at least one alive agent while avoiding the Byzantine black hole node. On the contrary, all alive agents become co-located and they agree about two consecutive nodes $S_{lft}$ and $S_{rgt}$, among which one is a Byzantine black hole, then until and unless another agent falls into the Byzantine black hole, our algorithm ensures that the ring is perpetually explored, by the remaining alive agents. Note that, by Lemma \ref{lemma: exactly two agents destroyed} while anyone among them gets destroyed, then the other agent detects the exact Byzantine black hole node, and then further perpetually explores the ring $R$, avoiding the Byzantine black hole node.
\end{remark}

\begin{lemma}\label{lemma: exactly two agents destroyed}
   Algorithm \textsc{PerpExplore-Coloc-Pbl} ensures that if two among three agents get destroyed by the Byzantine black hole, then the remaining agent knows the exact location of the Byzantine black hole within some additional finite rounds without being destroyed.
\end{lemma}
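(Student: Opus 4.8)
The plan is to enumerate, using Lemma~\ref{lemma: exactly one destoyed}, the only ways in which a second agent can be lost, and then to reduce each to a detection routine already present in the algorithm. The first observation is that during an ordinary iteration only $a_1$ and $a_2$ leave the (safe) node $home$, so two agents can be lost inside a single iteration only if that pair is $\{a_1,a_2\}$; furthermore they must then be consumed while \emph{co-located} at $v_b$, because if $a_1$ alone enters $v_b$, Lemma~\ref{lemma: exactly one destoyed} shows that $a_2$ reaches that node in state \textbf{Follower-Find} within three rounds and, finding $a_1$ gone, declares it the black hole (only one loss). Hence the two-loss event falls into exactly two cases: \textbf{(A)} $a_1$ and $a_2$ are consumed in the same round at $v_b$, leaving $a_3$; or \textbf{(B)} a first loss occurs (necessarily of $a_2$, which by Lemma~\ref{lemma: exactly one destoyed} forces the survivors $a_1,a_3$ to co-locate and fix two consecutive candidates $S_{lft},S_{rgt}$ with $v_b\in\{S_{lft},S_{rgt}\}$) and the second loss then happens inside the subsequent \textbf{Detection} phase.

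For Case \textbf{(B)} I would argue as follows. In \textbf{Detection} the lower-ID survivor walks clockwise to $S_{lft}$ (the node at clockwise distance $n-2$) and back, while the higher-ID survivor steps one hop counter-clockwise to $S_{rgt}$ and back; since exactly one of $S_{lft},S_{rgt}$ equals $v_b$ and each agent touches only one candidate, \emph{at most} the single agent assigned to $v_b$ can be consumed, and the other travels a provably safe route. The built-in synchronisation (the counter-clockwise agent idles $2n-6$ rounds so that both re-inspect $home$ in the same round) then lets the survivor read off the answer from the absence of its partner: $S_{rgt}=v_b$ when the higher-ID agent is missing, and $S_{lft}=v_b$ when the lower-ID agent is missing. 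Thus the survivor learns $v_b$ exactly, in a bounded number of rounds, and is itself never endangered because it only ever visits the safe candidate.

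Case \textbf{(A)} is where the real work lies, and I expect it to be the main obstacle. The idea is to maintain the invariant that, throughout the Leader--Follower dance, whenever $a_2$ advances clockwise to meet $a_1$ it has just released its pebble on the node immediately counter-clockwise of the meeting node, while every earlier pebble has already been re-collected. Granting this invariant, at the unique round in which $a_1$ and $a_2$ are co-located at $v_b$ a follower pebble sits on the counter-clockwise neighbour of $v_b$; after $4n$ rounds $a_3$ finds itself alone at $home$, enters \textbf{Find-BH}, scans clockwise, meets this pebble first, outputs $v_b$, and halts one node short of it, so it survives. The delicate points, which the full proof must settle, are precisely (i) proving the pebble invariant and that $a_2$ has cleared all trailing pebbles, so that the first pebble $a_3$ meets is unambiguous, and (ii) the boundary configurations, in particular when $v_b$ is adjacent to $home$ so that the marker pebble lies at $home$ itself; there the argument must have $a_3$ detect the surplus follower pebble left at $home$ \emph{before} stepping onto $v_b$, so that safety (``without being destroyed'') is preserved. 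Once both cases are dispatched, the lemma follows, since Cases \textbf{(A)} and \textbf{(B)} are exhaustive by Lemma~\ref{lemma: exactly one destoyed}.
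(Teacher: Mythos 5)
Your two-case decomposition (both agents lost within one iteration versus a first loss followed by a second loss during \textbf{Detection}) is exactly the paper's, and your treatment of Case (B) matches the paper's argument essentially verbatim: the two survivors probe disjoint candidates $S_{lft}$ and $S_{rgt}$, so at most the agent assigned to $v_b$ can disappear, and the synchronised return to $home$ lets the other read off the answer. The problem is Case (A), which you yourself declare to be ``where the real work lies'' and then leave open: you state the pebble invariant and the boundary configuration as points ``the full proof must settle'' without settling them. As submitted, that is a gap, not a proof. The paper closes Case (A) by the single observation that $a_2$ in state \textbf{Follower-Find} always advances clockwise \emph{without} its pebble, having just deposited it on the node it came from; hence at the moment $a_2$ is consumed at $v_b$ its pebble sits on the counter-clockwise neighbour of $v_b$, and all earlier follower pebbles have already been re-collected by the \textbf{Follower-Collect} step, so the first pebble $a_3$ meets while scanning clockwise in \textbf{Find-BH} is unambiguous. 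That one-line argument is what your point (i) is asking for, and it is not difficult --- you should supply it rather than flag it.

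There is also a concrete error in your reduction of Case (A) to the co-located sub-case. You argue that if $a_1$ alone enters $v_b$ then $a_2$ ``reaches that node in state \textbf{Follower-Find} within three rounds and, finding $a_1$ gone, declares it the black hole (only one loss).'' This overlooks that $a_2$ must physically step onto $v_b$ to make that observation, and the adversary is free to reactivate the black hole in precisely that round, consuming $a_2$ before it can declare anything. So a double loss within one iteration does \emph{not} require simultaneous co-located consumption; it also arises when $a_1$ is consumed first and $a_2$ is consumed one to three rounds later upon arrival at the same node. Fortunately this does not change the conclusion --- in every such sub-case $a_2$'s pebble still lies on the counter-clockwise neighbour of $v_b$, because $a_2$ never carries its pebble onto the node where it expects to find $a_1$ --- but your case analysis as written silently excludes a reachable execution. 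Your point (ii), the configuration where $v_b$ is the clockwise neighbour of $home$ so that the marker pebble sits at $home$ itself and $a_3$ steps onto $v_b$ before its first pebble check in \textbf{Find-BH}, is a legitimate concern that the paper's own proof does not address either; but raising it without resolving it does not discharge the ``without being destroyed'' clause of the lemma.
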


\begin{proof}
    Let us suppose, by contradiction, there exists a round $r$ ($>0$), within which two agents get destroyed by the Byzantine black hole, whereas the alive agent is unable to detect the exact location. Now we have the following cases:

    \begin{itemize}
        \item \textit{Both agents get destroyed by the Byzantine black hole in the same iteration}: Note that in this case, the destroyed agents are $a_1$ and $a_2$, respectively, which while executing the sequences $SEQ 1$ and $SEQ 2$ enters the Byzantine black hole. Also, note that $a_1$ and $a_2$ can only be together at a node when $a_2$ has left the pebble it is carrying in the counter-clockwise node. So, now $a_3$ after the end of this iteration, finds that no agent among $a_1$ and $a_3$ has reached $home$, which triggers $a_3$ to change to state \textbf{Find-BH}. In this state, it moves clockwise, until it finds a pebble (this pebble is the one left by $a_2$ in the counter-clockwise node of the Byzantine black hole). Whenever a pebble is found, $a_3$ concludes that the next node is the Byzantine black hole, which contradicts our claim.
        \item \textit{There exists an iteration where exactly one agent is destroyed and the other one is destroyed during round $r$:} Now, in this case, let $r'< r$ be a round when exactly an agent is destroyed during an iteration, say $i >0$. Then by Lemma~\ref{lemma: exactly one destoyed}, within finitely many additional rounds $r_0$, where $r_0+r' < r$, both the alive agents become co-located and agree about two consecutive nodes $S_{lft}$ and $S_{rgt}$, one among which must be the Byzantine black hole. Observe that according to our algorithm, the node that is clockwise adjacent to $S_{rgt}$ is determined to be the $home$ (if not already so). Note that in the other case according to Lemma ~\ref{lemma: exactly one destoyed} where after exactly one agent is destroyed by the Byzantine black hole during an iteration at a round $r'<r$, there is an agent that knows the exact location of the Byzantine black hole within a finite additional round and the agent starts exploring the ring avoiding the Byzantine black hole and so even after round $r$ it stays alive knowing the exact location of the Byzantine black hole contradicting our assumption. Now, when both the agents are co-located during a round say $r_0+r'<r$ and know that the Byzantine black hole must be any of the two consecutive nodes $S_{lft}$ or $S_{rgt}$. In this case, they move into state \textbf{Detection}. After which one agent moves clockwise until it reaches $S_{lft}$ and the other one moves counter-clockwise until it reaches $S_{rgt}$ and moves back to $home$. The agent that visits $S_{rgt}$ waits for the other agent at $home$ for $2n-6$ rounds which is sufficient for the other agent to return back at $home$. Now when during round $r$, another agent among these two gets destroyed, within at most $(2n-4)$ rounds (among the total $2n-6$ rounds, within which both agents are to meet) they fail to meet at $home$. Which triggers the remaining alive agent to determine exactly which one among $S_{lft}$ or $S_{rgt}$ is the Byzantine black hole. This also contradicts the claim we made earlier. Thus, the algorithm \textsc{PerpExplore-Coloc-Pbl} ensures that, if two among three agents get destroyed by the Byzantine black hole, the remaining agent must know the exact location of the Byzantine black hole in some finite additional rounds without getting destroyed.
    \end{itemize}
\end{proof}
\label{Appedix: coloc pbl correct}

\section{Algorithm \textsc{PerpExplore-Scat-Pbl}}
\label{Appendix:perpexplore scat pbl}
\subsection{Detailed description and Pseudocode of the Algorithm}
\label{appendix:pseudo scat pbl}
In this section, we discuss the model where the agents are placed arbitrarily along the nodes of the ring $R$ (note that each such node must be a safe node), where each agent has a movable token, which it can carry along with it, and acts as a mode of inter agent communication. Moreover, the agent can gather the IDs of other agents which are currently at the same node at the same round. With this context, in the following part we show that 4 scattered agents with a pebble each is sufficient to solve \textsc{PerpExploration-BBH} on $R$, using our algorithm \textsc{PerpExplore-Scat-Pbl}.

If 4 agents are scattered in more than one nodes initially then there are  3 cases. Firstly, all four agents are at different nodes. Secondly, four agents are scattered in three different nodes nodes initially. For this case there exists exactly one node with two agents and remaining two nodes with exactly one agents each. In the third case, four agents are scattered in two nodes initially. In this case either each of the two nodes contains exactly two agents or, one node contains three agents and the other one has exactly one agent. We here describe and present the pseudocode considering the first case only. The case with 3 agents in one node can be dealt by instructing the co-located agents to execute \textsc{PerpExplore-Coloc-Pbl} when there are three agents on the current node (i.e., from the beginning). The algorithm we discuss here can also be used for the remaining cases with slight modifications. These modifications are described in the Remark~\ref{remark:modified perpexplore-scat-pbl}

The main idea of the \textsc{PerpExplore-Scat-Pbl} algorithm is that 4 agents perpetually explore the ring, when no agents are destroyed by the Byzantine black hole. Our algorithm ensures that during this exploration at most one agent can be destroyed. In this scenario, the remaining agents within further finite time, gather at a single node (which is the starting node of either of these 4 agents, hence it is safe), and after which they together start executing the algorithm \textsc{PerpExplore-Coloc-Pbl}. 

\begin{remark}\label{remark:3 pebble not used}
    As mentioned in \textsc{PerpExplore-Coloc-Pbl} requires, 3 agents and 2 pebbles to perform \textsc{PerpExploration-BBH}, but in this case we instruct the agents to perform \textsc{PerpExplore-Coloc-Pbl}, whenever three agents and at least three pebbles are able to gather at a node. These agents performs \textsc{PerpExplore-Coloc-Pbl} with a slight modification that the excess pebble (or pebbles) remain at their gathered node, which they term as $home$. So, all the conditions that are mentioned, remains same in this case as well, only the condition where an agent changes to certain state $s_i$ by observing $t_i$ (say) number of pebbles at $home$, transforms to the fact that the agent changes to state $s_i$ by observing $t_i+x$ (where $x=\text{total \# pebbles}-2$) number of pebbles at $home$. 
\end{remark}

At the beginning of the Algorithm \ref{algorithm: PerpExplore-Scat-Pbl} all agents are in state \textbf{Initial1}. In this state, the agents declare their current node as $home$, moreover they initialize the variables $size=0$ and $s=0$, where $size$ variable is used by the agents in order to store the length of the path between it's own $home$ and the nearest $home$ of  another agent in the clockwise direction. Further the variable $s$ indicates the state transition. If the agent is at state \textbf{Forward} and $s=$ 0 then the agent moved from state \textbf{Initial1} to its current state (i.e., state \textbf{Forward}), otherwise if the agent is at state \textbf{Forward} and $s=1$ then this indicates it changes to state \textbf{Forward} from state \textbf{Wait2}. An agent $a_1$ in state \textbf{Forward}, initializes $W_{time}=0$ (which stores the waiting time) and checks whether $s=0$ or $s=1$. Note that, if $s=0$ and the current node is $home$, then the $a_1$ is first instructed to leave the pebble at current node and then move clockwise direction, after increasing $size$ by 1. The next node where $a_1$ has reached can either be $home$ of another agent, say $a_2$, or it is not $home$ for any other agents. For the first case $a_1$ would find a pebble on its $Current-Node$ which was left by $a_2$. In this case $a_1$ updates $s$ to 1 and moves into state \textbf{Wait1}. On the other hand, if the $Current-Node$ is not $home$ of any other agents then $a_1$ finds no pebble at the $Current-Node$ and thus moves clockwise after increasing the variable $size$ by 1.
Note that, when $s=0$ and an agent is in state \textbf{Forward} it always finds another pebble on some node, if it is not destroyed (this pebble must be left by another agent which is simultaneously executing Algorithm \ref{algorithm: PerpExplore-Scat-Pbl} and currently in state \textbf{Forward}). This node must be the clockwise nearest $home$ (i.e., $home$ of another agent which is clockwise nearest to own $home$). Now since during each move at state \textbf{Forward}, an agent increases the variable $size$ by 1, the $size$ variable must store the length between an agents own $home$ and the clockwise nearest $home$ when the agent finds another pebble. Beyond this point (i.e., when $s=1$) the agents will use the $size$ variable as reference to calculate how much it should move, as after this, there is a possibility that the agent may not find a pebble at the clockwise nearest $home$. 

On the contrary, if $s=1$, this implies the agent already knows how much distance it should travel from its own $home$ in order to reach the clockwise nearest $home$. So, now if the current node is $home$ it just moves clockwise by initializing the variable $distance$ to 1 (where the variable $distance$ stores the amount of distance the agent has traversed from its own $home$). Otherwise, if current node is not $home$, then the agent continues to move clockwise while updating $distance$ by 1, until it reaches the clockwise nearest $home$ (i.e., when $distance=size$), after which it further changes to state \textbf{Wait1}. 

In state \textbf{Wait1}, the agent is instructed to wait at the current node (more specifically, the current node is the clockwise nearest $home$ of the agent) for $n-size-1$ rounds. After which it changes to state \textbf{Fetch}. Note that, if each agent starts executing the state \textbf{Forward} at the same round, this implies that these agents change to state \textbf{Fetch} together at the same round (which is precisely at the $n+1$-th round since the start of state \textbf{Forward}).
Thus we have the following observation.
\begin{observation}
    \label{obs: start of fetch} If all agents moved into the \textbf{Forward} state together at a round $t$ then, all of them move to state \textbf{Fetch} together at the round $t+n+1$.
\end{observation}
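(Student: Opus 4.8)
The plan is to reduce the claim to a single invariant: the number of rounds an agent spends between entering state \textbf{Forward} and entering state \textbf{Fetch} is a constant that does not depend on the agent. Since, by hypothesis, all agents enter \textbf{Forward} at the same round $t$ and the agents are synchronous, this immediately forces them to enter \textbf{Fetch} at a common round, which I will pin down to be $t+n+1$. The only agent-dependent quantity in the analysis is $d$, the clockwise distance from an agent's own $home$ to the clockwise-nearest $home$ (i.e.\ the starting node of the next agent in the clockwise direction); the heart of the argument is that the total round count is independent of $d$.

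First I would track the time spent in \textbf{Forward}. Consider an agent entering \textbf{Forward} at round $t$ with $s=0$ (the first entry, from \textbf{Initial1}). It drops its pebble at $home$ and advances one node clockwise per round, incrementing $size$ on each move, and it does not change state until it reaches a node already carrying a pebble. By the preceding discussion such a node exists and is precisely the clockwise-nearest $home$, which sits at distance $d$; hence the agent reaches it after exactly $d$ moves, at which point $size=d$, and it transitions to \textbf{Wait1}. I would then observe that subsequent entries into \textbf{Forward} (those with $s=1$, coming from \textbf{Wait2}) behave identically in terms of elapsed rounds: the agent now advances using the counter $distance$ until $distance=size=d$, again reaching the clockwise-nearest $home$ after exactly $d$ moves before switching to \textbf{Wait1}. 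Thus, in either case, the agent arrives at \textbf{Wait1} after the same number of rounds and with $size=d$.

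Next I would account for \textbf{Wait1}: by construction the agent waits there for $n-size-1 = n-d-1$ rounds before moving to \textbf{Fetch}. Adding the two phases, the agent-dependent terms cancel, $d+(n-d-1)=n-1$, so the travel-plus-wait count is the constant $n-1$; together with the fixed one-round transitions out of \textbf{Forward} and out of \textbf{Wait1}, the total from entering \textbf{Forward} to entering \textbf{Fetch} equals $n+1$, independent of $d$. Applying this to every agent and using that all of them started \textbf{Forward} at round $t$ yields that all of them reach \textbf{Fetch} at round $t+n+1$.

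The step I expect to require the most care is the cancellation bookkeeping, namely verifying that the number of moving rounds in \textbf{Forward} equals $size$ (so that it exactly offsets the $n-size-1$ wait), and doing so uniformly across the $s=0$ and $s=1$ entries. The synchronous, lock-step movement of all agents is what guarantees that each agent's clockwise-nearest $home$ indeed carries a pebble at the moment the agent arrives; this is where the undestroyed regime and the simultaneous start are used. A final off-by-one check on exactly when each state change registers in the round counter is what pins the total to $t+n+1$ rather than an adjacent value.
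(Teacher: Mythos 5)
Your proposal is correct and follows essentially the same reasoning the paper uses to justify this observation: the agent spends $size$ rounds moving clockwise in \textbf{Forward} to the clockwise-nearest $home$, then waits in \textbf{Wait1} until $W_{time}$ reaches $n-size$, so the agent-dependent terms cancel and every agent enters \textbf{Fetch} exactly $n+1$ rounds after entering \textbf{Forward}. The paper treats this as an immediate consequence of the state descriptions rather than giving a separate proof, so your more explicit bookkeeping (including the uniform handling of the $s=0$ and $s=1$ entries) is just a careful write-up of the same argument.
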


In state \textbf{Fetch}, the agent first checks if the current node has a pebble, if so then it further checks whether current node is $home$ or not. If the current node is not $home$, then it moves counter-clockwise with pebble, until it reaches $home$. Otherwise, if current node is $home$ then it checks whether there are more than one pebble present at the current node or not. If so, then the agent changes its state to \textbf{Gather1}. Otherwise, if the current node does not have more than one pebble, then it initializes $W_{time}$ to 0 and changes to state \textbf{Wait2}.

In state \textbf{Wait2}, the agent first checks whether $W_{time}<2n-size$, if so then it further checks whether $\#Agent$ at current node is exactly 1. If these two conditions are satisfied then the agent just increments $W_{time}$ by 1. On the other hand, if it finds that $W_{time}<2n-size$ but $\#Agent$ at the current node is 2, then it changes its state to \textbf{Gather2} and moves clockwise and updates $W_{time}$ to 0. Otherwise, if it finds $W_{time}<2n-size$ but $\#Agent$ at the current node is 3 then it updates $W_{time}$ to 0, whereas changes its state to \textbf{Coloc}. Lastly, whenever it finds $W_{time}=2n-size$, it changes its state to \textbf{Forward}.

Note that if no agent gets destroyed by the Byzantine black hole, then our algorithm guarantees that an agent can neither changes its state to \textbf{Gather1} or in state \textbf{Gather2}. It is because, an agent, say $a_1$ can only find more than one pebble at its $home$ node, only when another agent say $a_2$, which was supposed to \textit{fetch} the pebble left by $a_1$ at its $home$, is destroyed. In this case, whenever $a_1$ returns while in state \textbf{fetch} with another pebble, it finds more than one pebble. So the first agent, which first finds this anomaly changes its state to \textbf{Gather1} from the state \textbf{Fetch}. Now this agent must initiate gathering with the remaining two agents (as it is the one which has first detected the \textit{anomaly} of more than one pebble at its $home$). Note that, while $a_1$ starts its gathering phase, moving in a clockwise direction, there is a possibility that other agents are still moving either in clockwise or counter-clockwise direction. So, to address this challenge, the algorithm instructs $a_1$ (i.e., the agent which is in state \textbf{Gather1}) to wait at the current node (i.e., its $home$) for precisely $n-size-1$ rounds, and then start moving in clockwise direction. An agent moves to state \textbf{Gather1} at $(n+2+size$)-th round from the round when all agents changed state to \textbf{Forward} ($n+1$ round to move into \textbf{Fetch} state, $size$ rounds to reach $home$ and one round to change state to \textbf{Gather1}). Then waits for another $n-size-1$ before it starts moving. So it starts moving at $(2n+2)-$th round  from the round when all agents changed state to \textbf{Forward} together the last time. Now, an agent reaches its $home$ in state \textbf{Fetch} in round $n+1+size (\le 2n-1 <2n+2$,\ as $size \le n-2$) after the round when all agents changed to state \textbf{Forward} together the last time. So, the waiting time of $a_1$, being in state \textbf{Gather1} guarantees that all the other alive agents reach their respective $home$ while in state \textbf{Fetch}. Now, in order to avoid further movement of remaining agents (since these agents are yet to detect any anomaly, they might change to state \textbf{Forward} and starts moving, which makes gathering a bit challenging), our algorithm instructs the remaining agents to wait at the current node for a certain number of rounds (i.e., more precisely, $2n-size-1$). This waiting period is sufficient, for the agent $a_1$ to meet with the other alive agents while in state \textbf{Gather1}. Note that, if no agent detects any anomaly, then they directly change their state to \textbf{Wait2} from the state \textbf{Fetch}. If no such agent exists which encounters any anomaly and changes its state to \textbf{Gather1} or \textbf{Gather2} then this waiting time of $2n-size-1$ rounds, also ensures that all of them must move to state \textbf{Forward} at the exact same round (i.e, ($3n+2$)-th round from the previous time it moved to state \textbf{Forward})\footnote{At $(n+1)$-th round an agent moves to state \textbf{Fetch} after that it moves for $size$ rounds to reach its $home$ in counter-clockwise direction and changes to state \textbf{Wait2}. So at round $n+3+size$ an agent changes to state \textbf{Wait2}. After that waits for further $2n-size-1$ rounds and changes to state forward, taking a total of $3n+2$ rounds }. Now we have the following observation
\begin{observation}
\label{obs: move to state forward again}
    Let $t$ be a round when all alive agents moved into state \textbf{Forward}. If no agent detects any anomaly (i.e., no agent finds more than one pebble at home) then, at round $t+3n+2$ all the agents together move into state \textbf{Forward} again.
\end{observation}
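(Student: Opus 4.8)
The plan is to follow a single alive agent through one full revolution of the nominal state cycle $\textbf{Forward}\to\textbf{Wait1}\to\textbf{Fetch}\to\textbf{Wait2}\to\textbf{Forward}$, show that the number of rounds it spends is exactly $3n+2$, and—this is the real content—show that this count does not depend on the agent's own $size$ value. Since by hypothesis all alive agents enter \textbf{Forward} together at round $t$, a size-independent cycle length forces them to re-enter \textbf{Forward} together at round $t+3n+2$. So the whole argument reduces to two cancellation computations plus a check that the no-anomaly hypothesis keeps every agent on this cycle.

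First I would invoke Observation~\ref{obs: start of fetch}, which already gives that every alive agent enters state \textbf{Fetch} together at round $t+n+1$. It is worth recording explicitly \emph{why} this synchronization survives the fact that different agents have different clockwise gaps: in \textbf{Forward} an agent walks exactly $size$ hops clockwise to reach its clockwise-nearest $home$, and in \textbf{Wait1} it then waits exactly $n-size-1$ rounds, so the combined duration $size+(n-size-1)=n-1$ is the same for every agent irrespective of its $size$. This cancellation is the template for the second half of the cycle, and I would state it as the key structural fact.

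Next I would analyze the $\textbf{Fetch}\to\textbf{Wait2}\to\textbf{Forward}$ portion under the no-anomaly hypothesis. The hypothesis that no agent ever finds more than one pebble at its $home$ rules out the transition into \textbf{Gather1}; moreover it implies no agent has been consumed during this cycle (a consumed agent would leave its $home$'s pebble un-fetched, which the neighbouring agent would later perceive as a surplus pebble at its own $home$, i.e.\ exactly such an anomaly). Consequently each $home$ is occupied during \textbf{Wait2} only by its owner, so $\#Agent=1$ there and no agent transitions into \textbf{Gather2} or \textbf{Coloc}; and in \textbf{Fetch} each agent recovers the pebble from its clockwise-nearest $home$, so no agent declares a black hole. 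Hence every agent follows the nominal path. In \textbf{Fetch} it carries that pebble exactly $size$ hops counter-clockwise back to its own $home$ ($size$ rounds), and in \textbf{Wait2} it waits until $W_{time}=2n-size$ ($2n-size$ rounds); once again the sum $size+(2n-size)=2n$ is independent of $size$.

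Adding the two halves, the agent enters \textbf{Fetch} at round $t+n+1$ and then spends a further size-independent total of $2n$ rounds in \textbf{Fetch} and \textbf{Wait2} (up to the fixed per-transition offsets), placing its re-entry into \textbf{Forward} at round $t+3n+2$, which matches the stated count of $3n+2$. Because this schedule is identical for every alive agent and every segment length is independent of the individual $size$, all of them re-enter \textbf{Forward} at the common round $t+3n+2$. The main obstacle is precisely the worry that distinct gap lengths would de-synchronize the agents; the resolution is that the waiting times are engineered to cancel the variable travel times (\textbf{Wait1} waits $n-size-1$ against $size$ travel, \textbf{Wait2} waits $2n-size$ against $size$ travel), and I would make this cancellation, together with the branch-exclusion check afforded by the no-anomaly hypothesis, the centerpiece of the proof.
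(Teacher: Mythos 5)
Your proposal is correct and follows essentially the same argument as the paper, whose justification is the footnote computing $(n+1)+size+1+(2n-size-1)+1=3n+2$ via exactly the cancellation of $size$ between travel time and the engineered waiting times in \textbf{Wait1} and \textbf{Wait2} that you identify as the centerpiece. Your additional branch-exclusion check (ruling out \textbf{Gather1}, \textbf{Gather2}, \textbf{Coloc} and black-hole declarations under the no-anomaly hypothesis) is a worthwhile elaboration the paper leaves implicit, though note the hypothesis does not strictly imply that no agent was consumed in the cycle (a consumption in state \textbf{Fetch} is only detected in the following cycle); this does not affect the conclusion for the alive agents.
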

\begin{algorithm}[]\footnotesize
\caption{\textsc{PerpExplore-Scat-Pbl}}
\label{algorithm: PerpExplore-Scat-Pbl}
    Input: $n$\\
    State:\{\textbf{Initial1, Forward, Fetch, Wait1, Wait2, Coloc, Gather}\}\\
    \underline{In State \textbf{Initial1:}}\\
    Declare $Current-Node$ as $home$.\\
    Initialize $size=0$ and $s=0$, then change to state \textbf{Forward}.\\
    \underline{In State \textbf{Forward:}}\\
    Initialize $W_{time}=0$.\\
    \eIf{$s=0$}
    {

    \uIf{$Current -Node$ is not $home$ and has a pebble}
    {
       update $s=1$ and
        Change state to \textbf{Wait1}\\
    }
    \uElseIf{$Current -Node$ is $home$}{
        Leave the pebble at $home$.\\
    Move in a clockwise direction and update $size=size+1$.\\
    }
   \Else{
     Move in a clockwise direction and update $size=size+1$.\\
    }
    
    }
    {
    \uIf{$Current-Node$ is $home$}
    {
       Initialize $distance=1$ and move in a clockwise direction leaving the pebble.\\
    }
    \uElseIf{$distance< size$}
    {
    Move in a clockwise direction and update $distance = distance +1$.\\
    }
    \ElseIf{$distance=size$}
    {
     Change State to \textbf{Wait1}.\\
    }
    }
    \underline{In State \textbf{Wait1}:}\\
    \eIf{$W_{time}<n-size$}
    {
    $W_{time}=W_{time}+1$.\\
    }
    {
     Change state to \textbf{Fetch}.\\
    }
    \underline{In State \textbf{Fetch}:}\\
    \eIf{$Current-Node$ has a pebble}
    {
    \eIf{$Current-Node$ is not $home$}
    {
    Move counter-clockwise with the pebble.\\
    }
    {
    \eIf{$home$ has more than one pebble}
    {
    Change to state \textbf{Gather1}.\\
    }
    {
    Initialize $W_{time}=0$ and change to state \textbf{Wait2}.\\
    }
    }
    }
    {
    \eIf{$Current-Node$ is not $home$}
    {
    Move in a counter-clockwise direction.\\
    }
    {
    Initialize $W_{time}=0$ and change to state \textbf{Wait2}.\\
    }
    }
    \underline{In State \textbf{Wait2}:}\\
    \uIf{$W_{time}<2n-size \wedge \#Agent$ at $Current-Node$ is 1}
    {
    $W_{time}=W_{time}+1$.\\
    }
    \uElseIf{$W_{time}<2n-size \wedge \#Agent$ at $Current-Node$ is 2}
    {
    Initialize $W_{time}=0$ and change to state \textbf{Gather2} and moves clockwise.\\
    }
    \uElseIf{$W_{time}<2n-size \wedge \#Agent$ at $Current-Node$ is 3}
    {
    Initialize $W_{time}=0$ and change to state \textbf{Coloc}.\\
    }
    \ElseIf{$W_{time}=2n-size$}
    {
    Change to state \textbf{Forward}.\\
    }
    \underline{In State \textbf{Gather1}:}\\
    \eIf{$W_{time}<n-size$}
    {
    $W_{time}=W_{time}+1$ 
    }
    {
    \eIf{$\#Agent$ at $Current-Node <$3}
    {
    Move in a clockwise direction with all the pebbles at the $Current-Node$\\
    }
    {
    Change to state \textbf{Coloc}.
    }
    }
    \underline{In State \textbf{Gather2}:}\\
    \eIf{$\#Agent$ at $Current-Node$ < 3}
    {
    Move in a clockwise direction with all the pebbles at the $Current-Node$.\\
    }
    {
    Change to state \textbf{Coloc}.\\
    }
    \underline{In State \textbf{Coloc}:}\\
    Declare $Current-Node$ as $home$.\\
    Change state to \textbf{Initial} and execute \textsc{PerpExplore-Coloc-Pbl}.\tcp{State \textbf{Initial} is the state defined in \textsc{PerpExplore-Coloc-Pbl} algorithm}
    
\end{algorithm}
An agent in state \textbf{Gather1}, is the first agent to detect the anomaly, hence it is first instructed to wait at the current node, i.e., its $home$ for the first $n-size-1$ rounds, and then start moving in a clockwise direction, while accompanying all the pebbles at the current node \footnote{Note that we have considered that in this case an agent, while in state \textbf{Gather1} or \textbf{Gather2} can carry more than one pebble, which can be easily restricted to one agent can carry only one pebble, but in that case the agent needs to return and carry one pebble each time and reach back to its current node}. 

This movement in the clockwise direction continues until the number of agent at the current node is exactly 3, after which it changes to state \textbf{Coloc}.

An agent in state \textbf{Gather2} (i.e., the alive agents which are not the first to detect the anomaly), first checks whether the current node has exactly 3 agents on it, if not then it moves in a counter-clockwise direction, while accompanying the pebbles present at its $home$, until it satisfies the condition, where it finds a node on which a total of 3 agents exists. After which they change their state to \textbf{Coloc}.

In state \textbf{Coloc}, an agent declares its current node as $home$, while also changes its state to \textbf{Initial} \footnote{The state \textbf{Initial} is defined in Algorithm \ref{algorithm: PerpExplore-Coloc-Pbl}} and then executes algorithm \textsc{PerpExplore-Coloc-Pbl}.

\subsection{Correctness and Complexity}
\label{correct scat pbl}
In this section, we discuss the correctness and complexity of our algorithm \textsc{PerpExplore-Scat-Pbl}.
Before that, let us define an \textit{iteration} of the algorithm \textsc{PerpExplore-Scat-Pbl}. An iteration is a set of $3n+1$ consecutive rounds starting from the round in which all alive agents change to state \textbf{Forward} together.
Next we prove the following lemma.

\begin{lemma}
\label{lemma:scat pbl no destroy exploration}
    If no agents are destroyed by the Byzantine black hole until the completion of $p$-th iteration, then for each $j-$th iteration, every node of the ring $R$ is visited by at least one agent, where $j\in \mathbb{N}^+,~\text{and}~ j\le p$.
\end{lemma}
\begin{proof}
   Let $a_0,a_1,a_2$ and $a_3$ be 4 agents initially located at $h_0,h_1,h_2$ and $h_3$ respectively where, $h_0,h_1,h_2,h_3$ are in clockwise order starting from $h_0$.  By $Seg(a_i)$ we denote the clockwise arc starting from $h_i$ and ending at $h_{(i+1)\pmod 4}$.
    Note that in any of the $j-$th iteration ($j \le p$), an agent $a_i$ (for all $i \in \{0,1,2,3\}$) moves clockwise
    along $Seg(a_i)$ starting from $h_i$ while in state \textbf{Forward} and reaches $h_{(i+1)\pmod 4}$. Then $a_i$ in the same iteration moves back to $h_i$ again while being in state \textbf{Fetch}. Thus each node of $Seg(a_i)$ is explored twice in $j$-th iteration by $a_i$, where $j \le p$.
    Since for any $v \in R$, $v \in Seg(a_j),$ for some $j \in \{0,1,2,3\}$, and in each iteration $a_j$ explores each node of $Seg(a_j)$ until $p-$th iteration, the result follows.
\end{proof}
\begin{corollary}
\label{cor: scat pbl no destruction exploration}
   If no agents are destroyed, the agents can perpetually explore a ring execuing the algorithm \textsc{PerpExplore-Scat-Pbl}.
\end{corollary}

Next we prove, after the consumption of one agent by the Byzantine black hole, during a subsequent iteration or at the current iteration, at most one agent can change its state to \textbf{Gather1} and gathers with two more agents before that iteration ends.

\begin{lemma}
\label{lemma: gather in an iteration}
In an iteration, at most one agent can change its state to \textbf{Gather1} executing \textsc{PerpExplore-Scat-Pbl}. Furthermore, before the end of that iteration the agent gathers with two more agents at a safe node with at least 3 pebbles.
\end{lemma}
\begin{proof}
   Let $a_0,a_1,a_2$ and $a_3$ be 4 agents initially located at $h_0,h_1,h_2$ and $h_3$ respectively where, $h_0,h_1,h_2,h_3$ are in clockwise order starting from $h_0$. By $Seg(a_i)$ we denote the clockwise arc starting from $h_i$ and ending at $h_{(i+1)\pmod 4}$. Without loss of generality, let the first agent that is destroyed by the Byzantine black hole be $a_0$ during the $p-$th iteration. So, the Byzantine black hole $v_b$, must be in $Seg(a_0)$. Since, no agents changes to state \textbf{Gather1}, before an agent is destroyed. Hence, until $(p-1)-$th iteration, there does not exist any agent, which changes its state to \textbf{Gather1}. Now, we have following two cases, which are based on the state $a_0$ was in, before it gets destroyed by the Byzantine black hole.

     \textit{\underline{Case-I:}} $a_0$ was in state \textbf{Forward} when it was destroyed. So, $a_0$ fails to reach $h_1$ to bring the pebble, left by $a_1,$ back to $h_0$ during $p-$th iteration. So, when $a_1$ reaches $h_1$ at the $p-$th iteration, along with the pebble it collected from $h_2$, it finds two pebble at the node $h_1$. This happens at $(n+2+|Seg(a_1)|)-$th round of $p-$th iteration. In the same round $a_1$ changes to state \textbf{Gather1}. Next it waits an additional $n-|Seg(a_1)|-1$ rounds and then in the $(2n+2)-$ th round of the $p-$ th iteration, it starts moving clockwise along with both the pebbles. Note that, all other alive agents (i.e., $a_2$ and $a_3$) reaches their home by the $(2n+2)-$th round of the $p-$th iteration. Also upon reaching their respective $home$, $a_2$ and $a_3$ both sees exactly one pebble (i.e., the pebble they carried back to their $home$) so they change their state to \textbf{Wait2}. By Observation~\ref{obs: move to state forward again}, from $(2n+2)$-th round of the iteration upto the end of the iteration (i.e., $(3n+1)-$th round of the iteration), both $a_2$ and $a_3$ stays at $h_2$ and $h_3$, respectively in state \textbf{Wait2}. So, after $a_1$ starts moving clockwise in the  $(2n+2)-$th round, it first meets $a_2$ in round $2n+1+|Seg(a_1)|$ (which is $\le 3n-1$ as $|Seg(a_1)|\le n-2$) of the $p-$th iteration. And then meets $a_3$ at round $2n+1+|Seg(a_1)|+|Seg(a_2)| \le 3n$ (as $|Seg(a_1)|+|Seg(a_2)|\le n-1)$ which is also in $p-$th iteration. Now note that, the moment $a_1$ meets $a_2$, it changes to state \textbf{Gather2} and starts moving together with $a_1$ along with its own pebble until they meet $a_3$. When all three of them gathers together at $h_3$ there is atleast 3 pebbles at $h_3$ (two pebbles carried by $a_1$, one pebble carried by $a_2$, and one pebble remains for $a_3$). In this scenario, they change their state to \textbf{Coloc}. Note that in $p-$th iteration only $a_1$ changes to state \textbf{Gather1} as other agents can not change to state \textbf{Gather1} from either from \textbf{Wait2}, or, \textbf{Gather2}, or, \textbf{Coloc}. So, for this case the claimed result holds.

\textit{\underline{Case-II:}} Let $a_0$ was in state \textbf{Fetch} while it was destroyed by the Byzantine black hole node $v_b$. In this case, none of the alive agents see more than one pebble at their $home$ after they return there in the $p-$th iteration. This is because, here the the pebbles left by agents  $a_0, a_1, a_2$ and $a_3$ has been collected by agents $a_3, a_0, a_1 $ and $a_2$, respectively and when they reach their corresponding $home$, the pebble they carried are the only one there. Thus all of them (except $a_0$ as it was destroyed during $p-$th iteration in state \textbf{Fetch}) moves to state \textbf{Wait2} after reaching $home$ at the $p-$th iteration and after the end of the iteration moved to state \textbf{Forward} and starts the $(p+1)-$th iteration. Since $a_0$ was destroyed in the previous iteration it remains unavailable to collect the pebble from $h_1$, left by $a_1$ in the $(p+1)-th$ iteration. Note that here in this iteration both $a_1$ and $a_2$ collects pebble from $h_2$ and $h_3$ ( left by $a_2$ and $a_3$ respectively) and returns back to $h_1$ and $h_2$ respectively. But $a_3$ upon reaching $h_0$ does not find any pebble to collect. In this case it returns to $h_3$ without any pebble. Note that here only $a_1$ sees more than one pebble upon returning back to $h_1$ in the $(p+1)-$th iteration. So only $a_1$ changes to state \textbf{Gather1} at round $n+2+|Seg(a_1)|$. Now by similar argument as in Case-I, the rest follows.
\end{proof}

After 3 agents gather at a safe node with at least 3 pebbles they execute the algorithm \textsc{PerpExplore-Coloc-Pbl} which solves the problem of \textsc{PerpExploration-BBH} on a ring $R$ for 3 co-located agents with at least one pebble each agent (refer, Theorem~\ref{thm:colocPblCorrect}).
Thus this proves Theorem~\ref{thm:correct scat pbl}. which is stated again below.

\noindent\textbf{Statement of  Theorem~\ref{thm:correct scat pbl}:} \textit{ Algorithm \textsc{PerpExplore-Scat-Pbl} solves \textsc{PerpExploration-BBH} problem of a ring $R$ with 4 synchronous and  scattered agents under the pebble model of communication where each agents are equipped with a pebble.}

\subsection{Modification of \textsc{PerpExplore-Scat-Pbl} to include the cases where starting positions has multiplicity}
\label{subsec: include all cases scat pbl}
 \begin{remark}
\label{remark:modified perpexplore-scat-pbl}
   In the discussion of our Algorithm \ref{algorithm: PerpExplore-Scat-Pbl}, we have considered that 4 agents are scattered along 4 distinct nodes (i.e., each node with multiplicity 1). Here we describe how our algorithm (Algorithm~\ref{algorithm: PerpExplore-Scat-Pbl}) works for the remaining cases, (i.e., when 4 agents are scattered among 3 or 2 nodes initially) by slight modification. Observe that, in these remaining cases there exists at least one node, where multiplicity is greater than 1 initially. Let there exists a node with multiplicity $3$, in this case, these agents directly start executing \textsc{PerpExplore-Coloc-Pbl}. While execution of the algorithm, if they encounter the fourth agent somewhere along $R$, they just ignore this agent (note that the IDs, of all the 3 initially co-located agents are already collected by each other) while executing their current algorithm. On the other hand, if initially there exists a node with multiplicity greater than 1 and less than 3, then in that case, only the lowest ID agent (say $a_1$) at the current node, changes its state \textbf{Forward}, whereas the other agent at the current node (say $a_2$) changes its state \textbf{Backup-Wait} at some round say $t_0$ $(>0)$. The agent in state \textbf{Forward}, continue executing the algorithm \textsc{PerpExplore-Scat-Pbl}, whereas the agent in state \textbf{Backup-Wait}, waits at the current node for $2n+1$ rounds (i.e., until $t_0+2n+1$-th round) and then checks for anomaly in the next round (i.e., if the current node has more pebble than current number of agents). If such anomaly exists after $t_0+2n+1$ round, then that implies that $a_1$ is already in state \textbf{Gather1}. It is because, $a_1$ while it returned back to its $home$ carrying a pebble in state \textbf{Fetch} at round $t_0+n+size+2$ round (where $size$ is the length of $Seg(a_1)$),  it encounters the first anomaly, i.e., the number of pebble is more than number agents. In that case, $a_1$ directly changes its state to \textbf{Gather1}, and further waits at $home$ for $n-size-1$ rounds, i.e., till $t_0+2n+1$ rounds. After which it starts to move clockwise, and on the other hand $a_2$ also starts moving clockwise, while changing its state to \textbf{Gather2} from \textbf{Backup-Wait}. This guarantees that if the anomaly is detected at a multiplicity then both $a_1$ and $a_2$ moves together to gather with the third agent, which is in state \textbf{Wait2}. 
   Now if the anomaly is not detected at the multiplicity, that is $a_2$ at round $t_0+2n+1$ finds no anomaly then it moves into state \textbf{Backup} whereas $a_1$ is in state \textbf{Wait2} at the same node. Both of them waits until round $3n+1$ and at round $3n+2$, $a_1$ changes its state to \textbf{Forward} again (refer to Observation~\ref{obs: move to state forward again}) and $a_2$ changes its state to \textbf{Backup-Wait} again. If some other agent detects anomaly it must meet $a_1$ and $a_2$ at their $home$ at some round $t$ where $t_0+2n+2 \le t \le t_0+3n+1$. In that case they find that there are 3 agents at their $home$ and  all of them change to state \textbf{Coloc} and executes algorithm \textsc{PerpExplore-Coloc-Pbl}.
\end{remark}

\section{Algorithm \textsc{PerpExplore-Scat-Whitbrd}}
\label{Appendix: scatWhitbrd}
In this section, we discuss the algorithm \textsc{PerpExplore-Scat-WhitBrd}, which achieves \textsc{PerpExploration-BBH}, with the help of three agents on a ring $R$ with $n$ nodes, where each node has a whiteboard that can store $O(\log n)$ bits of data. Note that, in this model as well we have considered that the agents are placed arbitrarily placed along the nodes of the ring $R$ (each such node is a `safe node'). So, there can be two cases, first, all the three agents are initially placed at three different positions, second, two agents are together whereas the third agent is in a different position. We will provide algorithm for the first case only, as an algorithm for the second case can be easily designed by  modifying and merging the algorithms for the first case and algorithm \textsc{PerpExplore-Coloc-Pbl} (as discussed in Remark~\ref{remark: Multiplicity case} in Appendix, Section~\ref{Appendix: multiplicity whitbrd case}).

\subsection{Detailed description and Pseudocode of the algorithm}
\label{Appendix: desc scat whitbrd}
Let us consider the three agents (say, $a_1$, $a_2$ and $a_3$) are initially placed at three nodes of the ring $R$, which are not only safe nodes but are also recognized as the $home$ of these agents. Initially, the agents starting from their respective $home$ are assigned the task to explore a set of nodes, which we term as a \textit{segment} of the corresponding agents. More precisely, a segment for an agent $a_i$ is defined as the set of consecutive nodes, starting from its $home$ and ending at the nearest clockwise $home$ (i.e., the $home$ of first clockwise placed agent), which is also termed as $Seg(a_i)$. Note that $\cup^{3}_{i=1}Seg(a_i) =R$. So if none of the agents are ever destroyed the ring will still be perpetually explored.

\noindent Let us first discuss the case, in which we describe the possible movements of the agent and the respective state changes they perform, until one agent gets destroyed by the Byzantine black hole, and another agent gets to know that an agent is already destroyed by the Byzantine black hole. After which, we will describe all the possible movements and state changes performed by the remaining two alive agents, between getting to know that already one has been destroyed by the Byzantine black hole, and finally detecting the Byzantine black hole's position.

Initially, all agents start from state \textbf{Initial} at their respective $home$. In state \textbf{Initial}, an agent ( without loss of generality, say $a_1$) first clear the already present data (if at all) at the whiteboards of their respective $home$, then initializes $T_{time}=0$ ($T_{time}$ stores the number of rounds elapsed since the start of state \textbf{Initial}), and writes a message of the form (\texttt{home, ID}) at its $home$, where \texttt{ID} is the ID of the agent. This type of message is termed as ``\texttt{home}" type message, which consists of two components, the first component stores the message \texttt{home} and the second component stores the ID of the agent writing, i.e., ID of the agent whose $home$ is the $Current-Node$. Further, it changes its state to \textbf{Forward} and moves in the clockwise direction. 

In state \textbf{Forward}, the agent moves in the clockwise direction while erasing the earlier direction marking (if exists), i.e., \texttt{left} and then writes the new direction marking, i.e., \texttt{right} in each node. The agent also increases the $T_{time}$ variable by one in each round. This process continues until the agent encounters a node that has a ``\texttt{home}" type message. This ``\texttt{home}" type message signifies that the agent has reached the end of segment $Seg(a_1)$, i.e., in other words, it has reached the nearest clockwise $home$, say $v_c$. Note that the length of a segment can be at most $n-2$, hence within $T_{time}=n-1$, an agent $a_1$ is bound to reach the last node of its own segment i.e., $v_c$. In any case irrespective of the current $T_{time}$, the agent waits at $v_c$ until $T_{time}=n-1$, after which in the next round, it checks for the following information at $v_c$. If the agent finds a message of type ``\texttt{visited}" at $v_c$, the agent considers this as an anomaly and learns that an agent of which $v_c$ is the home (from the ID component of \texttt{home} type message at $v_c$)  must have entered Byzantine black hole while returning back from its clockwise nearest $home$ (refer Lemma~\ref{lemma: agent sees visited type message at other agents home}). Then in this case, the agent stores the message of type \texttt{dir} in its local memory, where \texttt{dir=(Counter-clockwise, NULL)}. In a \texttt{dir} type message, the first component is called a \textit{direction component} which indicates the direction of the agent that gets destroyed by the Byzantine black hole, along which it was moving just before it gets destroyed. On the other hand, the second component either stores ID of some agent or stores $NULL$ message. Both these components are useful for certain state transitions. Next, the agent changes its state to \textbf{Gather}. Otherwise, if no anomaly is detected, then the agent simply writes the message \texttt{(Visited, self ID)} at $v_c$ and changes its state to \textbf{Back-Wait}.

Next, in state \textbf{Back-Wait}, an agent (here $a_1$) waits at $v_c$ until $T_{time}=2n-1$. In this waiting time, $a_1$ waits for the other agent to meet $a_1$ if the other agent detects any anomaly during its state \texttt{Forward}. While waiting, if it finds that the $\#Agent$ at $v_c$ is more than 1, and the whiteboard at $v_c$ has a message of type \texttt{dir}, then the agent performs the following task. After noticing this, the agent moves along the counter-clockwise direction, while changing its state to \textbf{Gather2}. Note that, our algorithm ensures that while in state \textbf{Back-Wait}, if a \texttt{dir} type message is seen by an agent, then the direction component of this message must be in a \texttt{counter-clockwise} direction (refer Lemma \ref{corollary: dir type counter-clockwise by back-wait}). Otherwise, if no such \texttt{dir} type message is seen by $a_1$, and also $T_{time}=2n$, then in this round, the agent changes its state to \textbf{Backtrack}.

In state \textbf{Backtrack}, an agent (here $a_1$) starts to move in a counter-clockwise direction from $v_c$, and after each move erases the earlier direction, i.e., \texttt{right}, and writes the new direction, i.e., \texttt{left}, and also increments $T_{time}$ by 1. This process continues until it reaches its own $home$, i.e., reads a \texttt{home} type message with ID same as its own ID. Again note that, since a segment can be of length at most $n-2$, hence within $T_{time}=3n-1$, the agent reaches its own $home$. After which it does not move until $T_{time}=3n$. At $T_{time}=3n$, it checks whether its $home$ has \texttt{visited} type message, if so then directly changes its state to \textbf{Initial-Wait}. Otherwise, the absence of such \texttt{visited} type message, creates an anomaly for the agent. This only happens when another agent, say $a_j$, ($j \ne 1$), for which $Seg(a_1) \cap Seg(a_j)=$ $home$ of $a_1$, does not arrive at $home$ of $a_1$ during its \textbf{Forward} state because of getting destroyed at the Byzantine black hole (refer Lemma \ref{lemma: sees no visited at bactrack}). This instigates the agent $a_1$ to change its state to \textbf{Gather} and store the \texttt{dir} type message \texttt{(clockwise, NULL)} in its local memory.

In state \textbf{Initial-Wait}, the agent waits at its $home$ until $T_{time}=4n-1$. This waiting period is enough for an anomaly finding agent in state \texttt{Backtrack}, to meet with it. If the agent finds that there is more than one agent at its $home$, and also there is a \texttt{dir} type message as well, then our algorithm ensures that the direction component of this \texttt{dir} type message must be in \texttt{clockwise} direction (refer Lemma \ref{corollary: dir type clockwise by initial-wait}). In this case, the agent starts moving clockwise and changes its state to \textbf{Gather1}. On the contrary, if $T_{time}=4n$ and no anomaly is detected, then the agent again moves back to state \textbf{Initial}.

An agent can change its state to \textbf{Gather}, either from state \textbf{Forward} or from state \textbf{Backtrack}. Note that in either case, it carries the stored message of type \texttt{dir} (which has a direction component and an ID component that is either $NULL$ or stores the ID of an agent). So, in this state if the direction component of \texttt{dir} is along \texttt{clockwise} (resp. \texttt{counter-clockwise}), then the agent moves along clockwise (resp. counter-clockwise) direction until it finds a node with $\#Agent$ more than one. Next, in case of a counter-clockwise direction component, the agent updates the \texttt{dir} message to (\texttt{Counter-clockwise, ID'}) where \texttt{ID'} is the ID of the other agent at the same node. Then the agent writes the updated \texttt{dir} message at the current node and changes state to \texttt{Gather2}. If the agent in the state \textbf{Gather} met the other agent when the \texttt{dir} type message has a clockwise direction component, the agent simply just writes the message at the current node and moves to state \textbf{Gather1}.

Also, note that exactly one agent can move into state \textbf{Gather}. This is because if an agent is destroyed during \textbf{Forward} state then the anomaly is first found by a single agent whose $home$ is the clockwise nearest $home$  of the destroyed agent. This agent changes its state to \textbf{Gather}. Now, before the other agent finds further anomaly the agent in state \textbf{Gather} meets the other agent and forces it to change into state \textbf{Gather1}. Similarly, If an agent $a_1$ is destroyed by the Byzantine black hole during its \textbf{Backtrack} state, then the first anomaly is detected by a single agent (more precisely the agent for which the clockwise nearest $home$ is the $home$ of $a_1$). In this case, it changes its state to \textbf{Gather} and forces the other agent to move into state \textbf{Gather2} by meeting it before it finds further anomaly. Thus we have the following observation.
\begin{observation}
There is exactly one agent that changes its state to \textbf{Gather} throughout the execution of Algorithm \textsc{PerpExplore-Scat-Whitbrd}. Also,
    The agent which changes its state to \textbf{Gather}, is the first agent to understand that, an agent has already been destroyed by the Byzantine black hole.
\end{observation}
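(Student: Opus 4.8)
The plan is to fix the unique segment containing the black hole and show that, whichever way the one doomed agent is lost, a single \emph{first} detector is created, and that this detector reaches and diverts the other survivor before the latter can raise its own anomaly. First I would record the structural facts that drive everything. Since every $home$ is a safe node and consecutive segments overlap only in a $home$, the black hole lies in the interior of exactly one segment, say $Seg(a_c)$; hence $a_c$ is the only agent that can ever be consumed, and only while traversing interior nodes, i.e. in state \textbf{Forward} (moving clockwise toward the $home$ of $a_{c+1}$) or in state \textbf{Backtrack} (returning counter-clockwise to its own $home$) --- in \textbf{Initial}, \textbf{Back-Wait} and \textbf{Initial-Wait} it sits on a safe $home$. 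I would also note that, as long as no anomaly is raised, the thresholds $n,2n,3n,4n$ on $T_{time}$ run all three agents in lock-step $4n$-round cycles, and that each segment has length at most $n-2$, so every traversal fits strictly inside its window. Finally, since all the detection events below occur on $home$ nodes (which are safe), the Byzantine node's power to erase interior whiteboards cannot interfere with the first detection.

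Next I would split on the two consumption cases and locate the unique first detector using the already-stated detection lemmas. If $a_c$ is lost in \textbf{Forward} during some cycle $k$, it never writes a \texttt{visited} message at the $home$ of $a_{c+1}$; hence by Lemma~\ref{lemma: sees no visited at bactrack} the agent $a_{c+1}$ finds its own $home$ devoid of \texttt{visited} when it arrives there in \textbf{Backtrack} at $T_{time}=3n$ of cycle $k$, and is the first to raise an anomaly, entering \textbf{Gather} with direction component \texttt{clockwise}. If instead $a_c$ is lost in \textbf{Backtrack} during cycle $k$, then it never returns to clear its own $home$, so the stale \texttt{visited} left there by $a_{c-1}$ survives into cycle $k+1$; by Lemma~\ref{lemma: agent sees visited type message at other agents home} the agent $a_{c-1}$ sees it in \textbf{Forward} at $T_{time}=n$ of cycle $k+1$ and becomes the first detector, entering \textbf{Gather} with direction component \texttt{counter-clockwise}. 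In both cases a short round-count shows that the competing anomaly the \emph{other} survivor could eventually raise is scheduled strictly later (at $T_{time}=n$ of cycle $k+1$ in the first case, at $T_{time}=3n$ of cycle $k+1$ in the second), so whoever enters \textbf{Gather} does so at the globally earliest anomaly round; this already delivers the ``first to understand'' half of the statement.

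The core of the argument is then the interception step: the \textbf{Gather} agent must reach the remaining survivor while the latter is still parked on a $home$, so that the survivor passes to \textbf{Gather1}/\textbf{Gather2} rather than ever entering \textbf{Gather}. In the \textbf{Forward}-loss case, $a_{c+1}$ walks clockwise from its $home$ through $Seg(a_{c+1})$ toward the $home$ of $a_{c-1}$, where $a_{c-1}$ is waiting in \textbf{Initial-Wait} until $T_{time}=4n$; since the segment length is at most $n-2$ while the waiting window has length $n$, $a_{c+1}$ arrives before $a_{c-1}$ departs, and by Corollary~\ref{corollary: dir type clockwise by initial-wait} the latter moves to \textbf{Gather1}. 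Symmetrically, in the \textbf{Backtrack}-loss case $a_{c-1}$ walks counter-clockwise to the $home$ of $a_{c-1}$, where $a_{c+1}$ is parked in \textbf{Back-Wait} until $T_{time}=2n$, and the same length-versus-window inequality forces the meeting and sends $a_{c+1}$ to \textbf{Gather2} by Corollary~\ref{corollary: dir type counter-clockwise by back-wait}. Because an agent can reach \textbf{Gather1} or \textbf{Gather2} only by being met in this way, and once diverted the survivors leave the \textbf{Forward}/\textbf{Backtrack} regime forever, and because $a_c$ was the only agent that could be consumed (so no second independent cascade arises), exactly one agent ever executes the transition into \textbf{Gather} throughout the run; together with the ``earliest anomaly'' observation this gives both claims.

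I expect the main obstacle to be precisely this timing step. The inequalities are tight: the detector must cover a path of length up to $n-2$ inside a waiting window of length $n$, and the off-by-one effects of the round in which the anomaly is detected, the round in which the \textbf{Gather} walk begins, and the round in which the meeting is registered all have to be tracked against the window boundary. One must also rule out that, in the interim, the survivor has already slipped into a fresh \textbf{Forward} or triggered a second anomaly of its own. Carefully pinning down the absolute rounds of detection, of the start of the \textbf{Gather} walk, and of the survivor's waiting window, and verifying the strict inequality in every sub-case, is where the real work lies; the bookkeeping of the \texttt{dir} message's direction and ID components is then routine given the cited corollaries.
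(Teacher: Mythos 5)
Your proposal is correct and follows essentially the same route as the paper's (brief, informal) justification preceding the observation: the same case split on whether the doomed agent is lost in \textbf{Forward} or \textbf{Backtrack}, the same identification of the unique first detector via the missing/stale \texttt{visited} message at the relevant $home$ (Lemmas~\ref{lemma: sees no visited at bactrack} and~\ref{lemma: agent sees visited type message at other agents home}), and the same interception argument showing the \textbf{Gather} agent reaches the other survivor during its \textbf{Initial-Wait}/\textbf{Back-Wait} window before that survivor can raise its own anomaly. Your write-up is in fact more explicit than the paper's about the round-by-round timing and about why the competing anomaly is scheduled strictly later, but the underlying decomposition and key facts are identical.
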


An agent can change its state to \textbf{Gather1}, only from the states \textbf{Initial-Wait} or \textbf{Gather}.
An agent changes its state to \textbf{Gather1} from \textbf{Gather}, only when while moving along \texttt{clockwise} direction finds another agent present, in which case it writes the corresponding message of type \texttt{dir} and correspondingly changes its state to \textbf{Gather1}.
on the other hand, an agent (here $a_1$) can only change its state to \textbf{Gather1} from \textbf{Initial-Wait}, if it is waiting at its $home$, and within which it encounters another agent at its $home$ along with it a message of type \texttt{dir} is written (in which the direction component of \texttt{dir} is along \texttt{clockwise}). 
Note that if $a_1$ in state \textbf{Initial-Wait} finds more than one agent at the current node and \texttt{dir} type message at round $t$ then, the other agent must be in state \textbf{Gather} while written there at round $t-1$ and moved into state \textbf{Gather1} in the same round. Then, during round $t$, $a_1$ first stores the \texttt{dir} type message before it moves to state
\textbf{Gather1} and moves according to the direction component of the \texttt{dir} type message (i.e., clockwise). Also, the other agent that changed its state to \textbf{Gather1} from \textbf{Gather} in round $t-1$ also moves according to the \texttt{dir} type message during round $t$. Thus we have another observation
\begin{observation}
    \label{obs:gather1 moves together}
    If two agents move to state \textbf{Gather1}, one moves from state \textbf{Gather}, and the other moves from state \textbf{Initial-Wait}. Furthermore, they change to state \textbf{Gather1} at the same node and leave the node together at the same round and stay together.
\end{observation}

In state \textbf{Gather1}, the agent which changes its state to \textbf{Gather1} from \textbf{Initial-Wait} first stores the \texttt{dir} type message, as the other agent must already have stored this message in state \textbf{Gather}. Irrespective of which, the agent starts moving in a clockwise direction until it finds a node that has a \texttt{home} (i.e., of the form (\texttt{home}, \texttt{ID})) type message, where the ID component this message does not match with the IDs of the agents present at the current node. Note that our algorithm ensures that the current node is the $home$ of $a_j$ if $a_j$ is the agent to be destroyed by the Byzantine black hole (refer Lemma \ref{lemma: reachCautiousStartNode}). Further, as there are two agents at the current node both in state \textbf{Gather1} (Observation~\ref{obs:gather1 moves together}), so now the agent with the lowest ID, sets $Move=0$, also initializes the $Marking$ variable to $right$, since the direction component of \texttt{dir} is \texttt{clockwise}, and further changes its state to \textbf{Cautious-Leader}. On the other hand, the other alive agent at the current node changes its state to \textbf{Cautious-Follower}, while initializing $Move=0$.

\begin{algorithm} \label{algorithm: PerpExplore-Scat-WhitBrd}
\footnotesize
    \caption{\textsc{PerpExplore-Scat-WhitBrd}}
    \tcp{Algorithm is written for an agent $r$ }
    \textbf{Input:} $n$\\
    \textbf{States:}$\{\textbf{Initial}, \textbf{Forward}, \textbf{Back-Wait}, \textbf{BackTrack}, \textbf{Initial-Wait}, \textbf{Gather}$, \\ \hspace{1.16cm}$\textbf{Gather1}, \textbf{Gather2}, \textbf{Cautious-Leader}, \textbf{Cautious-Follower}\}$\\
    \underline{In State \textbf{Initial}:}\\
    Clear Whiteboard at the $Current-Node$\\
    $T_{time}=0$\\
    Write \texttt{(home, ID($r$))} at the $Current-Node$.\\
    Change state to \textbf{Forward} and move clockwise\\

    \underline{In State \textbf{Forward}:}\\
    \eIf{$T_{time} < n$}
    {   
        $T_{time}=T_{time}+1$\\
        \If{$Current-Node$ does not have any ``\texttt{home}'' type messages}
        {
            
            Write at the $Current-Node$ \texttt{right} and erase \texttt{left} (if exists) and move clockwise.\\
        }
        
    }
    {   
        $T_{time}=T_{time}+1$\\
        \eIf{$Current-Node$ already have a ``\texttt{visited}'' type message}
        {
            Store the message of type \texttt{dir}, where \texttt{dir}=(\texttt{Counter-clockwise},\texttt {NULL}) \\
            and then change to state \textbf{Gather}.\\
           
        }
        {
            Write \texttt{(Visited, ID($r$))} at the $Current-Node$ and change to state \textbf{Back-Wait}
        }
    }

    \underline{In State \textbf{Back-Wait}:}\\
    \eIf{$T_{time}<2n$}
    {
        $T_{time}=T_{time}+1$\\
        \If{$\#Agent$ at the $Current-Node>1$ $\wedge$ have a message of type ''\texttt{dir}"}
        {
        \If{direction component of \texttt{dir} message is of type ``\texttt{counter-clockwise}"}
        {
        Store the \texttt{dir} type message in local memory\\
        Move along direction counter-clockwise, and change to state \textbf{Gather2}.\\
        }
    
        }
        
    }
    {
    Move in counter-clockwise direction and change to state \textbf{Backtrack}.\\
    }
    \underline{In State \textbf{Backtrack}:}\\
    \eIf{$T_{time}<3n$}
    {
    $T_{time}=T_{time}+1$.\\
    \If{$Current-Node$ does not have any ``\texttt{home}" type message}
    {
    Write at the $Current-Node$ \texttt{left} and erase \texttt{right} and move counter-clockwise.\\
    }
    }
    {
    $T_{time}=T_{time}+1$\\
    \eIf{$Current-Node$ already have a ``\texttt{visited}" type message}
    {
    Change to state \textbf{Initial-Wait}.\\
    }
    {
    Store the message of type \texttt{dir}, where \texttt{dir}=(\texttt{Clockwise,NULL)} and then change to state \textbf{Gather}.\\ 
    }
    }
   \underline{In state \textbf{Initial-Wait}:}\\
   \eIf{$T_{time}<4n$}
   {
   $T_{time}=T_{time}+1$\\
   \If{$\#Agent$ at the $Current-Node>1$ $\wedge$ have a message of type ``\texttt{dir}"}
        {
        \If{direction component of \texttt{dir} message is of type ``\texttt{clockwise}"}
        {
        Store the \texttt{dir} type message in local memory\\
         Change to state \textbf{Gather1} and move clockwise.\\
        }
    
        }
   }
   {
   Change to state \textbf{Initial}.\\
   }
    \underline{In State \textbf{Gather}:}\\
   \eIf{direction component of \texttt{dir} is \texttt{clockwise}}
   {
   \eIf{$\#Agents$ at $Current-Node$ is 1}
   {
   Move in a clockwise direction.
   }
   {
   Write at $Current-Node$ \texttt{dir} and move to state \textbf{Gather1}.\\
   }
                                      
               }
   {
   \eIf{$\#Agents$ at $Current-Node$ is 1}
   {
   Move in a counter-clockwise direction.
   }
   {
   Update \texttt{dir}= \texttt{(counter-clockwise},\texttt{ID')} \tcp{\texttt{ID'} is the ID of the other agent} 
   Write at $Current-Node$ \texttt{dir} and move to state \textbf{Gather2}.\\
   }
   }
  
\end{algorithm}
\begin{algorithm}
\footnotesize
\setcounter{AlgoLine}{60}
    
   \underline{In State \textbf{Gather1}:}\\
   \eIf{$Current-Node$ has ``\texttt{home}" type message with the ID  component of the message does not match with IDs of the agent present at the $Current-Node$ \footnotemark}
   {
   \eIf{ID is the lowest among the set of IDs at the $Current-Node$}
   {
   Set $Move=0$.\\
   \eIf{direction component of \texttt{dir} is \texttt{clockwise}} 
   {Set $Marking= right $.}
   {
   $Marking= left$
   }
   Change its state to \textbf{Cautious-Leader}.\\
   }
   {
   Set $Move=0$\\
   Change to state \textbf{Cautious-Follower}.\\
   }
   }
   {
   Move in a clockwise direction.\\
   }
   \underline{In State \textbf{Gather2}:}\\
   \eIf{$Current-Node$ has a ``\texttt{home}" type message with ID component same as ID component of \texttt{dir}}
   {
   \eIf{ID is the lowest among the set of IDs at the $Current-Node$}
   {
   Set $Move=0$.\\
   \eIf{direction component of \texttt{dir} is \texttt{clockwise}} 
   {Set $Marking= right $.}
   {
   $Marking= left$
   }
   Change its state to \textbf{Cautious-Leader}.\\
   }
   {
    Set $Move=0$\\
   Change to state \textbf{Cautious-Follower}.\\
   }
   }
   {
   Move in a counter-clockwise direction.\\
   }
    \underline{In State \textbf{Cautious-Leader}:}\\
    \eIf{$Move=0$}
    {
        Update $Move=1$ and move according to the direction component of \texttt{dir}.\\
    }
    {
    \eIf{$Current-Node$ has exactly one agent}
    {
        \eIf{$Current-Node$ is already marked with $Marking$}
        {
            $Move=0$ and move opposite to the  direction component of \texttt{dir}
        }
        {
            Declare $Current-Node$ as the Byzantine black hole node and continue perpetual exploration avoiding this node.\\
        }
    
    }
    {
        Update $Move=0$ and moves according to the dirction component of \texttt{dir} type message in its local memory
    }
        
    }
    \underline{In State \textbf{Cautious-Follower}:}\\
    \eIf{$Move=0$}
    {
    Set $Wait=0$ and update $Move=1$.\\
    }
    {
    \If{$Wait<1$}
    {
    $Wait=Wait+1$.\\
    }
    {
    \eIf{$\#Agent$ at $Current-Node>1$}
    {
    Update $Move=0$ and move according to the direction component of \texttt{dir}.\\
    }
    {
     Declare the next node along the direction component of \texttt{dir} as the Byzantine black hole node and continue perpetual exploration avoiding this node.\\
    }
    }
    }
    
\end{algorithm}
\footnotetext{IDs present at the $Current-Node$ implies the IDs of the agents present at the $Current-Node$}

An agent can change its state to \textbf{Gather2}, only from the states \textbf{Back-Wait} or \textbf{Gather}. An agent changes its state to \textbf{Gather2} from \textbf{Gather}, only when while moving along \texttt{counter-clockwise} direction finds another agent present, in which case it updates the \texttt{dir} message and writes the corresponding message of type \texttt{dir} and correspondingly changes its state to \textbf{Gather2}.

On the other hand, an agent can only change its state to \textbf{Gather2} from \textbf{Back-Wait}, if it is waiting at its corresponding segments nearest clockwise $home$, and within which it encounters another agent at its current node along with it a message of type \texttt{dir} is written (in which the direction component of \texttt{dir} is along \texttt{counter-clockwise}). After storing the \texttt{dir} type message in its local memory and changing its state to \textbf{Gather2} from \textbf{Back-Wait}, an agent immediately moves counter-clockwise in the same round. This generates another observation similar to the Observation~\ref{obs:gather1 moves together} as follows.

\begin{observation}
\label{obs: gather2 together}
     If two agents move to state \textbf{Gather2}, one moves from state \textbf{Gather}, and the other moves from state \textbf{Back-Wait}. Furthermore, they change to state \textbf{Gather2} at the same node and leave the node together at the same round and stay together.
\end{observation}

In state \textbf{Gather2}, the agent moves in a counter-clockwise direction, until it finds a node which has a \texttt{home} (i.e., of the form (\texttt{home}, \texttt{ID'})) type message, where \texttt{ID'} matches with ID component of \texttt{dir} type message in its local memory. Note that our algorithm ensures that the current node is the nearest clockwise $home$ of $a_1$ if $a_1$ is the agent to be destroyed by the Byzantine black hole (refer Lemma \ref{lemma: reachCautiousStartNode}). Further, as there are two agents at the current node, both in state \textbf{Gather2} (refer Observation~\ref{obs: gather2 together}), so now the agent with the lowest ID, sets $Move=0$, also initializes the $Marking$ variable to $left$, since the direction component of \texttt{dir} is \texttt{counter-clockwise}, and further changes its state to \textbf{Cautious-Leader}. On the other hand, the other alive agent at the current node, changes its state to \textbf{Cautious-Follower}, while initializing $Move=0$.

In state \textbf{Cautious-Leader}, an agent updates $Move=1$ and moves along the direction component of \texttt{dir}. After moving when it is alone on a node, it checks if the current node is marked with $Marking$ (i.e., either $left$ or $right$). If the node is marked, then it moves in the opposite direction (i.e., the opposite of direction component of \texttt{dir} message) to meet with the agent in state \textbf{Cautious-Follower}. Otherwise, if there is no $Marking$ at the current node then it identifies the current node as the Byzantine black hole node (only if it remains alive) and continues perpetual exploration avoiding this node. If an agent in the state \textbf{Cautious-Leader} with $Move=1$ meets with another agent (i.e., the agent in the state \textbf{Cautious-Follower}) it updates variable $Move$ to 0 and again moves according to the direction component.

In state \textbf{Cautious-Follower}, the agent after finding $Move=0$, initializes the variable $Wait=0$ and updates $Move=1$. If $Move=1$, then the agent checks whether $Wait<1$, if so then update $Wait=Wait+1$. When $Wait=1$, it checks if the current node has more than one agent (i.e., whether the agent in state \textbf{Cautious-Leader} has returned or not), if so then further update $Move=0$ and moves along the direction component of \texttt{dir}, along with the agent in state \textbf{Cautious-Leader}. Otherwise, if the agent in state \textbf{Cautious-Leader} does not return, that means the number of agents at the current node is 1 when $Wait=1$, this symbolizes that the agent in state \textbf{Cautious-Leader} has been either destroyed by the Byzantine black hole or the agent in state \textbf{Cautious-Leader} finds that its current node (i.e., the next node along the direction component of \texttt{dir} for the agent in state \textbf{Cautious-Follower}) is the Byzantine black hole and continues to perpetually move in the same direction, further avoiding this node. In any case, the agent executing \textbf{Cautious-Leader} has not returned implies the next node along the direction component of \texttt{dir} for the agent executing \textbf{Cautious-Follower} is the Byzantine black hole node. So, in this case, the agent declares the next node along the direction component of \texttt{dir} is the Byzantine black hole node and continues perpetual exploration avoiding that node.

\subsection{Correctness and Complexity}
\label{Appendix:correct whitbrd scat}
In the next two lemmas (Lemma~\ref{lemma: agent sees visited type message at other agents home} and Lemma~\ref{lemma: sees no visited at bactrack}) we first ensure that when we say an agent encounters an anomaly, is actually an anomaly. And what an agents interpret from these anomalies are true. Next in Lemma~\ref{lemma: gather gets actual direction} we ensure that the agent that changes to state \textbf{Gather} can always identify the state of the destroyed agent at the time it gets destroyed at the Byzantine black hole.

\begin{lemma}\label{lemma: agent sees visited type message at other agents home}
    Let $a_i$ and $a_j$ be two agents exploring the segments $Seg(a_i)$ and $Seg(a_j)$, such that $Seg(a_i)\cap Seg(a_j)=v_c$, where $v_c$ is also the $home$ of $a_j$. If during the execution of \textsc{PerpExplore-Scat-Whitbrd}, there exists a round $t>0$, in which $a_i$ in state \textbf{Forward} finds a \texttt{visited} type message, then there exists a round $0<t'<t$, in which $a_j$ has been destroyed by the Byzantine black hole while exploring the segment $Seg(a_j)$ in state \textbf{Backtrack}. 
\end{lemma}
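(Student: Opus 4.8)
The plan is to follow the life of the single ``\texttt{visited}'' mark that can sit on $v_c$ and to reason about who is allowed to write it and who is allowed to erase it. First I would note that $v_c$ is simultaneously the clockwise end of $Seg(a_i)$ and the $home$ of $a_j$, so the \emph{only} agent that ever writes a \texttt{visited} message on $v_c$ is $a_i$, and it does so exactly when it arrives at $v_c$ in state \textbf{Forward} and detects no anomaly; symmetrically, the \emph{only} action that removes such a mark is the whiteboard clearing that $a_j$ performs at the start of a fresh cycle, i.e.\ when $a_j$ executes \textbf{Initial} on its $home$ $v_c$. I would also record the cycle structure: an undisturbed agent spends $T_{time}\in[0,n]$ in \textbf{Forward}, then $[n,2n]$ in \textbf{Back-Wait}, $[2n,3n]$ in \textbf{Backtrack}, and $[3n,4n]$ in \textbf{Initial-Wait}, so the still-synchronized alive agents re-enter \textbf{Initial} together every $4n$ rounds; write $r_C$ for the round at which cycle $C$ begins.

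From the hypothesis that $a_i$ reads a \texttt{visited} message at $v_c$ while in \textbf{Forward} at round $t$, I would argue that this mark is $a_i$'s own deposit from its previous pass and that it survived because $a_j$ never cleared it. Since $a_j$ is the unique eraser of $v_c$, this means $a_j$ failed to execute its \textbf{Initial} step at the beginning of the cycle $C$ containing $t$ (so $t=r_C+n$); hence $a_j$ must have been destroyed during cycle $C-1$, giving a destruction round $t'$ with $0<t'<t$ (had $a_j$ died in an earlier cycle, the stale mark -- or, in the \textbf{Forward} sub-case below, the neighbouring agent -- would have raised an anomaly before $t$, so $a_i$ could not still be executing an ordinary \textbf{Forward} traversal at $t$). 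It remains to locate the destruction in state \textbf{Backtrack}. Because both $v_c$ and $v_d$ (the clockwise-nearest $home$ of $a_j$, that is, the clockwise end of $Seg(a_j)$) are $home$ nodes and hence safe, and because \textbf{Back-Wait}, \textbf{Initial-Wait} and \textbf{Initial} keep $a_j$ parked on a safe $home$, the only states in which $a_j$ can be consumed are \textbf{Forward} and \textbf{Backtrack}.

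To finish I would rule out \textbf{Forward} by a trace-and-timing argument, which is the main obstacle. If $a_j$ were consumed in \textbf{Forward} during cycle $C-1$, it would never reach $v_d$ and would therefore never write its \texttt{visited} mark there; the agent $a_l$ whose $home$ is $v_d$ would then finish its own \textbf{Backtrack} in the same cycle, find $v_d$ devoid of a \texttt{visited} message at round $r_{C-1}+3n$, and -- following the prescription of the algorithm in state \textbf{Backtrack} -- declare the anomaly and switch to \textbf{Gather}. This detection strictly precedes the round $r_C+n=r_{C-1}+5n$ at which $a_i$ would perform its \textbf{Forward} read at $v_c$ in cycle $C$; moreover the \textbf{Gather}-walk reaches $a_i$'s $home$ within at most $|Seg(a_l)|\le n-2$ steps, i.e.\ before round $r_{C-1}+4n=r_C$, while $a_i$ is still parked in \textbf{Initial-Wait}, and it diverts $a_i$ into \textbf{Gather1}/\textbf{Cautious-Follower}. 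Thus $a_i$ would never carry out the \textbf{Forward} read at $t$, contradicting the hypothesis, and the only remaining possibility is that $a_j$ was destroyed while in \textbf{Backtrack}, as claimed. The delicate point throughout is this last bookkeeping: using the $4n$-round synchronization and the segment-length bound $|Seg|\le n-2$, one must verify that a \textbf{Forward} destruction always diverts $a_i$ (via the $a_l$-initiated gathering) strictly before $a_i$'s own \textbf{Forward} check could ever fire, so that a \texttt{visited} read by $a_i$ can be produced only by a \textbf{Backtrack} destruction.
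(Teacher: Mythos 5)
Your proposal is correct and follows essentially the same route as the paper: it reduces the claim to the paper's two cases ($a_j$ still alive, handled by your write/erase accounting of the unique writer $a_i$ and unique eraser $a_j$ of the \texttt{visited} mark at the safe node $v_c$; and $a_j$ consumed in \textbf{Forward}, ruled out because the third agent detects the missing \texttt{visited} mark at the end of its own \textbf{Backtrack} and its \textbf{Gather} walk diverts $a_i$ into \textbf{Gather1} before $a_i$'s next \textbf{Forward} check at $v_c$). The only discrepancy is a harmless off-by-one in the cycle length (the paper's cycle is $4n+1$/$4n+2$ rounds rather than $4n$), which does not affect the argument.
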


\begin{proof}
    Let us suppose, there does not exist any round $t'<t$ in which $a_j$ has been destroyed by the Byzantine black hole in state \textbf{Backtrack}. This means that either $a_j$ is yet to be destroyed by the Byzantine black hole at round $t'$, for all $t'<t$ or $a_j$ has been destroyed by the Byzantine black hole at round $t'$, while it is in state \textbf{Forward}.\\
    \noindent \textit{\underline{Case I:}} Let $a_j$ is yet to be destroyed by the Byzantine black hole at round $t'$, for all $t'<t$. Note that since $a_i$ is at $v_c$ and checks for the \texttt{visited} type message, this implies $T_{time}=n$. This means at $(t-n-1)$-th, round $a_i$ changed its state to \textbf{Forward} from \textbf{Initial}. This also implies $a_j$ was also in state \textbf{Initial} at $(t-n-1)-$th round at $v_c$. Hence, at this round, $a_j$ must have cleared any whiteboard data at $v_c$ (as $v_c$ is the $home$ of $a_j$) and also has moved clockwise after changing the state to \textbf{Forward}. Observe that, the node $v_c$ can only be explored by $a_i$ and $a_j$. So, these two agents have the possibility to write the \texttt{visited} type message at $v_c$ after round $t-n-1$ and before round $t$. Further, note that in the next $n+1$ rounds after $(t-n-1)-$th round (i.e., until round $t$), $a_j$ cannot return back to $v_c$. So, $a_j$ can not write anything on $v_c$ after round $t-n-1$ and before round $t$. Also, $a_i$ can only write the \texttt{visited} type message at $v_c$ at the $t-$th round (i.e., when $T_{time}=n$). But, in this case, $a_i$ at $t-$th round before writing a message, finds that there already exists a \texttt{visited} type message, and this a contradiction, to the fact that $a_j$ has not been destroyed by the Byzantine black hole at round $t'$, for all $t'<t$. \\
    \noindent \textit{\underline{Case II:}} Let $a_j$ has been destroyed by the Byzantine black hole at some round $t'<t$, while in state \textbf{Forward}. 
    This implies there exists a round $0<t'' <t'$ in which $a_j$ was in state \textbf{Initial} 
    (this is the last time $a_j$ was in state \textbf{Initial} before it gets destroyed by the Byzantine black hole). Thus this means, that $t' \le t''+n-2$, as a segment can be of length at most $n-2$. Then at round $t''$, $a_i$ was also in state \textbf{Initial} at its own $home$. Now, note that after $t''$, $a_i$  can visit $v_c$ within round $t''+1$ and $t''+n+1$  while in state \textbf{Forward}. Note that from round $t''+1$ and before round $t''+n+1$ no agent can write a \texttt{visited} type message at $v_c$ (by a similar argument as in Case I). So during round $t''+n+1$, $a_i$ can not see any \texttt{visited} type message at $v_c$. So, $t > t''+n+1$. Also note that the next time after round $t''+n+1$, $a_i$ visits $v_c$ in state \textbf{Forward}, is earliest at round $t''+4n+1$. This implies $t \ge t''+4n+1$. Now, let us consider the agent $a_k$ where $Seg(a_k) \cap Seg(a_j) = home$ of $a_k$ and $Seg(a_j)\cap Seg(a_i)=home$ of $a_i$. Note that in round $t''+n+1$ both $a_k$ and $a_i$ move into state \textbf{Back-Wait} (as none of them sees any \texttt{visited}  type message). Next after waiting there up to round $t''+2n$, both of them change their state to \textbf{Backtrack} at round $t''+2n+1$. Next, they reach their corresponding $home$ and check for a \texttt{visited} type message at round $t''+3n+1$. Note that $a_i$ finds the \texttt{visited} type message left by $a_k$ and changes to state \textbf{Initial-Wait} whereas, $a_k$ does not find any \texttt{visited} type message left by $a_j$ (as $a_j$ is destroyed at the Byzantine black hole at state \textbf{Forward} at round $t' \le t''+n-2 < t''+3n+1$). So, $a_k$ changes to state \textbf{Gather1} at round $t''+3n+1$ and moves clockwise with the message \texttt{dir}$= $ \texttt{(Clockwise, NULL)} until it finds $a_i$ at it's $home$ in state \textbf{Initial-Wait} before round $t''+4n$ (as length of a segment can be of length at most $n-2$). Thus before $t''+4n+1$-th round $a_i$ changes its state to \textbf{Gather1}. Since no agent moves to state \textbf{Forward} from state \textbf{Gather1}, $a_i$ can not be at $v_c$ in state \textbf{Forward} during round $t$. So we again arrive at a contradiction. This implies $a_j$ can not be in state \textbf{Forward} while it is destroyed at the Byzantine black hole at round $t'<t$. \end{proof}

\begin{lemma}
\label{lemma: sees no visited at bactrack}
    Let $a_i$ and $a_j$ be two agents such that $v_c= Seg(a_i)\cap Seg(a_j)= home$ of $a_i$. If there exists a round $t >0$ such that $a_i$ checks and finds no \texttt{visited} type message on $v_c$ while in state \textbf{Backtrack}, then there must be a round $t'<t$ in which $a_j$ was destroyed by the Byzantine black hole while in state \textbf{Forward}.
\end{lemma}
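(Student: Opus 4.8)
The plan is to argue by contradiction, mirroring the structure of Lemma~\ref{lemma: agent sees visited type message at other agents home}. Suppose there is a round $t>0$ at which $a_i$, while in state \textbf{Backtrack}, reaches $v_c$ (its own $home$, since $v_c = Seg(a_i)\cap Seg(a_j) = home$ of $a_i$) and finds no \texttt{visited} type message, yet there is no round $t'<t$ at which $a_j$ was consumed in state \textbf{Forward}. The two ways this negation can hold are: (Case~I) $a_j$ has not been consumed by the black hole at any round before $t$, or (Case~II) $a_j$ has been consumed before $t$ but in state \textbf{Backtrack} rather than \textbf{Forward}. I would derive a contradiction in each case by tracking who is entitled to write a \texttt{visited} message at $v_c$ and when.

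First I would pin down the timeline. Since $a_i$ is checking for a \texttt{visited} message while in \textbf{Backtrack} at $v_c=home$ of $a_i$, this happens precisely when $T_{time}=3n$; hence $a_i$ entered state \textbf{Initial} at round $t-3n$, moved \textbf{Forward} at $t-3n$, and (absent anomalies) would have reached its clockwise-nearest $home$, waited in \textbf{Back-Wait}, and backtracked to $v_c$ by round $t$. Crucially, $v_c$ can only ever be written by $a_i$ and $a_j$, since $v_c \in Seg(a_i)\cap Seg(a_j)$ and no other agent's segment contains it. For Case~I, the key observation is that the \texttt{visited} message at $v_c$ that $a_i$ expects to find is the one that $a_j$ itself should have written upon reaching $v_c$ in its own \textbf{Forward} state (recall $v_c$ is $a_j$'s clockwise-nearest $home$, or equivalently $a_i$'s $home$ is the last node of $Seg(a_j)$). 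Since $a_j$ is assumed alive throughout $[0,t)$, it completes its \textbf{Forward} traversal, arrives at $v_c$, detects no anomaly there, and writes \texttt{(Visited, ID($a_j$))}; this write survives (the adversary can erase information only by activating the black hole, which is not at $v_c$ since $v_c$ is safe), so $a_i$ must see it at round $t$ — contradicting the hypothesis. One must also check that $a_i$ did not itself clear $v_c$ in the interim; but $a_i$ only clears its $home$ in state \textbf{Initial}, and between entering \textbf{Forward} at $t-3n$ and returning in \textbf{Backtrack} at $t$ it never re-enters \textbf{Initial}, so $a_i$'s own activity does not destroy $a_j$'s mark.

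For Case~II, suppose $a_j$ is consumed while in \textbf{Backtrack} at some $t'<t$. Then there is a round $t''<t'$ at which $a_j$ last entered \textbf{Initial}; by synchrony all surviving agents are phase-aligned, so $a_i$ also entered \textbf{Initial} at $t''$. I would then argue that $a_j$ in its \textbf{Forward} leg (during $[t'',t''+n]$) reaches $v_c$ and writes a \texttt{visited} message there before being consumed later in \textbf{Backtrack}; this mark would then be present at $v_c$, so for $a_i$ to find \emph{no} \texttt{visited} message at round $t$, round $t$ must fall in a later cycle — forcing a recomputation of the relative timing. The main obstacle, and the step needing the most care, is precisely this bookkeeping: showing that the only round at which $a_i$ can legitimately be in \textbf{Backtrack} at $v_c$ finding nothing is the cycle immediately following $a_j$'s \emph{\textbf{Forward}}-phase consumption, and ruling out the \textbf{Backtrack}-phase consumption by observing that a \textbf{Backtrack}-consumed $a_j$ would already have deposited its \texttt{visited} mark at $v_c$ during its preceding \textbf{Forward} phase (and that mark persists, since $v_c$ is safe). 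Aligning the $T_{time}$ counters of $a_i$ and $a_j$ across iterations, and confirming that no intervening \textbf{Initial} reset by $a_i$ wipes the relevant mark, is where the argument's delicacy lies; the rest follows the template of Lemma~\ref{lemma: agent sees visited type message at other agents home}.
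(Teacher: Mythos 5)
Your overall strategy is the paper's: argue by contradiction, split into the two cases ($a_j$ never consumed before $t$, or consumed in state \textbf{Backtrack}), and track who can write or erase a \texttt{visited} message at $v_c$. Your Case~I is essentially the paper's argument and is sound, including the useful observation that $a_i$ does not re-enter \textbf{Initial} (and hence cannot clear $v_c$) between leaving in \textbf{Forward} and returning in \textbf{Backtrack} within the same cycle.

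The gap is in Case~II, and it is exactly the step you flag as ``bookkeeping'' and leave open. Your fallback claim --- that the \texttt{visited} mark $a_j$ deposits at $v_c$ during its \textbf{Forward} leg ``persists, since $v_c$ is safe'' --- is false across cycle boundaries: $v_c$ is $a_i$'s own $home$, and $a_i$ erases the whiteboard there every time it re-enters state \textbf{Initial} (round $t''+4n+2$ in your notation). Since $a_j$ is dead, no fresh mark is ever written, so at the \emph{next} \textbf{Backtrack} check (around $t''+7n+2$) $a_i$ really would find nothing --- which is precisely the situation the lemma forbids. What closes this is not persistence of the mark but the third agent: if $a_j$ dies in \textbf{Backtrack}, then $a_k$ (whose clockwise-nearest $home$ is $a_j$'s $home$) finds its own stale \texttt{visited} message at $home$ of $a_j$ in its next \textbf{Forward} phase (round $t''+5n+2$), switches to \textbf{Gather}, and intercepts $a_i$ while $a_i$ is still in \textbf{Back-Wait}, forcing it into \textbf{Gather2}. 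Hence $a_i$ never again reaches state \textbf{Backtrack} at $v_c$, and the only admissible value of $t$ is $t''+3n+1$, at which the mark is still present. Your proposal never invokes $a_k$, so Case~II is not closed as written; you need either this interception argument or an explicit proof that $t$ must equal $t''+3n+1$. (The paper's own wording here is also terse --- it asserts non-erasure up to round $t-1$ --- but the surrounding lemmas supply the $a_k$ interception; your plan would need to import that reasoning explicitly.)
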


\begin{proof}
    Let us consider that there does not exist any round $t'<t$ where $a_j$ was destroyed by the Byzantine black hole in the state \textbf{Forward}. This implies that either $a_j$ is not destroyed by the Byzantine black hole for all rounds  $t'<t$ or it is destroyed by the Byzantine black hole at round $t'<t$ while it was in state \textbf{Backtrack}.
    
    \noindent\textit{\underline{Case-I:}} Let us consider that $a_j$ is yet to be destroyed by the Byzantine black hole at round $t',~\forall t>t'>0$. Note that, at round $t$, $a_i$ checks for a \texttt{visited} type message while in state \textbf{Backtrack}, and that means the current $T_{time}$ for any alive agent must be equal $3n$ (at $T_{time}=n$ every alive agent ends its state \textbf{Forward}, then until $T_{time}=2n$ every such agent is in state \textbf{Back-Wait}, and at $T_{time}=3n$ every alive agent checks for the message in state \textbf{Backtrack}). Note that, this means at time $t-3n-1$, both $a_i$ and $a_j$ were in state \textbf{Initial} at their respective $home$. Then at round $t-2n-1$, both of them must be in state \textbf{Forward} with $T_{time}=2n$, while $a_j$ is at $v_c$ and $a_i$ is at $home$ of $a_k$ ( $a_k$ is the third agent exploring $Seg(a_k)$, such that $Seg(a_i)\cap Seg(a_k)=home$ of $a_k$ and $Seg(a_k)\cap Seg(a_j)=home$ of $a_j$). So, at this point, both $a_i$ and $a_j$ must have written a \texttt{visited} type message at their respective current nodes (i.e., $v_c$ and $home$ of $a_k$). Further observe that, within $t-2n$ round to $t-n-1$ round $a_j$ remains at $v_c$ while in state \textbf{Back-Wait}, and according to our algorithm no agent in state \textbf{Back-Wait} can alter the already stored information at their current node, i.e., within these rounds, $a_j$ cannot erase the \texttt{visited} type message at $v_c$. Next, within $t-n$ to $t-1$ round, the only agent that can visit the node $v_c$ is $a_i$, while it is in state \textbf{Backtrack}. Note that, $a_i$ cannot alter any \texttt{visited} type message during state \textbf{Backtrack}. Thus, $a_i$ at round $t$ checks and finds a \texttt{visited} type message at $v_c$. This is a contradiction to the assumption that $a_i$ does not find any \texttt{visited} type message at round $t$. Thus, $a_j$ must have been destroyed by the Byzantine black hole at time $t'$ for some $t'<t$.
\vspace{0.4cm}

    \noindent\textit{\underline{Case-II:}} Suppose $a_j$ has been destroyed by the Byzantine black hole at some round $t'<t$, while in state \textbf{Backtrack}. Let $t''$ be the round when $a_j$ was in state \textbf{Initial} for the last time where $0<t''<t'$. Note that $t'> t''+2n+1$ (because an agent changes its state to \textbf{Backtrack} at round $t''+2n+1$). So at the round, $t''+n+1$ it was alive and at $v_c$ in state \textbf{Forward}. During this round, it also writes a \texttt{visited} type message at $v_c$. Now as we discussed earlier in Case-I, from round $t''+n+1$ till round $t-1$ no agent can erase or alter the \texttt{visited} type message at $v_c$. Thus at round $t$, $a_i$ must find the \texttt{visited} type message at $v_c$ upon checking while in state \textbf{Backtrack}. This is a contradiction to the fact that at round $t$, $a_i$ checks and finds no \texttt{visited} type message at $v_c$ while it is in state \textbf{Backtrack}. Thus if $a_j$ is destroyed by the Byzantine black hole at round $t'$, then it must be in state \textbf{Forward}.
\end{proof}

\begin{lemma}
\label{lemma: gather gets actual direction}
    The agent which changes its state to \textbf{Gather}, correctly identifies the state in which an agent was in, just before it gets destroyed by the Byzantine black hole. 
\end{lemma}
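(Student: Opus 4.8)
The plan is to reduce this statement to the two anomaly-characterization results just established (Lemma~\ref{lemma: agent sees visited type message at other agents home} and Lemma~\ref{lemma: sees no visited at bactrack}), since those already pin down the exact state of the consumed agent from the anomaly observed. First I would note, by inspection of Algorithm~\ref{algorithm: PerpExplore-Scat-WhitBrd}, that an agent enters state \textbf{Gather} through exactly one of two transitions: either from \textbf{Forward}, when at $T_{time}=n$ it reaches its clockwise-nearest home $v_c$ and finds a \texttt{visited} type message there (storing \texttt{dir}$=$(\texttt{Counter-clockwise},\texttt{NULL})), or from \textbf{Backtrack}, when at $T_{time}=3n$ it returns to its own home and finds no \texttt{visited} type message (storing \texttt{dir}$=$(\texttt{Clockwise},\texttt{NULL})). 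The ``identification'' of the consumed agent's state is precisely the direction component recorded in \texttt{dir}, read under the convention that a \textbf{Forward} agent moves clockwise while a \textbf{Backtrack} agent moves counter-clockwise; hence it suffices to check that the recorded direction matches the movement direction the consumed agent actually had when it entered the black hole.

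For the first transition, the triggering agent $a_i$ sees a \texttt{visited} message in state \textbf{Forward} at $v_c$, where $v_c$ is the home of the agent $a_j$ occupying the shared boundary node of the two segments, i.e.\ $Seg(a_i)\cap Seg(a_j)=v_c$. By Lemma~\ref{lemma: agent sees visited type message at other agents home}, $a_j$ must have been consumed while in state \textbf{Backtrack}; since agents move counter-clockwise in \textbf{Backtrack}, the direction recorded by $a_i$, namely \texttt{Counter-clockwise}, is correct. For the second transition, $a_i$ is in state \textbf{Backtrack} and finds no \texttt{visited} message at its own home $v_c=Seg(a_i)\cap Seg(a_j)$; by Lemma~\ref{lemma: sees no visited at bactrack}, the consumed agent $a_j$ must have been in state \textbf{Forward}, where agents move clockwise, so the recorded direction \texttt{Clockwise} is again correct. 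As these are the only two ways to enter \textbf{Gather}, the claim follows in both cases.

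I expect the two invoked lemmas to carry all of the real difficulty, since they rule out the ``wrong'' consumption states by timing and whiteboard-persistence arguments (no agent can overwrite or erase the relevant \texttt{visited} marking within the critical window). Given those, the remaining obstacle is purely bookkeeping: verifying the direction/state dictionary (clockwise $\leftrightarrow$ \textbf{Forward}, counter-clockwise $\leftrightarrow$ \textbf{Backtrack}) is applied to the correct agent and the correct boundary node, and confirming from the pseudocode that no third transition into \textbf{Gather} exists. I would close by remarking that this consistency is exactly what the subsequent \textbf{Gather1}/\textbf{Gather2} gathering and the \textbf{Cautious-Leader}/\textbf{Cautious-Follower} phases depend on, as they use the direction component of \texttt{dir} to decide which way to walk in order to locate the black hole.
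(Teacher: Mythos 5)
Your proof is correct, and it takes a genuinely different and more economical route than the paper's. You treat the lemma as a corollary of the two anomaly-characterization results: you enumerate the only two entry points into \textbf{Gather} (from \textbf{Forward} upon seeing a \texttt{visited} message at the clockwise-nearest home, and from \textbf{Backtrack} upon finding no \texttt{visited} message at the agent's own home), and then invoke Lemma~\ref{lemma: agent sees visited type message at other agents home} and Lemma~\ref{lemma: sees no visited at bactrack} respectively to pin down the consumed agent's state, checking that the stored \texttt{dir} component matches the movement convention (clockwise $\leftrightarrow$ \textbf{Forward}, counter-clockwise $\leftrightarrow$ \textbf{Backtrack}). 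The paper instead argues in the opposite direction: it cases on the state in which the consumed agent $a_j$ dies, identifies \emph{which} of the surviving agents enters \textbf{Gather} and rules out the wrong transition (e.g., showing $a_i$ cannot enter \textbf{Gather} from \textbf{Forward} when $a_j$ died in \textbf{Forward}), re-deriving the whiteboard-persistence and timing arguments rather than citing the two earlier lemmas. Your reduction is logically sufficient for the statement as written and avoids duplicating those arguments; the paper's longer version additionally establishes, as a by-product, exactly which agent enters \textbf{Gather} and at which round, information it reuses in the subsequent gathering lemmas (and which is also asserted in the observation preceding Lemma~\ref{lemma: gather gets actual direction}). Your closing remark that the \textbf{Gather1}/\textbf{Gather2} and cautious-walk phases rely on the correctness of the \texttt{dir} component is accurate and matches the paper's intent.
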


\begin{proof}
    Let $a_i,a_j$, and $a_k$ be three agents initially at three different nodes (i.e., three different $home$ for three agents) of $R$. Let $Seg(a_i)\cap Seg(a_j)=home$ of $a_i$, $Seg(a_i)\cap Seg(a_k)=home$ of $a_k$ and $Seg(a_k)\cap Seg(a_j)=home$ of $a_j$. Without loss of generality let $a_j$ be the agent that is destroyed by the Byzantine black hole at a round $t>0$. Now we have two cases according to the state of $a_j$ at the time of getting destroyed.
    
    \noindent\textit{\underline{Case-I:}} $a_j$ gets destroyed by the Byzantine black hole while in state \textbf{Forward} at round $t$. Let $t'<t$ be the round when $a_j$ was in state \textbf{Initial} for the last time. Note that at round $t'$, all agents are in state \textbf{Initial}. We claim that $a_i$ is the agent to change its state to \textbf{Gather}. If possible let $a_k$ change its state to \texttt{Gather}. Then first we show that it must change its state to \textbf{Gather} before round $t'+3n+1$. Otherwise, since $a_j$, fails to reach $home$ of $a_i$ at round $t'+n+1$, it does not write any \texttt{visited} type message there but $a_i$ reaches $home$ of $a_k$ at round $t'+n+1$ and writes a \texttt{visited} type message at $home$ of $a_k$. Also, since, $a_i$ and $a_j$ are the only two agents which visit $home$ of $a_i$., this implies, no message can be written at $home$ of $a_i$ on and between rounds $t'$ and $t'+3n$. So, when $a_i$ returns $home$ and finds no \texttt{visited} type message round $t'+3n+1$ while in state \textbf{Backtrack}, it changes its state to \textbf{Gather} but $a_k$ doesn't do so as it sees the \texttt{visited} type message at its $home$, left by $a_i$. So, if $a_k$ is the agent to change state to \textbf{Gather} it must be before round $t'+3n+1$. Thus $a_k$ must move into state \textbf{Gather} from state \textbf{Forward} at round $t'+n+1$. This can only occur if $a_k$ finds a \texttt{visited} type message at $home$ of $a_j$ at round $t'+n+1$. But this is not possible as $a_j$ have already erased all previous data on its $home$ at round $t'$ while in state \textbf{Initial}, and there is no other agent which can visit $home$ of $a_j$ and alter the data between $t'$ and $t'+n+1$ rounds. Hence, at round $t'+n+1$ when $a_k$ visits the $home$ of $a_j$ it does not find any \texttt{visited} type message. Thus $a_i$ must be the agent to change its state to \textbf{Gather}. Now to prove that $a_i$ stores the \textit{dir} type message with direction component clockwise  (which means $a_j$ was in state \textbf{Forward} while it was destroyed), we have to show that $a_i$ changes its state to \textbf{Gather} from \textbf{Backtrack}. If possible let $a_i$ change its state to \textbf{Gather} from state \textbf{Forward}. Then it must be at round $t'+n+1$ when $a_i$ checks and finds a visited type message at the $home$ of $a_k$. Now at round $t'$, $a_k$ which was in state \textbf{Initial}, cleared all previous data on its $home$. This implies the \texttt{visited} type message that $a_i$ finds at round $t'+n+1$ must be written there, after round $t'$. But as only $a_i$ can be there after $t'$ and before $t'+n+1$, and since it can not alter any data on $home$ of $a_k$ before round $t'+n+1$, it finds no \texttt{visited} type message at $home $ of $a_k$ at round $t'+n+1$. Thus $a_i$ can not change its state to \textbf{Gather} from state \textbf{Forward}. Hence, it must change its state to \textbf{Gather} from state \textbf{Backtrack} at round $t'+3n+1$ and according to the algorithm \textsc{PerpExplore-Scat-WhitBrd}, the \texttt{dir} type message that $a_i$ stores must have direction component \texttt{clockwise}. 
    

    \noindent \textit{\underline{Case-II:}} Let $a_j$ gets destroyed by the Byzantine black hole at round $t$ while in state \textbf{Backtrack}. We have to show that the agent that changes the state to \textbf{Gather} must store the \texttt{dir} type message with direction component \texttt{counter-clockwise} (as any agent can only move in the counter-clockwise direction, in state \textbf{Backtrack}). It will be enough to show that the agent that changes its state to \textbf{Gather} must have changed it from state \textbf{Forward} (as only in this case the state changing agent stores the \texttt{dir} type message having a counter-clockwise direction component in its local memory). Let $t'$ be the round when $a_j$ was in state \textbf{Initial} for the last time. We claim that $a_i$ and $a_k$ can not change to state \textbf{Gather} before round $t'+3n+1$. Note that at round $t'$ both $a_i$ and $a_k$ are at their corresponding $home$ in the state \textbf{Initial}. Now if any one of them changes to state \textbf{Gather}, the earliest it can happen is at round $t'+n+1$ when both of them are in state \textbf{Forward}. In this case, the agent that changes to state \textbf{Gather} must have found a \texttt{visited} type message at the current node (i.e., for $a_i$ it is $home$ of $a_k$ and for $a_k$ it is $home$ of $a_j$) at round $t'+n+1$. Note that during round $t'$ any previous messages are erased from both $home$ of $a_k$ and $home$ of $a_j$ by $a_k$ and $a_j$, respectively, and no other agent can visit and alter data at these nodes before $t'+n+1$ round. Hence none of $a_i$ and $a_k$ finds any \texttt{visited} type message at their current nodes at round $t'+n+1$. Hence, none of these agents changes their state to \textbf{Gather} at round $t'+n+1$. Next, they can only change their state to \textbf{Gather} at the round $t'+3n+1$ when both of them (i.e., $a_i$ and $a_k$) are at their respective $home$. An agent among them changes its state to \textbf{Gather} at round $t'+3n+1$ if it finds no \texttt{visited} type message at their current node (i.e., corresponding $home$). Note that $t'+3n+1>t>t'+n+1$ (as $a_j$ was in state \textbf{Backtrack} at round $t$). So, at the round $t'+n+1$ all of $a_i,a_j$ and $a_k$ are at nodes $home$ of $a_k, home$ of $a_i$ and $home$ of $a_j$ in state \textbf{Forward} and writes a \texttt{visited} type message in the nodes, respectively. These messages can not be altered by any agent until round $t'+3n+1$ (by a similar argument as in Case-I). Note that at round $t'+3n+1$ both $a_i$ and $a_k$ are at their corresponding $home$ and both of them find a \texttt{visited} type message at these nodes left by $a_j$ and $a_i$ respectively (at round $t'+n+1$). So, none of them changes to state \textbf{Gather} even at round $t'+3n+1$. Next, they move into state \textbf{Initial-Wait} and wait at their $home$ until $t'+4n$ round. Now, since no agent in state \textbf{Initial-Wait} can change its state to \textbf{Gather2} until it meets with an agent in state \textbf{Gather} and all alive agents at round $t'+3n+2$ are at state \textbf{Initial-Wait} all of them (i.e., $a_i$ and $a_k$) waits and changes state to \textbf{Initial} again at round $t'+4n+1$. Next at round $t'+4n+2$ both $a_i$ and $a_k$ erase all previous data at their corresponding $home$. and changes status to \textbf{Forward} again. Note that the next time $a_i$ and $a_k$ can change the state to gather must be at the round $t'+5n+2$ when both of them are in state \textbf{Forward}, and are currently at the $home$ of $a_k$ and $home$ of $a_j$, respectively. Further, note that since $home$ of $a_k$ does not have any \texttt{visited} type message at round $t'+5n+2$ (as $a_k$ erased any data at round $t'+4n+2$ and no other agent can alter data there, after $t'+4n+2$ and before $t'+5n+2$), so $a_i$ can not change to state \textbf{Gather} at round $t'+5n+2$. Also observe that the \texttt{visited} type message at the $home$ of $a_j$ written by $a_k$ during round $t'+3n+1$ is still there at round $t'+5n+2$ (after $t'+3n+1$ before $t'+4n+2$ since $a_j$ is already destroyed before $t'+3n+1$, no agent is on $home$ of $a_j$ to erase the data, also after $t'+4n+1$ till $t'+5n+1$ only $a_k$ can visit $home$ of $a_j$ in state \textbf{Forward} and \textbf{Back-Wait}, but in these states it does not alter any data at the $home$ of $a_j$) $a_k$ finds it during the round $t'+5n+2$ in state \textbf{Forward} and changes its state to \textbf{Gather}.
\end{proof}

    \begin{lemma}
        \label{corollary: dir type counter-clockwise by back-wait}
If an agent finds a \texttt{dir} type message while it is in state \textbf{Back-Wait}, the direction component of this message must be counter-clockwise.
    \end{lemma}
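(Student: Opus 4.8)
The plan is to exploit the fact that a \texttt{dir} type message is ever \emph{written} onto a whiteboard only by an agent executing the \textbf{Gather} state; in every other state an agent merely reads a \texttt{dir} message or stores it in its local memory. Hence, for an agent in \textbf{Back-Wait} to find a \texttt{dir} message at its current node (its clockwise-nearest $home$), that message must have been deposited there by some agent $a$ that reached the node in state \textbf{Gather} and found another agent co-located with it, since \textbf{Gather} writes only upon meeting another agent. I would first recall, from Lemma~\ref{lemma: gather gets actual direction}, the dichotomy governing the direction component carried by $a$: it is \texttt{clockwise} exactly when the consumed agent was destroyed in \textbf{Forward} (so $a$ entered \textbf{Gather} from \textbf{Backtrack}), and \texttt{counter-clockwise} exactly when the consumed agent was destroyed in \textbf{Backtrack} (so $a$ entered \textbf{Gather} from \textbf{Forward}). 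Thus it suffices to rule out that an agent in \textbf{Back-Wait} is ever co-located with a \texttt{clockwise} \texttt{dir} message.

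For the \texttt{clockwise} case I would trace where such a message is born. By Lemma~\ref{lemma: gather gets actual direction} (Case~I), when the consumed agent $a_j$ is destroyed in \textbf{Forward}, the detecting agent $a_i$ (whose $home$ is the clockwise-nearest $home$ of $a_j$) enters \textbf{Gather} from \textbf{Backtrack} at its own $home$ and then walks clockwise carrying a \texttt{dir} message with direction component \texttt{clockwise}. At the same round the only other alive agent $a_k$ has found a \texttt{visited} message at its own $home$ and therefore sits in state \textbf{Initial-Wait} at $home$ of $a_k$, which is precisely the clockwise-nearest $home$ of $a_i$. Since the interior of $Seg(a_i)$ is free of agents and a segment has length at most $n-2$, the waiting window of \textbf{Initial-Wait} is long enough for $a_i$ to meet $a_k$ exactly at $home$ of $a_k$ while $a_k$ is still in \textbf{Initial-Wait}; it is there, and only there, that the \texttt{clockwise} \texttt{dir} message is written. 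This matches Observation~\ref{obs:gather1 moves together}: the agent that reads a freshly written \texttt{clockwise} \texttt{dir} message (and transitions to \textbf{Gather1}) comes from \textbf{Initial-Wait}, never from \textbf{Back-Wait}.

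It then remains to argue that no \textbf{Back-Wait} agent can stumble on a \emph{stale} \texttt{clockwise} \texttt{dir} message in some later round. Here I would invoke the global structure of the execution: no \texttt{dir} message exists at all before the first consumption (these messages are produced only inside \textbf{Gather}), and once the first consumption is detected the surviving agents permanently leave the ordinary \textbf{Initial}$\to$\textbf{Forward}$\to$\textbf{Back-Wait}$\to$\textbf{Backtrack}$\to$\textbf{Initial-Wait} cycle, passing through \textbf{Gather1}/\textbf{Gather2} into the \textbf{Cautious-Leader}/\textbf{Cautious-Follower} phase that terminates in black-hole detection. Consequently an agent can meet a \texttt{dir} message while in \textbf{Back-Wait} only during the detection phase itself, at the round the writing \textbf{Gather} agent joins it; by the previous paragraph such a writer moving toward a \textbf{Back-Wait} agent must carry direction \texttt{counter-clockwise} (the symmetric situation described by Observation~\ref{obs: gather2 together}). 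The direction component found by a \textbf{Back-Wait} agent is therefore necessarily \texttt{counter-clockwise}. The main obstacle is the synchronization claim in the second paragraph, namely that the lone companion of the \textbf{Gather} agent is aligned into \textbf{Initial-Wait} rather than \textbf{Back-Wait} at the instant of meeting; this is exactly the round-by-round bookkeeping already established in Lemma~\ref{lemma: gather gets actual direction}, so I would lean on it rather than repeat it, leaving only the routine inspection of which state writes a \texttt{dir} message and the remark that the ordinary cycle is never re-entered after detection begins.
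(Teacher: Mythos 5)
Your proposal is correct and rests on the same key fact as the paper's own proof: a \texttt{clockwise} \texttt{dir} message is created only by an agent entering \textbf{Gather} from \textbf{Backtrack}, and the round synchronization of the phases forces the sole remaining companion to be in \textbf{Initial-Wait} (not \textbf{Back-Wait}) at the moment that message is written and read. The paper phrases this as a short contradiction while you argue it directly and add a remark about stale messages, but the argument is essentially identical.
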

    \begin{proof}
        Suppose an agent $a_i$ in state \textbf{Back-Wait} gets a \texttt{dir} type message with clockwise
 direction component at some round $t$. This implies there exists another agent $a_k$ that has changed its state to \textbf{Gather} after storing a \texttt{dir} type message having a direction component clockwise at some round $t'<t$. This can only happen if $a_k$ was in state \textbf{Backtrack} at the beginning of round $t'$. So at the beginning of round $t'$, $a_i$ was also in state \textbf{Backtrack}, and thus at round $t'$, $a_i$ changes its state to \textbf{Initial-Wait}. Note that $a_k$ meets and shares \texttt{dir} type message with $a_i$ while $a_i$ is still at state \textbf{Initial-Wait}. This contradicts our assumption that $a_i$ gets \texttt{dir} message at state \textbf{Back-Wait}. Thus, if $a_i$ finds a \texttt{dir} type message in the state \textbf{Back-Wait} then it must have the direction component counter-clockwise. \end{proof}
With a similar argument, we can also prove the following lemma.
  \begin{lemma}\label{corollary: dir type clockwise by initial-wait}
      If an agent finds a \texttt{dir} type message while it is in state \textbf{Initial-Wait}, the direction component of this message must be clockwise.
  \end{lemma}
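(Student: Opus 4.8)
The plan is to mirror, by symmetry, the argument used for Lemma~\ref{corollary: dir type counter-clockwise by back-wait}, arguing by contradiction. Suppose some agent $a_i$ while in state \textbf{Initial-Wait} reads a \texttt{dir} type message whose direction component is \texttt{counter-clockwise} at some round $t$. By the Observation that exactly one agent ever enters state \textbf{Gather}, this message was deposited by the unique agent $a_k$ that changed to state \textbf{Gather} at some earlier round $t' < t$, after storing a \texttt{dir} message with a \texttt{counter-clockwise} direction component.

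The crucial structural fact I would invoke is that the direction component recorded upon entering \textbf{Gather} is uniquely tied to the state the agent left: inspecting Algorithm~\ref{algorithm: PerpExplore-Scat-WhitBrd}, a \texttt{counter-clockwise} component is written only in the transition from \textbf{Forward} to \textbf{Gather} (upon encountering a \texttt{visited} message when $T_{time}=n$), whereas a \texttt{clockwise} component is written only in the transition from \textbf{Backtrack} to \textbf{Gather} (when $T_{time}=3n$). Hence $a_k$ must have been in state \textbf{Forward} with $T_{time}=n$ at the start of round $t'$.

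The next step uses the synchrony of the $T_{time}$ counters: every alive agent that has not yet branched off the normal cycle shares the same $T_{time}$ value, so at the start of round $t'$ the agent $a_i$ was also in state \textbf{Forward} with $T_{time}=n$, and therefore at round $t'$ it transitioned to \textbf{Back-Wait}. Since $a_k$ in state \textbf{Gather} with a \texttt{counter-clockwise} component walks counter-clockwise and, by the same timing analysis as in the preceding lemma, meets the remaining alive agent while that agent is still within its \textbf{Back-Wait} window, it forces $a_i$ to move to \textbf{Gather2} while $a_i$ is still in \textbf{Back-Wait}. This contradicts the assumption that $a_i$ read the \texttt{dir} message in state \textbf{Initial-Wait}, so the direction component read in \textbf{Initial-Wait} must be \texttt{clockwise}.

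The main obstacle is not the logical skeleton (which is a faithful mirror of the preceding proof) but justifying cleanly that the $T_{time}$ counters of $a_i$ and $a_k$ are indeed aligned at round $t'$, i.e., that $a_i$ has not itself detected an anomaly and left the normal four-phase cycle before $t'$. This follows because only one agent enters \textbf{Gather} and the remaining non-consumed agents stay on the synchronized \textbf{Initial}$\to$\textbf{Forward}$\to$\textbf{Back-Wait}$\to$\textbf{Backtrack}$\to$\textbf{Initial-Wait} loop until they are met by the \textbf{Gather} agent; I would state this dependence explicitly before concluding.
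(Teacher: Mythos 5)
Your proposal is correct and is essentially the paper's intended argument: the paper proves this lemma only by remarking that it follows "with similar argument" from Lemma~\ref{corollary: dir type counter-clockwise by back-wait}, and your proof is exactly that mirrored contradiction (a \texttt{counter-clockwise} component forces the writer to have left \textbf{Forward} at $T_{time}=n$, so the synchronized reader must then be in \textbf{Back-Wait}, not \textbf{Initial-Wait}, when it is met). Your explicit note that the $T_{time}$ counters remain aligned because only one agent ever enters \textbf{Gather} is a reasonable filling-in of a step the paper leaves implicit.
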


\begin{definition}[Cautious start node]
    Let $a_i$, $a_j$, and $a_k$ be three agents executing the algorithm \textsc{PerpExplore-Scat-Whitbrd}, and suppose $a_j$ be the first agent to enter the Byzantine black hole, while exploring the segment $Seg(a_j)$. Let $v_1$ be the $home$ of $a_j$ and $v_2$ be the furthest node from $v_1$ along the clockwise direction which is inside $Seg(a_j)$. We define the Cautious start node to be $v_1$ if $a_j$ is destroyed by the Byzantine black hole during state \textbf{Forward}. Otherwise, if $a_j$ is destroyed by the Byzantine black hole in state \textbf{Backtrack} then, $v_2$ is defined to be the Cautious start node.
\end{definition}

\begin{lemma}\label{lemma: reachCautiousStartNode}
 Let the first agent be destroyed by the Byzantine black hole at some round $t>0$, then there exists a round $t'>t$, at which the remaining alive agents reach the cautious start node and change their state to \textbf{Cautious-Leader} and \textbf{Cautious-Follower}.
\end{lemma}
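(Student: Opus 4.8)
The plan is to prove the statement by a case analysis on the state in which the first consumed agent, say $a_j$, was at the instant it entered the black hole, adopting the segment labelling used in Lemma~\ref{lemma: gather gets actual direction} (so that going clockwise the homes occur in the order $a_j,a_i,a_k$, with $Seg(a_i)\cap Seg(a_j)=home$ of $a_i$, $Seg(a_i)\cap Seg(a_k)=home$ of $a_k$, and $Seg(a_k)\cap Seg(a_j)=home$ of $a_j$). By Lemma~\ref{lemma: sees no visited at bactrack}, Lemma~\ref{lemma: agent sees visited type message at other agents home}, and especially Lemma~\ref{lemma: gather gets actual direction}, exactly one alive agent eventually moves to state \textbf{Gather}, and it stores the correct \texttt{dir} message: direction component \texttt{clockwise} when $a_j$ was consumed in \textbf{Forward}, and \texttt{counter-clockwise} when $a_j$ was consumed in \textbf{Backtrack}. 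Since every event below happens strictly after the consumption round $t$, each round I produce is automatically some $t'>t$.

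In the first case ($a_j$ consumed in \textbf{Forward}) the \textbf{Gather} agent is $a_i$, the owner of the node $home$ of $a_i$ terminating $Seg(a_j)$, carrying \texttt{dir}$=(\texttt{clockwise},\texttt{NULL})$. I would first argue that while $a_i$ walks clockwise in state \textbf{Gather} it meets the other alive agent before that agent can detect any further anomaly; by Lemma~\ref{corollary: dir type clockwise by initial-wait} this agent is necessarily waiting in state \textbf{Initial-Wait}, so by Observation~\ref{obs:gather1 moves together} both agents pass to state \textbf{Gather1} at the same node in the same round and thereafter move clockwise in lock-step. Then I would show that the first node they encounter carrying a \texttt{home} message whose ID belongs to neither present agent is exactly $home$ of $a_j=v_1$, the cautious start node in this case: the \texttt{home} message written by $a_j$ survives because $a_j$ never returned to \textbf{Initial} to clear it, whereas every earlier \texttt{home} message they pass belongs to one of the two present agents. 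At $v_1$ the lowest-ID agent sets $Marking=right$ and becomes \textbf{Cautious-Leader}, and the other becomes \textbf{Cautious-Follower}, which is the required conclusion.

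The second case ($a_j$ consumed in \textbf{Backtrack}) is symmetric. Here the \textbf{Gather} agent is $a_k$, the owner of the segment terminating at $home$ of $a_j$, carrying \texttt{dir}$=(\texttt{counter-clockwise},\texttt{NULL})$. Walking counter-clockwise it first meets the remaining alive agent $a_i$ waiting in state \textbf{Back-Wait} (consistent with Lemma~\ref{corollary: dir type counter-clockwise by back-wait}); at that meeting $a_k$ appends the met agent's ID to the \texttt{dir} message, and by Observation~\ref{obs: gather2 together} both agents enter state \textbf{Gather2} together and move counter-clockwise in lock-step. I would then verify that the first node they reach whose \texttt{home} message matches the recorded ID component of \texttt{dir} is exactly $home$ of $a_i=v_2$, the cautious start node for the backtrack case; note that here the target is the home of a still-alive present agent, which is precisely why the algorithm keys \textbf{Gather2} on the stored ID rather than on an ``absent'' ID as in \textbf{Gather1}. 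At $v_2$ the lowest-ID agent becomes \textbf{Cautious-Leader} with $Marking=left$ and the other becomes \textbf{Cautious-Follower}.

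The main obstacle is the timing coordination in the meeting step of each case: I must guarantee that the unique \textbf{Gather} agent reaches the second alive agent while the latter is still parked in \textbf{Initial-Wait} (resp. \textbf{Back-Wait}) and has neither detected a new anomaly nor drifted back into \textbf{Forward}. This is exactly what the round counts established in the proof of Lemma~\ref{lemma: gather gets actual direction}, together with the $4n$-round iteration structure of the algorithm, are meant to supply, while the two direction lemmas (Lemma~\ref{corollary: dir type clockwise by initial-wait} and Lemma~\ref{corollary: dir type counter-clockwise by back-wait}) guarantee that the meeting always occurs in the correct waiting state, so that the joint transition to \textbf{Gather1}/\textbf{Gather2} is well defined. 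Once the meeting and the lock-step movement are secured, identifying the cautious start node reduces to the bookkeeping of which \texttt{home} messages remain present and which IDs are still alive, as sketched above.
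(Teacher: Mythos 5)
Your proposal is correct and follows essentially the same route as the paper's own proof: a case split on whether the consumed agent was in \textbf{Forward} or \textbf{Backtrack}, using Lemma~\ref{lemma: gather gets actual direction} to identify the unique \textbf{Gather} agent and its direction, the two direction lemmas plus Observations~\ref{obs:gather1 moves together} and~\ref{obs: gather2 together} to secure the meeting in the correct waiting state and the joint transition to \textbf{Gather1}/\textbf{Gather2}, and the persistence (resp.\ ID-matching) of the \texttt{home} message to identify the cautious start node. The timing bookkeeping you defer is exactly the explicit round-count chain ($t_0+3n+1$, $t_0+4n$, $t_1<t_0+6n$, etc.) that the paper spells out, so nothing essential is missing.
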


\begin{proof}
    Let $a_i$, $a_j$ and $a_k$ be three agents, exploring $Seg(a_i)$, $Seg(a_j)$ and $Seg(a_k)$, respectively, where $Seg(a_i)\cap Seg(a_j)=home$ of $a_i$, $Seg(a_i)\cap Seg(a_k)=home$ of $a_k$ and $Seg(a_k)\cap Seg(a_j)=home$ of $a_j$. Let $a_j$ be the agent that gets destroyed by the Byzantine black hole while exploring $Seg(a_j)$, and then we have the following cases:

    \noindent\textit{\underline{Case-I:}} $a_j$ is destroyed by the Byzantine black hole while it is moving in a clockwise direction, i.e., in state \textbf{Forward} at round $t>0$. Note that in this case, the cautious start node is the $home$ of $a_j$. This implies there exists a round $0<t_0<t<t_0+n$, where $a_j$ was in state \textbf{Initial}. Note that, $a_j$ can not write any \texttt{visited} type message at $home$ of $a_i$, as it gets destroyed before reaching that node. So, at round, $t_0+3n+1$, when $a_i$ is at its own $home$ in state \textbf{Backtrack} and checks any \texttt{visited} type message, it finds none exists. This triggers $a_i$ to change its state to \textbf{Gather} from \textbf{Backtrack} with the corresponding \texttt{dir} = (\texttt{clockwise, NULL}) (which is indeed the correct direction, refer Lemma \ref{lemma: gather gets actual direction}). Further, the agent finds $a_k$, which is currently waiting at its own $home$ in state \textbf{Initial-Wait} (as it does not find any anomaly, so from \textbf{Backtrack} it changed to state \textbf{Initial-Wait} at round $t_0+3n+1$). So, the moment $a_i$ reaches the $home$ of $a_k$, it takes one additional round to store the \texttt{dir} type message at the current node and then changes its state to \textbf{Gather1}. On the other hand, whenever $a_k$ finds a \texttt{dir} type message is written at its current node, it also changes its state to \textbf{Gather1}, i.e., at round $t_0+3n+1<t''<t_0+4n$ both agents change their state to \textbf{Gather1}. After which, they together start moving in a clockwise direction (\ref{obs:gather1 moves together}), until they reach a node which has a \texttt{home} type message with the ID component, different from the IDs of $a_i$ and $a_k$. Note that this must be the $home$ of $a_j$, as the \texttt{home} type message $a_j$ has written at round $t_0$, cannot be erased by any other agent except $a_j$. Also $a_j$ can only erase this in state \textbf{Initial} at round $t_0+4n+2$ if it would have returned back, but since it is already destroyed between round $t_0+1$ and $t_0+n-1$, hence this possibility never arises, so the \texttt{home} type message remains, when $a_i$ and $a_k$ together reaches this node. After which they change their state to \textbf{Cautious-Leader} and \textbf{Cautious-Follower} depending on their IDs.

    \noindent\textit{\underline{Case-II:}} $a_j$ is destroyed by the Byzantine black hole while it is moving in a counter-clockwise direction, i.e., in state \textbf{Backtrack} at round $t>0$. Note that in this case, the cautious start node is the $home$ of $a_i$. Let $t_0$ be the round when $a_j$ was in state \textbf{Initial} the last time. This implies $t_0+2n<t<t_0+3n$, now by similar argument as explained in Case-II of Lemma \ref{lemma: gather gets actual direction}, $a_k$ changes its state to \textbf{Gather} while storing the \texttt{dir} = (\texttt{counter-clockwise, NULL}) message, at the $home$ of $a_j$ from state \textbf{Forward}, and starts moving in a counter-clockwise direction. Note that at round $t_0+5n+2$ as $a_i$ did not find any anomaly, so it changes its state to \textbf{Back-Wait} from \textbf{Forward}. Hence,  at round $t_1$, where  $t_1<t_0+6n$, $a_k$ finds $a_i$, while $a_i$ is still in state \textbf{Back-Wait}. This triggers $a_k$ to change its state to \textbf{Gather2} at round $t_1$ while updating the \texttt{dir} type message to (\texttt{Counter-clockwise, ID'}), where \texttt{ID'} is the ID of $a_i$. Then at the same round, $a_k$ writes the updated message at the current node (i.e., $home$ of $a_k$). Whenever $a_i$ sees this \texttt{dir} type message (at round $t_1+1$) it also changes its state to \textbf{Gather2}. Next in state \textbf{Gather2}, both start to move counter-clockwise (from round $t_1+2$) and continue to move until they find a \texttt{home} type message with ID matching the ID of the \texttt{dir} type message. Note that, this node is nothing but the $home$ of $a_i$ (as the ID component of \texttt{dir} type message stores the ID of $a_i$). Note that at round $t_0+4n+2$, $a_i$ written a \texttt{home} type message at its own $home$. This message can be erased only again at round $t_0+8n+2$ (i.e., when $a_i$ reaches its $home$ again in state \texttt{Initial}). But since $a_i$ changed its state to \textbf{Gather2} before $t_0+6n+1$ it cannot move back to state \textbf{Initial} again. So,when $a_k$ and $a_i$ reaches $home$ of $a_i$, they finds the \texttt{home} type message. Hence, both $a_k$ and $a_i$ reach the cautious start node within $t_0+7n$ and further change their states to \textbf{Cautious-Leader} and \textbf{Cautious-Follower}, depending on their IDs. \end{proof}

\begin{lemma}\label{lemma: CautiousLeaderCautiousFollower}
    Let $a_i$ and $a_k$ be the two agents that start the states \textbf{Cautious-Leader} and \textbf{Cautious-Follower} from the cautious start node, then within finite rounds of executing algorithm \textsc{PerpExplore-Scat-WhitBrd}, at least one agent detects the location of the Byzantine black hole.
\end{lemma}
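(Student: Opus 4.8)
The plan is to show that the two surviving agents jointly perform a \emph{cautious walk} along the direction stored in \texttt{dir}, using the direction markings left by the consumed agent as safety certificates, and that the first node failing to carry the expected marking is exactly the black hole. By Lemma~\ref{lemma: reachCautiousStartNode} the two alive agents (call them the leader $a_\ell$ and the follower $a_f$) begin at the cautious start node, which is a safe $home$, with $Move=0$ and with $Marking$ equal to \texttt{right} when the direction component of \texttt{dir} is \texttt{clockwise} and \texttt{left} otherwise; by Lemma~\ref{lemma: gather gets actual direction} this direction coincides with the direction the consumed agent $a_j$ was travelling when it entered the black hole.

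First I would establish the central \emph{marking invariant}: at the moment the walk begins, every node strictly between the cautious start node and the black hole node $b$, taken along the direction component of \texttt{dir}, carries $Marking$, while $b$ itself does not carry $Marking$. For the first part, during its last traversal $a_j$ wrote $Marking$ at each interior node of $Seg(a_j)$ it successfully crossed before reaching $b$ (it writes \texttt{right} in \textbf{Forward} and \texttt{left} in \textbf{Backtrack}); these are ordinary interior nodes, not $home$ nodes, so the writes indeed occur. For the second part, $a_j$ is destroyed upon arriving at $b$, so the pending write of $Marking$ at $b$ never completes, and the last committed write at $b$ is therefore the opposite marking (from the complementary half of the current or previous iteration) or nothing at all. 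Crucially, the only node whose contents the byzantine black hole may alter is $b$ itself, and it may only \emph{erase} information; hence it can never create the marking $Marking$ at $b$. Finally, the interior of $Seg(a_j)$ is visited by no agent other than $a_j$, and neither the gathering phase (states \textbf{Gather}, \textbf{Gather1}, \textbf{Gather2}) nor the cautious walk rewrites direction markings, so these markings persist unchanged until the walk reaches them.

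Next I would verify the mechanics of the walk through the $Move$/$Wait$ bookkeeping. Starting together on a certified-safe node, $a_\ell$ sets $Move=1$ and advances one node while $a_f$ arms its one-round timer; if the scouted node carries $Marking$ and holds a single agent, then $a_\ell$ sets $Move=0$ and steps back, so that on the next round $a_f$ observes $\#Agent>1$ and the pair advances one node in lockstep. Thus every two rounds the certified region grows by one node, and since $b$ lies at finite distance the walk reaches $b$ in finitely many rounds. At $b$ the byzantine dichotomy resolves the lemma: if the black hole is inactive in the round $a_\ell$ steps onto $b$, then $a_\ell$ survives, observes that $b$ lacks $Marking$, declares $b$, and leaves it within the same round, so it stays alive knowing the black hole; if instead the black hole is active, $a_\ell$ is consumed and fails to return, whence $a_f$ — still sitting on the safe node preceding $b$ — finds $\#Agent=1$ when its timer expires and declares the next node along \texttt{dir}, which is precisely $b$. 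In either branch at least one agent detects $b$, and since $a_f$ advances only after $a_\ell$ has returned, $a_f$ never enters $b$ and is never at risk.

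The main obstacle is the marking invariant at $b$, namely ruling out that $b$ accidentally carries $Marking$: this is exactly where the adversary's freedom to erase arbitrary data at $b$ meets $a_j$'s alternating \texttt{right}/\texttt{left} marking, and the argument hinges on the facts that the marking $a_j$ would have written at $b$ is never committed and that erasure can only remove, never fabricate, a marking. The lockstep synchronization of leader and follower is routine but must be checked carefully against the atomicity of moves, so that the two agents never desynchronize and $a_f$ is always exactly one certified node behind $a_\ell$.
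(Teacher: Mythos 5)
Your proposal is correct and follows essentially the same route as the paper: you establish that the consumed agent's last pass left the direction marking on every node of its segment up to but not including the black hole node (the previous opposite-direction pass having erased stale markings, and the fatal write never committing), and then resolve the cautious-walk dichotomy at that node — either the leader survives and sees the missing marking, or the follower's timeout identifies the next node. The only real difference is cosmetic: you phrase the two directional cases as a single marking invariant and explicitly note that the byzantine node's erasure power cannot fabricate a marking, a point the paper leaves implicit.
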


\begin{proof}
    Let $a_j$ be the agent that has been destroyed by the Byzantine black hole at round $t>0$. Now there are two cases based on the state of $a_j$ at round $t$.

    \noindent \textit{\underline{Case-I:}} $a_j$ was in state \textbf{Forward} at round $t$. In that case, the node $v_1= home$ of $a_j$ is the cautious start node. Let $v_2$ be the farthest node along the clockwise direction in $Seg(a_j)$ (i.e., the clockwise nearest $home$ of $a_j$). Now if there exists any round $t'<t$ when $a_j$ was in the state \textbf{Backtrack} then it must have started its state \textbf{Backtrack} from $v_2$ and moved counter-clockwise until $v_1$ while erasing all \texttt{right} markings from each node in $Seg(a_j)$. So when $a_j$ started the state \textbf{Forward} at its $home$ before being destroyed, all nodes in $Seg(a_j)$ are without any \texttt{right} marking. This case will happen also when there is no round $t'<t$ where $a_j$ was in state \textbf{Backtrack}, i.e., there does not exist any round before $t'$, when $a_j$ was in state \textbf{Forward}. So before $a_j$ is destroyed at the Byzantine black hole, say $v_b$, at round $t$, it has marked  \texttt{right} at all nodes starting from the next node of $v_1$  in the clockwise direction, up to the node just before $v_b$ ($v_b$ can not be marked as before marking it the agent $a_j$ is destroyed there). Let without loss of generality, $a_i$ is in state \textbf{Cautious-Leader} and $a_j$ is in state \textbf{Cautious-Follower} at $v_1$ at some round $t_0>t$. Then $a_i$ always moves ahead alone in the clockwise direction to a new node $v$.Then it moves back only if sees the \texttt{right} marking and brings $a_k$ to $v$ along with it. Note that $a_i$ always sees a \texttt{right} marking until $v_b$. Now when it moves to $v_b$, if it is not destroyed it must see no such \texttt{right} marking as $a_j$ failed to mark it at round $t$. In this case, $a_i$ leaves the node in the clockwise direction to a new node. So here $a_i$ is able to detect the Byzantine black hole. On the other hand, if $a_i$ is destroyed while it visits $v_b$, then it can not return back to $a_k$. When $a_k$ sees that $a_i$ has not returned it interprets that $a_i$ must have been destroyed at the Byzantine black hole which is the next node along the clockwise direction. Thus in this case also at least one agent can detect the Byzantine black hole.

    \noindent\textit{\underline{Case-II:}} $a_j$ was in state \textbf{Backtrack} at round $t$. In this situation, let without loss of generality, the node $v_2$ be the $home$ of $a_i$ and, it is also the \textit{cautious start node}. Let $v_1$ be the farthest node along counter-clockwise direction in $Seg(a_j)$ (i.e., the $home$ of $a_j$). Now, if there exists any round $t'<t$ when $a_j$ was in state \textbf{Forward}, then it must have started this state from $v_1$ and moved in a clockwise direction until it reaches the node $v_2$ while erasing all the \texttt{left} markings from each node it traverses in $Seg(a_j)$. This means that, when $a_j$ started the state \textbf{Backtrack} from $v_2$, there does not exist any node in $Seg(a_j)$ with \texttt{left} markings. So before $a_j$ is destroyed by the Byzantine black hole node $v_b$ (say) at round $t$, it must have marked all the nodes with \texttt{left}, starting from the next counter-clockwise node of $v_2$ to the adjacent clockwise neighbor of $v_b$ (as before writing this message at $v_b$, the agent gets destroyed by the Byzantine black hole). Let without loss of generality, $a_i$ be the lowest ID agent among $a_i$ and $a_k$, hence it starts in state \textbf{Cautious-Leader}, whereas $a_k$ starts in state \textbf{Cautious-Follower}, at some round $t_0>t$. This means $a_i$ is the first agent to move alone in the next counter-clockwise neighbor say, $v$. After which, only if it sees a \texttt{left} message then only it moves back in the clockwise direction at the node of $a_k$, and in the next round both these agents reach the node $v$. Observe that, $a_i$ always finds a \texttt{left} message until the node $v_b$. Whenever it reaches $v_b$, either it gets destroyed by the Byzantine  black hole, or it finds that no \texttt{left} marking is present at the current node. This triggers $a_i$ to detect the current node to be the Byzantine black hole and moves in the counter-clockwise direction to a new node. Otherwise, if it also gets destroyed by the Byzantine black hole, then in the next round it is unable to return to $a_k$, which triggers $a_k$ to conclude that $a_i$ must have been destroyed by the Byzantine black hole as well, and it correctly detects the Byzantine black hole to be the next node in the counter-clockwise direction. Thus for each scenario, there exists at least one agent that is able to correctly detect the Byzantine black hole location.
\end{proof}

  Note that within at most $3n$ number of rounds after both alive agent starts cautious walk from the cautious start node, the Byzantine black hole will be detected by at least one agent. So from Lemma~\ref{lemma: reachCautiousStartNode} and Lemma~\ref{lemma: CautiousLeaderCautiousFollower} within at most $10n$ rounds after the first agent is destroyed by the Byzantine black hole there exists at least one agent that knows the exact location of the Byzantine black hole which can now explore the ring $R$ perpetually avoiding the Byzantine black hole. So if all three agents start from three different nodes the \textsc{PerpExploration-BBH} will be solved if each node has a whiteboard of memory $O(\log n)$. This proves the Theorem~\ref{thm: correct whitbrd scat}. We state the theorem again here for convenience.\\
  \textbf{Statement of theorem~\ref{thm: correct whitbrd scat}:} \textit{  Algorithm \textsc{PerpExplore-Scat-Whitbrd} solves\\ \textsc{PerpExploration-BBH} problem of a ring $R$ with $n$ nodes and with 3 synchronous agents initially scattered under the whiteboard model of communication}\\
  \subsection{Modification to the algorithm to include the cases where the initial starting nodes can have multiplicity}
  \label{Appendix: multiplicity whitbrd case}
  \begin{remark}
      \label{remark: Multiplicity case}
      Let $a_i,a_j$, and $a_k$ be three agents that start from two initial nodes, say $home_1$ and $home_2$. By Pigeon hole principle, exactly one of $home_1$ and $home_2$ initially must have two agents. Without loss of generality let $home_1$ have two agents, say $a_i$ and $a_k$, initially. In this case, the agents having multiplicity greater than one at the current node do not move. On the other hand the singleton agent, i.e., $a_j$ starting from $home_2$ moves clockwise marking each node with message \texttt{right}. If $a_j$ reaches $home_1$ before being destroyed by the Byzantine black hole, $home_1$ now has three agents co-located. Thus from here the agents execute the whiteboard version of \textsc{PerpExplore-Coloc-Pbl} (refer Subsection~\ref{subsection: colocwhitbrd simul}). On the other hand, let us consider the case when $a_j$ gets destroyed before reaching $home_1$. Note that irrespective of the location of $home_2$ it takes at most $n-1$ rounds for $a_j$ to reach $home_1$ from the beginning. So $a_i$ and $a_j$ waits for $n$ rounds and finds that no one has arrived yet. In this case, both $a_i$ and $a_k$ move to $home_2$ along the clockwise direction and start to perform the cautious walk, where the lowest ID agent among $a_i$ and $a_k$ changes its state to \textbf{Cautious-Leader}, whereas the other agent changes its state to \textbf{Cautious-Follower}. Next, the agent executing \textbf{Cautious-Leader} searches for the marking \texttt{right}. As argued earlier, within at most $4n$ rounds after an agent is destroyed, at least one of the remaining alive agents, detect the Byzantine black hole and continue to explore the ring avoiding that node.
  \end{remark}
  
\end{document}